\documentclass[sn-mathphys,Numbered]{sn-jnl}% Math and Physical Sciences Reference Style
%%\documentclass[sn-aps]{sn-jnl}% American Physical Society (APS) Reference Style
%%\documentclass[sn-vancouver,Numbered]{sn-jnl}% Vancouver Reference Style
%%\documentclass[sn-apa]{sn-jnl}% APA Reference Style 
%%\documentclass[sn-chicago]{sn-jnl}% Chicago-based Humanities Reference Style
%%\documentclass[default]{sn-jnl}% Default
%%\documentclass[default,iicol]{sn-jnl}% Default with double column layout

%%%% Standard Packages
%%<additional latex packages if required can be included here>

\usepackage{graphicx}%
\usepackage{multirow}%
\usepackage{amsmath,amssymb,amsfonts}%
\usepackage{amsthm}%
\usepackage{mathrsfs}%
\usepackage[title]{appendix}%
\usepackage{xcolor}%
\usepackage{textcomp}%
\usepackage{manyfoot}%
\usepackage{booktabs}%
\usepackage{algorithm}%
\usepackage{algorithmicx}%
\usepackage{algpseudocode}%
\usepackage{listings}%
%%%%

%%%%%=============================================================================%%%%
%%%%  Remarks: This template is provided to aid authors with the preparation
%%%%  of original research articles intended for submission to journals published 
%%%%  by Springer Nature. The guidance has been prepared in partnership with 
%%%%  production teams to conform to Springer Nature technical requirements. 
%%%%  Editorial and presentation requirements differ among journal portfolios and 
%%%%  research disciplines. You may find sections in this template are irrelevant 
%%%%  to your work and are empowered to omit any such section if allowed by the 
%%%%  journal you intend to submit to. The submission guidelines and policies 
%%%%  of the journal take precedence. A detailed User Manual is available in the 
%%%%  template package for technical guidance.
%%%%%=============================================================================%%%%

%\jyear{2021}%

% % % % % % % % % % % % % % % % % % % % % % % % % % % % 
%
% Packages
%
% % % % % % % % % % % % % % % % % % % % % % % % % % % % 

\usepackage[utf8]{inputenc}
\usepackage{amssymb}
\usepackage{amsmath,bm}
\usepackage{mathrsfs}
\usepackage{xcolor}
\usepackage{scalerel}
\usepackage{url}
\usepackage{array,multirow}
\usepackage{pgfplots}
\usetikzlibrary{plotmarks, patterns, tikzmark, positioning}
\usetikzlibrary{matrix,decorations.pathreplacing,calc}
\usepackage{hf-tikz}
\usepackage[nolist]{acronym}

\usepackage{mathtools}

\mathtoolsset{showonlyrefs}

\definecolor{darkgreen}{rgb}{0,0.5,0}

\def\ve#1{{\mathchoice{\mbox{\boldmath$\displaystyle #1$}}%
		{\mbox{\boldmath$\textstyle #1$}}%
		{\mbox{\boldmath$\scriptstyle #1$}}%
		{\mbox{\boldmath$\scriptscriptstyle #1$}}}}

% % % % % % % % % % % % % % % % % % % % % % % % % % % % 
%
% TIKZ styles
%
% % % % % % % % % % % % % % % % % % % % % % % % % % % % 

\pgfplotsset{
compat=1.17,
mystyle/.style={
    scale only axis,
    width=0.7\columnwidth,
    height=0.5\columnwidth,
    label style={inner sep=0, font=\normalsize}, 
    tick label style={font=\scriptsize},
    legend style={font=\scriptsize},
    mark size=3,
    major grid style={dashed},
    line width=0.8pt,
    axis line style = thin}
}

% Define new column type
\newcolumntype{M}[1]{>{\centering\arraybackslash}m{#1}}

% % % % % % % % % % % % % % % % % % % % % % % % % % % % 
%
% Environments -- Theorems etc...
%
% % % % % % % % % % % % % % % % % % % % % % % % % % % % 

%% as per the requirement new theorem styles can be included as shown below
\theoremstyle{thmstyleone}%
\newtheorem{theorem}{Theorem}%  meant for continuous numbers
\newtheorem{lemma}{Lemma}%
%% optional argument [theorem] produces theorem numbering sequence instead of independent numbers for Proposition
\newtheorem{proposition}[theorem]{Proposition}% 

\theoremstyle{thmstyletwo}%
\newtheorem{remark}{Remark}%
\newtheorem{problem}{Problem}%

\theoremstyle{thmstylethree}%
\newtheorem{definition}{Definition}%

\newenvironment{proofsketch}{\noindent\emph{Sketch of the Proof.}}{\hfill$\qed$}

\raggedbottom
%%\unnumbered% uncomment this for unnumbered level heads

% % % % % % % % % % % % % % % % % % % % % % % % % % % % 
%
% For Finite Fields...
%
% % % % % % % % % % % % % % % % % % % % % % % % % % % % 
%\newcommand{\F}[1]{\ensuremath{\mathbb{F}_{#1}}}
%\newcommand{\Fq}{\F{q}}
%\newcommand{\Fqn}[2]{\ensuremath{\mathbb{F}_{#1}^{#2}}}

\newcommand{\Z}[1]{\ensuremath{\mathbb{Z}_{#1}}} % ring
 % ring
 % ring
\newcommand{\Fqm}{\ensuremath{\mathbb F_{q^m}}}

\newcommand{\Fqs}{\ensuremath{\mathbb F_{q^s}}}
\newcommand{\Fq}{\ensuremath{\mathbb F_{q}}}

\newcommand{\F}{\ensuremath{\mathbb F}}
\newcommand{\ZZ}{\ensuremath{\mathbb{Z}}}
% Sets
\newcommand{\set}[1]{\ensuremath{\mathcal{#1}}}

\newcommand{\intervallincl}[2]{\ensuremath{[#1,#2]}}

% Bases

%  Polynomials

\newcommand{\aut}{\ensuremath{\sigma}}
\newcommand{\autMat}[3]{\ensuremath{\aut^{#3}({#1}_{#2})}}
\newcommand{\autVec}[3]{\ensuremath{\aut^{#3}({#1}_{#2})}}

\newcommand{\SkewPolyringZeroDer}{\ensuremath{\Fqm[x;\aut]}}

\newcommand{\MultSkewPolyringZeroDer}{\ensuremath{\Fqm[x,y_1,\dots,y_\intOrder;\aut]}}

\newcommand{\opev}[3]{\ensuremath{{#1}(#2)_{#3}}}

\newcommand{\op}[2]{\ensuremath{\mathcal{D}_{#1}\left(#2\right)}}
\newcommand{\opexp}[3]{\ensuremath{\mathcal{D}_{#1}^{#3}(#2)}}

\newcommand{\IPop}[1]{\mathcal{I}_{#1}^{\mathrm{op}}}

\newcommand{\evalMap}[2]{\mathscr{E}_{#1}^{(#2)}}

\newcommand{\algoname}[1]{{\normalfont\textsc{#1}}} % typesetting for algo names

% \SetKwComment{Comment}{}{}

% Maps

% % % % % % % % % % % % % % % % % % % % % % % % % % % % 
%
% For Algorithms
%
% % % % % % % % % % % % % % % % % % % % % % % % % % % % 

% reduction environment for long algos

%\newcommand{\printalgo}[1]
%{
%\begin{center}
%\begin{addmargin}[0.053\textwidth]{0.045\textwidth}
%\textcolor{mylinecolor}{\rule[0\baselineskip]{\linewidth}{\mylinewidth pt}}
%\end{addmargin}
%
%\scalebox{0.88}{
%\centering
%%\rule[0.4\baselineskip]{\linewidth}{0.4pt}\newline
%\begin{algorithm}[H]
% #1
%\end{algorithm}
%}
%\begin{addmargin}[0.053\textwidth]{0.045\textwidth}
%\textcolor{mylinecolor}{\rule[0\baselineskip]{\linewidth}{\mylinewidth pt}}
%\end{addmargin}
%\end{center}
%}

% % % % % % % % % % % % % % % % % % % % % % % % % % % % 
%
% For Mathematical Specials
%
% % % % % % % % % % % % % % % % % % % % % % % % % % % % 
% Define = with def above

%\newcommand{\defeq}{\overset{\defi}{=}}
\newcommand{\defeq}{:=}

\renewcommand{\bar}{\overline}

\DeclareMathOperator{\wt}{wt}
\DeclareMathOperator{\rk}{rk}

\DeclareMathOperator{\diag}{diag}

\newcommand{\rker}{\ensuremath{\ker_r}}
\newcommand{\lker}{\ensuremath{\ker_l}}

% % % % % % % % % % % % % % % % % % % % % % % % % % % % 
%
% Vectors, Matrices
%
% % % % % % % % % % % % % % % % % % % % % % % % % % % % 
\renewcommand{\vec}[1]{\ve{#1}} %\ensuremath{\bm{#1}}}
\newcommand{\mat}[1]{\ensuremath{\bm{#1}}}

\newcommand{\genMoore}{{\ensuremath{\aut}-Generalized Moore}}

\newcommand{\opVandermonde}[3]{\ensuremath{\mat{V}_{#1}(#2)_{#3}}}
\newcommand{\genNorm}[2]{\ensuremath{\mathcal{N}_{#1}(#2)}}

% % REDEFINE CEDILLE

% Vektoren
%TODO: Change all \mathbf ? into \ve{?}
\renewcommand{\a}{\vec{a}}
\renewcommand{\b}{\vec{b}}

\newcommand{\e}{\vec{e}}
\newcommand{\f}{\vec{f}}

\renewcommand{\l}{\vec{l}}

\renewcommand{\t}{\vec{t}}
\renewcommand{\u}{\vec{u}}
\renewcommand{\v}{\vec{v}}

\newcommand{\x}{\vec{x}}

\newcommand{\A}{\mat{A}}

\newcommand{\D}{\mat{D}}
\newcommand{\E}{\mat{E}}

\newcommand{\Q}{\mat{Q}}
\newcommand{\R}{\mat{R}}

\newcommand{\U}{\mat{U}}
\newcommand{\W}{\mat{W}}
\newcommand{\X}{\mat{X}}

\renewcommand{\Z}{\mat{Z}}
\newcommand{\0}{\ensuremath{\mathbf 0}}
\newcommand{\vecbeta}{\ensuremath{\boldsymbol{\beta}}}
\newcommand{\vecgamma}{\ensuremath{\boldsymbol{\gamma}}}
\newcommand{\veckappa}{\ensuremath{\boldsymbol{\varkappa}}}
\newcommand{\veczeta}{\ensuremath{\boldsymbol{\zeta}}}
\newcommand{\vecepsilon}{\ensuremath{\boldsymbol{\epsilon}}}
\newcommand{\vecxi}{\ensuremath{\boldsymbol{\xi}}}
\newcommand{\nVec}{\ensuremath{\vec{n}}}
\newcommand{\Nvec}{\ensuremath{\vec{N}}}

% % % % % % % % % % % % % % % % % % % % % % % % % % % % 
%
% Codes
%
% % % % % % % % % % % % % % % % % % % % % % % % % % % % 
\newcommand{\mycode}[1]{\ensuremath{\mathcal{#1}}}

%\ensuremath{\mycode{C}(#1)}}

\newcommand{\liftedIntLinRS}[1]{\ensuremath{\mathrm{LILRS}[#1]}}

% % Convolutional codes

% % % % % % % % % % % % % % % % % % % % % % % % % % % % 
%
% Decoding Radii
%
% % % % % % % % % % % % % % % % % % % % % % % % % % % % 

% Weight

\newcommand{\SumRankWeight}{\ensuremath{\wt_{\Sigma R}}}

\newcommand{\SubspaceDist}{d_S}
\newcommand{\SumSubspaceDist}{d_{\ensuremath{\Sigma}S}}

  % % Errors etc

% % % % % % % % % % % % % % % % % % % % % % % % % % % % 
%
% Geometrical stuff etc., Spaces etc
%
% % % % % % % % % % % % % % % % % % % % % % % % % % % %
\newcommand{\myspace}[1]{\mathcal{#1}}
\newcommand{\myspaceDual}[1]{\myspace{#1}^{\perp}}
\newcommand{\Rowspace}[1]{\ensuremath{{\left\langle #1 \right\rangle}_{q}}}

\newcommand{\RowspaceHuge}[1]{\ensuremath{\scaleleftright[3ex]{\Biggl\langle}{\!\!\!#1\!\!\!}{\Biggr\rangle}\!\!\!\!\raisebox{-21pt}{\scriptsize{q}}}}

\newcommand{\Grassm}[1]{\myspace{G}_q(#1)}

\newcommand{\ProjspaceAny}[1]{\myspace{P}_q(#1)}

\newcommand{\List}{\ensuremath{\mathcal{L}}}

% GAUSSIAN BINOMIAL
% \newcommand{\quadbinom}[2]{\ensuremath{
% %\begin{bmatrix}
% %\ #1 \phantom{i} \\
% %\ #2 \phantom{i} 
% %\end{bmatrix}
% {#1
% \brack
% #2}_q
% }}

\newcommand{\quadbinom}[2]{\left[\genfrac{}{}{0pt}{}{#1\vphantom{N_N}}{#2\vphantom{N}}\right]_{q}}

\newcommand{\gammaq}{\ensuremath{\kappa_q}}

%\newcommand{\quadbinomsmall}[2]{\ensuremath{
%\left[
%\begin{smallmatrix}
%\ #1 \ \\
%\ #2 \
%\end{smallmatrix}
%\right]
%}}

% % % % % % % % % % % % % % % % % % % % % % % % % % % % 
%
% Sonstiges, for comments etc.
%
% % % % % % % % % % % % % % % % % % % % % % % % % % % 
\newcommand{\oh}[1]{\bnd{O}{#1}}
\newcommand{\softoh}[1]{\bnd{\widetilde{O}}{#1}}
 % soft O for complexity
\newcommand{\bnd}[2]{\ensuremath{#1\mathopen{}\left(#2\right)\mathclose{}}}

\newcommand{\OMul}[1]{\mathcal{M}(#1)}

% % % % % % % % % % % % % % % % % % % % % % % % % % % % 
%
% Makros for subspace codes...
%
% % % % % % % % % % % % % % % % % % % % % % % % % % % 
\newcommand{\nTransmit}{\ensuremath{n_t}}
\newcommand{\nTransmitVec}{\ensuremath{\nVec_{t}}}
\newcommand{\nTransmitBarVec}{\ensuremath{\bar{\nVec}_{t}}}
\newcommand{\nReceive}{\ensuremath{n_r}}
\newcommand{\nReceiveVec}{\ensuremath{\nVec_r}}
\newcommand{\insertions}{\ensuremath{\gamma}}
\newcommand{\insertionsmax}{\ensuremath{\insertions}_\mathsf{max}}

\newcommand{\insertionsVec}{\ensuremath{\vecgamma}}
\newcommand{\deletions}{\ensuremath{\delta}}
\newcommand{\deletionsVec}{\vec{\delta}}
\newcommand{\deviations}{\ensuremath{\varkappa}}
\newcommand{\deviationsVec}{\ensuremath{\veckappa}}

\newcommand{\txSpace}{\ensuremath{\myspace{V}}}
\newcommand{\rxSpace}{\ensuremath{\myspace{U}}}
\newcommand{\errSpace}{\ensuremath{\myspace{E}}}
\newcommand{\txSpaceVec}{\ensuremath{\vec{\txSpace}}}
\newcommand{\rxSpaceVec}{\ensuremath{\vec{\rxSpace}}}
\newcommand{\errSpaceVec}{\ensuremath{\vec{\errSpace}}}

\newcommand{\shot}[2]{\ensuremath{{#1}^{(#2)}}}
\newcommand{\nTransmitShot}[1]{\ensuremath{n_t^{(#1)}}}
\newcommand{\nReceiveShot}[1]{\ensuremath{n_r^{(#1)}}}
\newcommand{\insertionsShot}[1]{\ensuremath{\shot{\insertions}{#1}}}
\newcommand{\deletionsShot}[1]{\ensuremath{\shot{\deletions}{#1}}}
\newcommand{\deviationsShot}[1]{\ensuremath{\shot{\deviations}{#1}}}

\newcommand{\txSpaceShot}[1]{\ensuremath{\shot{\txSpace}{#1}}}
\newcommand{\rxSpaceShot}[1]{\ensuremath{\shot{\rxSpace}{#1}}}
\newcommand{\errSpaceShot}[1]{\ensuremath{\shot{\errSpace}{#1}}}
\newcommand{\rankErrShot}[1]{\ensuremath{\shot{\rankErr}{#1}}}

\newcommand{\pe}{\ensuremath{\alpha}}

\newcommand{\degConstraint}{\ensuremath{D}}
\newcommand{\degConstraintUnique}{\ensuremath{D_u}}
\newcommand{\intOrder}{\ensuremath{s}}

\newcommand{\shots}{\ensuremath{\ell}}
\newcommand{\delOp}[1]{\ensuremath{\mathcal{H}_{#1}}}

\newcommand{\IntParam}{\intOrder'}

\newcommand{\rankErr}{\ensuremath{t}}

\newcommand{\liftedLinRS}[1]{\ensuremath{\mathrm{LLRS}[#1]}}

\DeclareMathOperator{\sumDim}{\ensuremath{\dim_{\Sigma}}}

\newcommand{\LODecMat}{{\vec L}}
\newcommand{\RFmat}{\ensuremath{\Q_R}}
\newcommand{\RFvec}{\ensuremath{\f_R}}
\newcommand{\intMat}{\ensuremath{\R_I}}
\newcommand{\h}{\vec{h}}
\newcommand{\T}{\vec{T}}

\newcommand{\Qspace}{\mathcal{Q}}

\setcounter{MaxMatrixCols}{20}

\newcommand{\myalpha}{\xi}

\newcommand{\Eset}{\mathcal{E}}

\newcommand{\q}{\vec{q}}
\newcommand{\Qbar}{\bar{\Q}}

\renewcommand{\b}{\vec{b}}

\newcommand{\n}{\vec{n}}

\newcommand{\normweight}{\ensuremath{\lambda}}
\newcommand{\avgnormweight}{\ensuremath{\bar{\normweight}}}
\newcommand{\normdist}{\ensuremath{\eta}}
\newcommand{\nTransmitBar}{\ensuremath{\bar{n}_t}}

\newcommand{\LOlike}{Loidreau--Overbeck-like}

\newcommand{\compSet}[1]{\ensuremath{\set{T}_{#1}}}
\newcommand{\numSubTuples}[1]{\ensuremath{\textsf{N}_{\Sigma S}(#1)}}
\newcommand{\assignDet}{\ensuremath{\leftarrow}}
\newcommand{\assignRand}{\ensuremath{\xleftarrow{\$}}}
\newcommand{\Nbar}{\ensuremath{\bar{N}}}
\newcommand{\NbarVec}{\ensuremath{\bar{\Nvec}}}

\begin{acronym}
 \acro{BCH}{Bose--Chaudhuri--Hocquenghem}
 \acro{BMD}{bounded minimum distance}
 \acro{SRS}{skew Reed--Solomon}
 \acro{ISRS}{interleaved skew Reed--Solomon}
 \acro{lclm}{least common left multiple}
 \acro{LRS}{linearized Reed--Solomon}
 \acro{LLRS}{lifted linearized Reed--Solomon}
 \acro{ILRS}{interleaved linearized Reed--Solomon}
 \acro{LILRS}{lifted interleaved linearized Reed--Solomon}
 \acro{MDS}{maximum distance separable}
 \acro{MRD}{maximum rank distance}
 \acro{MSRD}{maximum sum-rank distance}
 \acro{MSD}{maximum skew distance}
\acro{RS}{Reed--Solomon}
\end{acronym}

\begin{document}

\title[Fast Decoding of Lifted Interleaved Linearized Reed--Solomon Codes for Multishot Network Coding]{Fast Decoding of Lifted Interleaved Linearized Reed--Solomon Codes for Multishot Network Coding}

%%=============================================================%%
%% Prefix   -> \pfx{Dr}
%% GivenName    -> \fnm{Joergen W.}
%% Particle -> \spfx{van der} -> surname prefix
%% FamilyName   -> \sur{Ploeg}
%% Suffix   -> \sfx{IV}
%% NatureName   -> \tanm{Poet Laureate} -> Title after name
%% Degrees  -> \dgr{MSc, PhD}
%% \author*[1,2]{\pfx{Dr} \fnm{Joergen W.} \spfx{van der} \sur{Ploeg} \sfx{IV} \tanm{Poet Laureate} 
%%                 \dgr{MSc, PhD}}\email{iauthor@gmail.com}
%%=============================================================%%

\author*[1]{\fnm{Hannes} \sur{Bartz}}\email{hannes.bartz@dlr.de}

\author[2]{\fnm{Sven} \sur{Puchinger}}\email{sven.puchinger@tum.de}

\affil*[1]{
    \centering
    \orgdiv{Institute of Communications and Navigation},\\ 
    \orgname{German Aerospace Center (DLR)}, 
    \orgaddress{
        \postcode{D-82234},
        \city{Oberpfaffenhofen}, 
        \country{Germany}}}

\affil[2]{
    \centering
    % \orgdiv{Department}, 
    \orgname{Hensoldt Sensors GmbH}, \\
    \orgaddress{
        \postcode{D-89077},
        \city{Ulm},  
        \country{Germany}}}

%%==================================%%
%% Abstract %%
%%==================================%%

\abstract{
Mart{\'\i}nez-Pe{\~n}as and Kschischang (IEEE Trans.\ Inf.\ Theory, 2019) proposed lifted linearized Reed--Solomon codes as suitable codes for error control in multishot network coding.
We show how to construct and decode \ac{LILRS} codes.
Compared to the construction by Mart{\'\i}nez-Pe{\~n}as--Kschischang, interleaving allows to increase the decoding region significantly and decreases the overhead due to the lifting (i.e., increases the code rate), at the cost of an increased packet size.
We propose two decoding schemes for \ac{LILRS} that are both capable of correcting insertions and deletions beyond half the minimum distance of the code by either allowing a list or a small decoding failure probability. 
We propose a probabilistic unique {\LOlike} decoder for \ac{LILRS} codes and an efficient interpolation-based decoding scheme that can be either used as a list decoder (with exponential worst-case list size) or as a probabilistic unique decoder.
We derive upper bounds on the decoding failure probability of the probabilistic-unique decoders which show that the decoding failure probability is very small for most channel realizations up to the maximal decoding radius.
The tightness of the bounds is verified by Monte Carlo simulations.
}

\keywords{Multishot network coding, subspace codes, sum-subspace metric, multishot operator channel, interleaving}

%%\pacs[JEL Classification]{D8, H51}

%%\pacs[MSC Classification]{35A01, 65L10, 65L12, 65L20, 65L70}
\pacs[MSC Classification]{94B35, 94B05}

\acresetall

\maketitle

%----------------------------------------------------------------------------
% Introduction
%----------------------------------------------------------------------------
\section{Introduction}\label{sec:introduction}

Network coding~\cite{ahlswede2000network} is a powerful approach to achieve the capacity of multicast networks. 
Unlike the classical routing schemes, network coding allows to mix (e.g. linearly combine) incoming packets at intermediate nodes. 
Kötter and Kschischang proposed codes in the subpsace metric as a suitable tool for error correction in (random) linear network coding~\cite{koetter2008coding} and defined the corresponding operator channel model.
In an operator channel two type of errors, namely \emph{insertions} and \emph{deletions}, can occur.
Insertions correspond to additional dimensions that are not contained in the transmitted space whereas deletions correspond to dimensions that are removed from the transmitted space.

The single-shot scenario from~\cite{koetter2008coding} was extended to the multishot case, i.e. the transmission over several instances of the operator channel, in~\cite{nobrega2009multishot}.
It was shown in \cite{martinez2019reliable} that \ac{LLRS} codes provide reliable and secure coding schemes for noncoherent multishot network coding, a scenario where the in-network linear combinations are not known or used at the receiver, under an adversarial model in the sum-subspace metric.

An $\intOrder$-interleaved code is a direct sum of $\intOrder$ codes of the same length (called constituent codes).
This means that if the constituent codes are over $\Fq$, then the interleaved code can be viewed as a (not necessarily linear) code over $\Fqs$.
In the Hamming and rank metric, there are various decoders that can significantly increase the decoding radius of a constituent code by collaboratively decoding in an interleaved variant thereof. 
Such decoders are known in the Hamming metric for Reed--Solomon \cite{krachkovsky1997decoding,bleichenbacher2003decoding,coppersmith2003reconstructing,parvaresh2004multivariate,brown2004probabilistic,parvaresh2007algebraic,schmidt2007enhancing,schmidt2009collaborative,cohn2013approximate,nielsen2013generalised,wachterzeh2014decoding,puchinger2017irs,yu2018simultaneous} and in general algebraic geometry codes \cite{brown2005improved,kampf2014bounds,puchinger2019improved}, and in the rank metric for Gabidulin codes \cite{loidreau2006decoding,sidorenko2010decoding,sidorenko2011skew,sidorenko2014fast,wachter2014list,puchinger2017row,puchinger2017alekhnovich,bartz2020fast}.
All of these decoders have in common that they are either list decoders with exponential worst-case and small average-case list size, or probabilistic unique decoders that fail with a very small probability.

Interleaving was suggested in \cite{silva2008rank} as a method to decrease the overhead in lifted Gabidulin codes for error correction in noncoherent (single-shot) network coding, at the cost of a larger packet size while preserving a low decoding complexity. It was later shown \cite{sidorenko2011skew,wachter2014list,bartz2018efficient} that it can also increase the error-correction capability of the code using suitable decoders for interleaved Gabidulin codes.

\subsection{Our Contributions}

In this paper, we define and analyze \ac{LILRS} codes that are obtained by lifting \ac{ILRS} codes as considered in~\cite{bartz2022fast}.
We propose and analyze two decoding schemes that both allow for decoding insertions and deletions beyond the unique decoding region by allowing a (potentially exponential) list or a small decoding failure probability.

First, we propose a {\LOlike} probabilistic unique decoder and derive an upper bound on the decoding failure probability.
The upper bound shows the the decoder succeeds with an overwhelming probability close to one for random realizations of the multishot operator channel that stay within the decoding region.

The second decoding approach is a novel interpolation-based list decoder that is based on the list decoder by Wachter-Zeh and Zeh \cite{wachter2014list} for interleaved Gabidulin codes.
We derive a decoding region for the codes in the sum-subspace metric, analyze the complexity of the decoder and give an exponential upper bound on the list size.
We derive an upper bound on the decoding failure probability of the interpretation as a probabilistic-unique decoder by relating the success conditions to the {\LOlike} decoder. 

Compared to \cite{martinez2019reliable}, we decrease the relative overhead introduced by lifting (or equivalently, increase the rate for the same code length and block size) and at the same time extend the decoding region, especially for insertions, significantly.
These advantages come at the cost of a larger packet size of the packets within the network and
a supposedly small failure probability.
By considering the decoding problem in the complementary code, we show how the proposed \ac{LILRS} coding schemes can be used to improve the decoding region w.r.t. \emph{deletions} significantly.

Moreover, for the case $\intOrder=1$ (no interleaving), our algorithm does not require the assumption from~\cite[Sec.~V.H]{martinez2019reliable} that $\nReceive\leq\nTransmit$, where $\nTransmit$ and $\nReceive$ denotes the sum of the dimensions of the transmitted and received subspaces, respectively.
Hence, the proposed decoder works in cases in which \cite{martinez2019reliable} does not work.

Compared to the conference version~\cite{bartz2021decoding} this work contains several new results, such as e.g. the {\LOlike} decoder, the strict upper bound on the decoding failure probability as well as the improved deletion-correction capability from complementary codes.

The main results of this paper, in particular the improvements upon the existing noninterleaved variants, are illustrated in Table~\ref{tab:overview_decoders}.

\begin{table*}[ht!]
  \small
  \caption{%
    Overview of new decoding regions and computational complexities. Parameters: interleaving parameter $\intOrder$ (usually $\intOrder \ll \nTransmit$), $\nTransmit$ resp.~$\nReceive$ is the dimension of the transmitted resp.~received subspace, $\insertions$ and $\deletions$ the overall number of insertions resp.~deletions.
    $\OMul{n} \in \oh{n^{1.635}}$ is the cost (in operations in $\Fqm$) of multiplying two skew-polynomials of degree at most $n$ and $\omega < 2.373$ is the matrix multiplication exponent.
}\label{tab:overview_decoders}
\begin{minipage}{\textwidth}
\begin{center}
\newcommand{\specialcell}[2][c]{{\def\arraystretch{1.2}\begin{tabular}[#1]{@{}l@{}}#2\end{tabular}}}
\newcommand{\specialcellsmalldistleftalign}[2][c]{{\def\arraystretch{1}\begin{tabular}[#1]{@{}l@{}}#2\end{tabular}}}
\newcommand{\specialcellcenter}[2][c]{{\def\arraystretch{1}\begin{tabular}[#1]{@{}c@{}}#2\end{tabular}}}
\def\arraystretch{2.0}
\begin{tabular}{l|l|l|l|l}
\specialcellsmalldistleftalign{Code Class / \\ (Decoder)} & Decoding Region & \specialcellsmalldistleftalign{List Size $|\List|$ \\ Failure Prob. $P_f$} &\specialcellsmalldistleftalign{Complexity\\(over $\Fqm$)} & Reference(s) \\
\hline \hline
\specialcellsmalldistleftalign{LLRS Codes\footnote{The decoder from~\cite{martinez2019reliable} has the restriction that $\nTransmit=\nReceive$. Therefore, the proposed decoder for $\intOrder=1$ is a more general decoder for \ac{LLRS} codes compared to the decoder in~\cite{martinez2019reliable}.} \\ (unique)} & $\insertions\!+\!\deletions\!<\!\nTransmit\!-\!k\!+\!1$ & --- & $\oh{\nReceive^2}$ & \cite{martinez2019reliable}
\\ \hline
\specialcellsmalldistleftalign{LILRS Codes \\ (list)} & $\insertions\!+\!\intOrder\deletions\!<\!\intOrder(\nTransmit\!-\!k\!+\!1)$ & $|\List| \!\leq\! q^{m(k(\intOrder-1)){}}$ & $\softoh{\intOrder^\omega\!\OMul{\nReceive}}$ & \specialcellsmalldistleftalign{Thm.~\ref{thm:list_dec_LILRS} \\ Sec.~\ref{subsubsec:LILRS_list}}
\\ \hline
\specialcellsmalldistleftalign{LILRS Codes \\ (prob. unique)} & $\insertions\!+\!\intOrder\deletions\leq\intOrder(\nTransmit\!-\!k)$ & $P_f \!\leq\! 3.5^{\ell+1} q^{-m}$ & $\softoh{\intOrder^\omega\!\OMul{\nReceive}}$ & \specialcellsmalldistleftalign{Thms.~\ref{thm:LO_decoder_LILRS} \&~\ref{thm:unique_dec_LILRS} \\ Sec.~\ref{subsec:LO_dec_LILRS} /~\ref{subsec:decodingLILRS}}
\end{tabular}
\end{center}
\end{minipage}
\end{table*}

\subsection{Structure of the paper:}

 In Section~\ref{sec:preliminaries} the notation as well as basic definitions on vector spaces and skew polynomials are introduced.
 Section~\ref{sec:multishot_nwc} is dedicated to a brief introduction to multishot network coding.
 In Section~\ref{sec:LILRS} we consider decoding of \ac{LILRS} codes in the sum-subspace metric for error-control in noncoherent multishot network coding.
 In particular, we derive a {\LOlike} and an interpolation-based decoding scheme for \ac{LILRS} codes, that can correct errors beyond the unique decoding radius in the sum-subspace metric. 
 Section~\ref{sec:conclusion} concludes the paper.

%----------------------------------------------------------------------------
% Preliminaries
%----------------------------------------------------------------------------
\section{Preliminaries}\label{sec:preliminaries}

We now give some definitions and notation related to multishot network coding and \ac{LLRS} codes.
Since the notation for multishot network coding is quite involved, we tried to stick to common notation as much as possible such that readers familiar with the topic can skip parts of this section.

\subsection{Notation}
A \emph{set} is a collection of distinct elements and is denoted by $\set{S}=\{s_1,s_2,\dots,s_{r}\}$.
The cardinality of $\set{S}$, i.e. the number of elements in $\set{S}$, is denoted by $|\set{S}|$.
By $\intervallincl{i}{j}$ with $i<j$ we denote the set of integers $\{i,i+1,\dots,j\}$.
We denote the set of nonnegative integers by $\ZZ_{\geq0}=\{0,1,2,\dots\}$.

Let $\Fq$ be a finite field of order $q$ and denote by $\Fqm$ the extension field of $\Fq$ of degree $m$ with primitive element $\pe$.
The multiplicative group $\Fqm\setminus\{0\}$ of $\Fqm$ is denoted by $\Fqm^*$.
Matrices and vectors are denoted by bold uppercase and lowercase letters like $\mat{A}$ and $\vec{a}$, respectively, and indexed starting from one.
Under a fixed basis of $\Fqm$ over $\Fq$ any element $a\in\Fqm$ can be represented by a corresponding column vector $\vec{a}\in\Fq^{m\times 1}$. 
For a matrix $\mat{A}\in\Fqm^{M\times N}$ we denote by $\rk_q(\mat{A})$ the rank of the matrix $\mat{A}_q\in\Fq^{Mm\times N}$ obtained by column-wise expanding the elements in $\mat{A}$ over~$\Fq$.
Let $\aut:\Fqm\to\Fqm$ be a finite field automorphism given by $\aut(a)=a^{q^r}$ for all $a\in\Fqm$, where we assume that $1 \leq r \leq m$ and $\gcd(r,m)=1$.
For a matrix $\A$ and a vector $\a$ we use the notation $\aut(\A)$ and $\aut(\a)$ to denote the element-wise application of the automorphism $\aut$, respectively.
For $\mat{A}\in\Fqm^{M\times N}$ we denote by $\Rowspace{\mat{A}}$ the $\Fq$-linear rowspace of the matrix $\mat{A}_q\in\Fq^{M\times Nm}$ obtained by \emph{row-wise} expanding the elements in $\mat{A}$ over $\Fq$.
The \emph{left} and \emph{right} kernel of a matrix $\mat{A}\in\Fqm^{M\times N}$ is denoted by $\lker(\A)$ and $\rker(\A)$, respectively.

For a set $\set{I}\subset\ZZ_{>0}$ we denote by $[\A]_{\set{I}}$ (respectively $[\a]_{\set{I}}$) the matrix (vector) consisting of the columns (entries) of the matrix $\A$ (vector $\a$) indexed by $\set{I}$.

\subsection{Vector Spaces \& Subspace Metrics}

Vector spaces are denoted by calligraphic letters such as e.g. $\myspace{V}$.
By $\ProjspaceAny{N}$ we denote the set of all subspaces of $\Fq^{N}$. 
The \emph{Grassmannian} is the set of all $l$-dimensional subspaces in $\ProjspaceAny{N}$ and is denoted by $\Grassm{N,l}$.
The cardinality of the Grassmannian $|\Grassm{a,b}|$, i.e. the number of $l$-dimensional subspace of $\Fq^N$ is given by the Gaussian binomial $\quadbinom{N}{l}$ which is defined as
\begin{equation*}
\quadbinom{N}{l} = \prod_{i=1}^{l} \frac{q^{N-l+i}-1}{q^i-1}.
\end{equation*}
The Gaussian binomial satisfies (see e.g.~\cite{koetter2008coding})
\begin{equation}
q^{(N-l)l} \leq \textstyle\quadbinom{N}{l} \leq \gammaq q^{(N-l)l}, \label{eq:bounds_gaussian_binomial}
\end{equation}
where $\gammaq := \prod_{i=1}^{\infty} (1-q^{-i})^{-1}$.
Note that $\gammaq$ is monotonically decreasing in $q$ with a limit of $1$, and e.g.~$\kappa_2 \approx 3.463$, $\kappa_3 \approx 1.785$, and $\kappa_4 \approx 1.452$.

The \emph{subspace distance} between two subspaces $\myspace{U},\myspace{V}\in\ProjspaceAny{N_i}$ is defined as (see~\cite{koetter2008coding})
\begin{equation}\label{eq:def_subspace_dist}
  \SubspaceDist(\myspace{U},\myspace{V})\defeq\dim(\myspace{U}+\myspace{V})-\dim(\myspace{U}\cap\myspace{V}).
\end{equation}

For $\Nvec=(N_1,N_2,\dots,N_\shots)$ and $\l=(l_1, l_2, \dots, l_\shots)$ we define the $\shots$-fold Cartesian products
\begin{equation}
 \ProjspaceAny{\Nvec} \defeq \textstyle \prod_{i=1}^{\shots}\ProjspaceAny{N_i}=\ProjspaceAny{N_1}\times\dots\times\ProjspaceAny{N_\shots}
\end{equation}
and
\begin{equation}
 \Grassm{\Nvec, \l} \defeq \textstyle \prod_{i=1}^{\shots}\Grassm{N_i, l_i}=\Grassm{N_1,l_1}\times\dots\times\Grassm{N_\shots, l_\shots}.
\end{equation}
For any tuple of subspaces $\txSpaceVec\in\ProjspaceAny{\Nvec}$ we define its \emph{sum-dimension} as
\begin{equation}
  \sumDim(\txSpaceVec)\defeq\sum_{i=1}^{\shots}\dim(\txSpaceShot{i})
\end{equation}
and call the tuple
\begin{equation*}
    \left(\dim(\txSpaceShot{1}),\dim(\txSpaceShot{2}),\dots,\dim(\txSpaceShot{\shots})\right)
\end{equation*}
the \emph{sum-dimension partition} of $\txSpace$.

We extend fundamental operators on subspaces to tuples of subspaces by considering their application in an element-wise manner, i.e. for $\txSpaceVec,\rxSpaceVec \in \ProjspaceAny{\Nvec, \l}$ we define
\begin{align*}
    \txSpaceVec^\perp &\defeq \left({\txSpaceShot{1}}^\perp, {\txSpaceShot{2}}^\perp, \dots, {\txSpaceShot{\shots}}^\perp\right),
    \\
    \txSpaceVec \cap \rxSpaceVec &\defeq \left(\txSpaceShot{1} \cap \rxSpaceShot{1}, \txSpaceShot{2} \cap \rxSpaceShot{2}, \dots, \txSpaceShot{\shots} \cap \rxSpaceShot{\shots} \right),
    \\
    \txSpaceVec \oplus \rxSpaceVec &\defeq \left(\txSpaceShot{1} \oplus \rxSpaceShot{1}, \txSpaceShot{2} \oplus \rxSpaceShot{2}, \dots, \txSpaceShot{\shots} \oplus \rxSpaceShot{\shots} \right).
\end{align*}

We conclude this subsection considering the extension of the subspace distance to the multishot case~\cite{nobrega2009multishot}.

\begin{definition}[Sum-Subspace Distance~\cite{nobrega2009multishot}]
 Given $\rxSpaceVec=(\rxSpaceShot{1},\rxSpaceShot{2},\dots,\rxSpace^{(\shots)})\in\ProjspaceAny{\Nvec}$ and $\txSpaceVec=(\txSpaceShot{1},\txSpaceShot{2},\dots,\txSpaceShot{\shots})\in\ProjspaceAny{\Nvec}$ the sum-subspace distance between $\rxSpaceVec$ and $\txSpaceVec$ is defined as
 \begin{align}\label{eq:def_sum_subspace_dist}
  \SumSubspaceDist(\rxSpaceVec,\txSpaceVec)\defeq& \sum_{i=1}^{\shots}\SubspaceDist(\rxSpaceShot{i},\txSpaceShot{i}) %\notag 
  =\sum_{i=1}^{\shots}\left(\dim(\rxSpaceShot{i}+\txSpaceShot{i})-\dim(\rxSpaceShot{i}\cap\txSpaceShot{i})\right).
 \end{align}
\end{definition}
In~\cite{nobrega2009multishot,nobrega2010multishot} the sum-subspace distance is also called the \emph{extended} subspace distance.
Note, that there exists other metrics for multishot network coding like e.g. the sum-injection distance (see~\cite[Definition~7]{martinez2019reliable}) which are related to the sum-subspace distance and are not considered in this work.

%----------------------------------------------------------------------------
% Skew Polynomials
%----------------------------------------------------------------------------
\subsection{Skew Polynomials}\label{subsec:skew_polys}

\emph{Skew polynomials} are a special class of non-commutative polynomials that were introduced by Ore~\cite{ore1933theory}.
A \emph{skew polynomial} is a polynomial of the form 
\begin{equation}
 \textstyle f(x)=\sum_{i}f_i x^i
\end{equation}
with a finite number of coefficients $f_i\in\Fqm$ being nonzero.
The degree $\deg(f)$ of a skew polynomial $f$ is defined as $\max\{i:f_i\neq 0\}$ if $f\neq0$ and $-\infty$ otherwise.

The set of skew polynomials with coefficients in $\Fqm$ together with ordinary polynomial addition and the multiplication rule
\begin{equation}\label{eq:skew_poly_mult_rule}
  xa=\aut(a)x,\qquad a\in\Fqm
\end{equation}
forms a non-commutative ring denoted by $\SkewPolyringZeroDer$.

The set of skew polynomials in $\SkewPolyringZeroDer$ of degree less than $k$ is denoted by $\SkewPolyringZeroDer_{<k}$.
For any $a,b\in\Fqm$ we define the operator
\begin{equation}\label{eq:def_op}
  % \op{a}{b} := \aut(b)a+\der(b).
  \op{a}{b} \defeq \aut(b)a.
\end{equation}
For an integer $i\geq0$, we define (see~\cite[Proposition~32]{martinez2018skew})
\begin{equation}\label{eq:def_op_exp}
  \opexp{a}{b}{i+1}
  =\op{a}{\opexp{a}{b}{i}}
\end{equation}
where $\opexp{a}{b}{0}=b$ and $\genNorm{i}{a}=\aut^{i-1}(a)\aut^{i-2}(a)\dots\aut(a)a$ is the generalized power function (see~\cite{lam1988vandermonde}).
The \emph{generalized operator evaluation} of a skew polynomial $f\in\SkewPolyringZeroDer$ at an element $b$ w.r.t. $a$, where $a,b\in\Fqm$, is defined as (see~\cite{leroy1995pseudolinear,martinez2018skew})
\begin{equation}\label{eq:def_gen_op_eval}
  \opev{f}{b}{a}=\sum_{i}f_i\opexp{a}{b}{i}.
\end{equation}
For any fixed evaluation parameter $a \in \Fqm$ the generalized operator evaluation forms an $\Fq$-linear map, i.e. for any $f\in\SkewPolyringZeroDer$, $\beta,\gamma\in\Fq$ and $b,c\in\Fqm$ we have that
\begin{equation}
  \opev{f}{\beta b+\gamma c}{a}=\beta\opev{f}{b}{a}+\gamma\opev{f}{c}{a}.
\end{equation}

For two skew polynomials $f,g\in\SkewPolyringZeroDer$ and elements $a,b\in\Fqm$ the generalized operator evaluation of the product $f\cdot g$ at $b$ w.r.t $a$ is given by (see~\cite{martinez2019private})
\begin{equation}
  \opev{(f\cdot g)}{b}{a} = \opev{f}{\opev{g}{b}{a}}{a}.
\end{equation}

An important notion for the generalized operator evaluation is the concept of conjugacy, which is defined as follows.

\begin{definition}[Conjugacy~\cite{lam1985general}]\label{def:conjugates}
 For any two elements $a\in\Fqm$ and $c\in\Fqm^*$ define
 \begin{equation}
   a^c\defeq \aut(c)ac^{-1}.
 \end{equation}
\begin{itemize}
 \item Two elements $a,b\in\Fqm$ are called $\aut$-conjugates, if there exists an element $c\in\Fqm^*$ such that $b=a^c$.
 \item Two elements that are not $\aut$-conjugates are called $\aut$-distinct.
\end{itemize}
\end{definition}
The notion of $\aut$-conjugacy defines an equivalence relation on $\Fqm$ and thus a partition of $\Fqm$ into conjugacy classes~\cite{lam1988vandermonde}.
The set 
\begin{equation}
  \set{C}(a)\defeq\left\{a^c:c\in\Fqm^*\right\}
\end{equation}
is called \emph{conjugacy class} of $a$.
A finite field $\Fqm$ has at most $\shots\leq q-1$ distinct conjugacy classes.
For $\shots\leq q-1$ the elements $1,\pe,\pe^2,\dots,\pe^{\shots-2}$ are representatives of all (nontrivial) disjoint conjugacy classes of $\Fqm$.

Let $a_1,\dots,a_\shots$ be representatives be from conjugacy classes of $\Fqm$ and let $b_1^{(i)},\dots,b_{n_i}^{(i)}$ be elements from $\Fqm$ for all $i=1,\dots,\shots$.
Then for any nonzero $f\in\SkewPolyringZeroDer$ satisfying
\begin{equation}
  \opev{f}{b_j^{(i)}}{a_i}=0, \qquad \forall i=1,\dots,\shots,j=1,\dots,n_i
\end{equation}
we have that $\deg(f)\leq\sum_{i=1}^{\shots}n_i$ where equality holds if and only if the $b_1^{(i)},\dots,b_{n_i}^{(i)}$ are $\Fq$-linearly independent for each $i=1,\dots,\shots$ (i.e. for each evaluation parameter $a_i$, see~\cite{caruso2019residues}).

The existence of a (generalized operator evaluation) interpolation polynomial is considered in Lemma~\ref{lem:int_poly_op_ev} (see e.g.~\cite{caruso2019residues}).
\begin{lemma}[Lagrange Interpolation (Generalized Operator Evaluation)]\label{lem:int_poly_op_ev}
 Let $b_1^{(i)},\dots,b_{n_i}^{(i)}$ be $\Fq$-linearly independent elements from $\Fqm$ for all $i=1,\dots,\shots$.
 Let $c_1^{(i)},\dots,c_{n_i}^{(i)}$ be elements from $\Fqm$ and let $a_1,\dots,a_\shots$ be representatives for different conjugacy classes of $\Fqm$.
 Define the set of tuples $\set{B} \defeq \{(b_j^{(i)}, c_j^{(i)}, a_i):i=1,\dots,\shots, \, j=1,\dots,n_i\}$.
 Then there exists a unique interpolation polynomial $\IPop{\set{B}}\in\SkewPolyringZeroDer$ such that
 \begin{equation}
  \opev{\IPop{\set{B}}}{b_j^{(i)}}{a_i}=c_j^{(i)},\qquad \forall i=1,\dots,\shots, \, \forall j=1,\dots, n_i,
 \end{equation}
 and $\deg(\IPop{\set{B}})<\sum_{i=1}^{\shots}n_i$.
\end{lemma}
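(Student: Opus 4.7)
The plan is a linear-algebraic argument built directly on the vanishing bound recalled in the paragraph preceding the lemma. Set $N \defeq \sum_{i=1}^{\shots} n_i$ and consider the evaluation map
\begin{equation*}
  E \colon \SkewPolyringZeroDer_{<N} \to \Fqm^N, \qquad f \mapsto \bigl(\opev{f}{b_j^{(i)}}{a_i}\bigr)_{1 \le i \le \shots,\, 1 \le j \le n_i}.
\end{equation*}
From the explicit formula $\opev{f}{b}{a} = \sum_{k} f_k \opexp{a}{b}{k}$ one reads off that $E$ is $\Fqm$-linear when $\SkewPolyringZeroDer_{<N}$ carries its natural $\Fqm$-vector space structure with basis $1, x, \dots, x^{N-1}$; in particular, domain and codomain both have $\Fqm$-dimension exactly $N$.

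First I would establish uniqueness. If $f_1, f_2 \in \SkewPolyringZeroDer_{<N}$ both satisfy the interpolation conditions, then $f \defeq f_1 - f_2$ lies in $\ker(E)$ and has $\deg(f) < N$. Under the hypothesis that $b_1^{(i)}, \dots, b_{n_i}^{(i)}$ are $\Fq$-linearly independent inside each conjugacy class $\set{C}(a_i)$, the degree bound cited just before the lemma (attributed to \cite{caruso2019residues}) forces any nonzero polynomial vanishing at all these points to have degree at least $N$. Combined with $\deg(f) < N$, this yields $f = 0$, hence $f_1 = f_2$.

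For existence, the uniqueness step shows $\ker(E) = \{0\}$, so $E$ is an injective $\Fqm$-linear map between $\Fqm$-vector spaces of the same finite dimension $N$. By rank--nullity $E$ is then an isomorphism, so every target tuple $(c_j^{(i)})_{i,j} \in \Fqm^N$ admits a unique preimage in $\SkewPolyringZeroDer_{<N}$, which we take as $\IPop{\set{B}}$. The substantive technical input—the degree lower bound on skew polynomials vanishing across several conjugacy classes—has already been supplied in the text, so no serious obstacle is anticipated; the remainder is a routine dimension count. The only point requiring a brief justification is the $\Fqm$-linearity of $E$, which follows because scaling the coefficient vector $(f_k)$ by $\lambda \in \Fqm$ commutes with the summation defining $\opev{f}{b}{a}$.
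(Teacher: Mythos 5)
Your proof is correct and self-contained; note that the paper itself does not actually prove this lemma but defers to \cite{caruso2019residues}, so a written argument is a useful complement. Yours is the standard one: the evaluation map $E$ is left-$\Fqm$-linear between two $\Fqm$-vector spaces of the same finite dimension $N=\sum_{i}n_i$, its kernel is trivial by the degree bound for skew polynomials vanishing on $\Fq$-independent point sets across distinct conjugacy classes, and rank--nullity then gives bijectivity, yielding both existence and uniqueness within $\SkewPolyringZeroDer_{<N}$. One caveat worth flagging: the degree fact quoted in the paragraph just before the lemma is printed as "$\deg(f)\le\sum_i n_i$", which taken literally for an arbitrary nonzero vanishing $f$ cannot be what is meant (left-multiplying any such $f$ by $x$ produces another vanishing polynomial of strictly larger degree). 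The intended and correct content, which is what your kernel argument relies on, is that when the $b_j^{(i)}$ are $\Fq$-linearly independent for each fixed $i$ the minimal nonzero annihilator has degree exactly $\sum_i n_i$, hence every nonzero annihilator has degree at least $\sum_i n_i$; you read past the paper's imprecision correctly, but it is worth stating the lower-bound form explicitly. A small terminological point: "$\Fq$-linearly independent inside each conjugacy class $\set{C}(a_i)$" is a slight abuse — the $b_j^{(i)}$ are simply elements of $\Fqm$ required to be $\Fq$-independent per shot index $i$, while the conjugacy classes classify the evaluation parameters $a_i$, not the evaluation points — though this does not affect the validity of the argument.
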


For an element $a\in\Fqm$, a vector $\b\in\Fqm^n$ and a skew polynomial $f\in\SkewPolyringZeroDer$ we define
\begin{equation}
  \opev{f}{\b}{a}\defeq(\opev{f}{b_1}{a},\opev{f}{b_2}{a},\dots,\opev{f}{b_n}{a}).
\end{equation}

The set of all skew polynomials of the form
\begin{equation}\label{eq:def_mult_var_skew_poly}
    Q(x, y_1,\dots, y_\intOrder)=Q_0(x)+Q_1(x)y_1+\dots+Q_\intOrder(x)y_\intOrder,
\end{equation}
where $Q_j\in\SkewPolyringZeroDer$ for all $j=0,\dots,\shots$ is denoted by $\MultSkewPolyringZeroDer$.

\begin{definition}[$\vec{w}$-weighted Degree]
  Given a vector $\vec{w}\in\ZZ_{\geq0}^{\intOrder+1}$, the $\vec{w}$-weighted degree of a multivariate skew polynomial from $Q\in\MultSkewPolyringZeroDer$ is defined as 
  \begin{equation}
    \deg_{\vec{w}}(Q)=\max_j\{\deg(Q_j)+w_j\}.
  \end{equation}
\end{definition}

For an element $a\in\Fqm$ and a vector $\vec{b}=(b_1,b_2,\dots,b_n)\in\Fqm^n$ we define the vector
\begin{equation*}
  \opexp{a}{\b}{j} \defeq \left(\opexp{a}{b_1}{j}, \opexp{a}{b_2}{j}, \dots, \opexp{a}{b_n}{j}\right)
\end{equation*}
and the matrix
\begin{equation}
  \opVandermonde{d}{\vec{b}}{a}
  \defeq
  \begin{pmatrix}
    \b 
    \\ 
    \opexp{a}{\b}{1}
    \\ 
    \opexp{a}{\b}{2}
    \\ 
    \vdots 
    \\
    \opexp{a}{\b}{d-1}
  \end{pmatrix}
  =
  \begin{pmatrix}
   b_1 & b_2 & \dots & b_n
   \\
   \opexp{a}{b_1}{1} & \opexp{a}{b_2}{1} & \dots & \opexp{a}{b_n}{1} 
   \\
   \opexp{a}{b_1}{2} & \opexp{a}{b_2}{2} & \dots & \opexp{a}{b_n}{2} 
   \\
   \vdots & \vdots & \ddots & \vdots
   \\
   \opexp{a}{b_1}{d-1} & \opexp{a}{b_2}{d-1} & \dots & \opexp{a}{b_n}{d-1} 
  \end{pmatrix}
  \in\Fqm^{d\times n}.
\end{equation}

For a vector $\vec{x}=\left(\vec{x}^{(1)} \mid \vec{x}^{(2)} \mid \dots \mid \vec{x}^{(\shots)}\right)\in\Fqm^n$ with $\vec{x}^{(i)}\in\Fqm^{n_i}$ for all $i=1,\dots,\shots$, a length partition $\n=(n_1,n_2,\dots,n_\shots)\in\ZZ_{\geq0}^\shots$ such that $\sum_{i=1}^{\shots}n_i=n$ and a vector $\a = (a_1,a_2,\dots,a_\shots)\in\Fqm^\shots$ we define the vector\footnote{To simplify the notation we omit the length partition $\n$ from the vector operator $\opexp{\a}{\x}{i}$ since it will be always clear from the context (i.e. as the length partition of the vector $\x$).}
\begin{align*}
\opexp{\a}{\x}{i}
:= \begin{pmatrix}
& \opexp{a_1}{\x^{(1)}}{i} & \big\lvert & \opexp{a_2}{\x^{(2)}}{i}  & \big\lvert & \dots & \big\lvert &  \opexp{a_\ell}{\x^{(\ell)}}{i} &
\end{pmatrix} \in \Fqm^n.
\end{align*}
By the properties of the operator $\opexp{a}{\cdot}{i}$, we have that
\begin{equation}\label{eq:rho_i+j_property}
\opexp{\a}{\x}{i+j} = \opexp{\a}{\opexp{\a}{\x}{i}}{j}
\end{equation}
and
\begin{equation}\label{eq:rho_linearity_property}
\opexp{\a}{\myalpha\x}{i} = \aut^i(\myalpha) \opexp{\a}{\x}{i} \quad \forall \, \myalpha \in \Fqm. 
\end{equation}

For a matrix 
\begin{equation*}
  \X=
  \begin{pmatrix}
   \x_1
   \\ 
   \x_2 
   \\ 
   \vdots 
   \\ 
   \x_d
  \end{pmatrix}
  \in\Fqm^{d\times n},
\end{equation*}
and integer $j$ and a vector $\a = (a_1,a_2,\dots,a_\shots)\in\Fqm^\shots$  we define $\opexp{\a}{\cdot}{j}$ applied to $\X$ as
\begin{equation}
  \opexp{\a}{\X}{j}\defeq
  \begin{pmatrix}
   \opexp{\a}{\x_1}{j}
   \\ 
   \opexp{\a}{\x_2}{j}
   \\ 
   \vdots 
   \\ 
   \opexp{\a}{\x_d}{j}
  \end{pmatrix}.
\end{equation}

The element-wise application of the operator to matrices does not affect the rank, i.e. we have that $\rk_{q^m}(\opexp{\a}{\X}{j}) = \rk_{q^m}(\X)$ (see~\cite[Lemma~3]{bartz2022fast}).

\begin{definition}[{\genMoore} Matrix]
  For an integer $d \in \ZZ_{>0}$, a length partition $\n=(n_1,n_2,\dots,n_\shots)\in\ZZ_{\geq0}^\shots$ such that $\sum_{i=1}^{\shots}n_i=n$ and the vectors $\a=(a_1,a_2,\dots,a_\shots)\in\Fqm^\shots$ and $\vec{x}=\left(\vec{x}^{(1)} \mid \vec{x}^{(2)} \mid \dots \mid \vec{x}^{(\shots)}\right)\in\Fqm^n$ with $\vec{x}^{(i)}\in\Fqm^{n_i}$ for all $i=1,\dots,\shots$, the {\genMoore} matrix is defined as
  \begin{align*}
  \Lambda_d(\x)_{\a}
  \defeq
  \begin{pmatrix}
  \opexp{\a}{\x}{0} \\
  \opexp{\a}{\x}{1} \\
  \vdots \\
  \opexp{\a}{\x}{d-1}
  \end{pmatrix}
  =
  \left(
  \begin{array}{c|c|c|c}
  \opVandermonde{d}{\x^{(1)}}{a_1} & \opVandermonde{d}{\x^{(2)}}{a_2} & \cdots & \opVandermonde{d}{\x^{(\shots)}}{a_\shots}
  \end{array}\right) 
  \in \Fqm^{d \times n}.
  \end{align*}
\end{definition}

Similar as for ordinary polynomials and Vandermonde matrices, there is a relation between the generalized operator evaluation and the product with a {\genMoore} matrix. 
In particular, for a skew polynomial $f(x)=\sum_{i=0}^{k-1} f_i x^i \in \SkewPolyringZeroDer_{<k}$ and vectors $\a=(a_1,a_2,\dots,a_\shots)\in\Fqm^\shots$ and $\vec{x}=\left(\vec{x}^{(1)} \mid \vec{x}^{(2)} \mid \dots \mid \vec{x}^{(\shots)}\right)\in\Fqm^n$ we have that
\begin{equation}\label{eq:relation_op_ev_moore}
  \opev{f}{\a}{\x} = (f_0, f_1, \dots, f_{k-1}) \cdot \Lambda_k(\x)_{\a}.
\end{equation}

The rank of a {\genMoore} matrix satisfies $\rk_{q^m}(\Lambda_d(\x)_{\a})=\min\{d,n\}$ if and only if $\SumRankWeight(\x)=n$ (see e.g.~\cite[Theorem~4.5]{lam1988vandermonde}).

\begin{remark}
 To simplify the notation we omit the rank partition $\n$ in $\Lambda_j(\cdot)_\a$ since it will be always clear from the context (i.e. the length partition of the considered vector).
\end{remark}

\section{Multishot Network Coding}\label{sec:multishot_nwc}

As a channel model we consider the \emph{multishot operator channel} from~\cite{nobrega2009multishot} which consists of multiple independent channel uses of the operator channel from~\cite{koetter2008coding}. 
The operator channel is a discrete channel that relates the input $\txSpace \in \ProjspaceAny{N}$ with $\nTransmit \defeq \dim(\txSpace)$ to the output $\rxSpace \in \ProjspaceAny{N}$ by
\begin{equation}\label{eq:def_op_channel}
   \rxSpace = \delOp{\nTransmit-\deletions}(\txSpace) \oplus \errSpace
\end{equation}
where $\delOp{\nTransmit-\deletions}(\txSpace)$ is an erasure operator that returns an $(\nTransmit-\deletions)$-dimensional subspace of $\txSpace$ and $\errSpace \in \Grassm{N,\insertions}$ is a $\insertions$-dimensional subspace with $\txSpace \cap \errSpace = \{\0\}$.
The dimension of the received space $\nReceive \defeq \dim(\rxSpace)$ is then
\begin{equation*}
    \nReceive = \nTransmit - \deletions + \insertions
\end{equation*}
where $\deletions$ is called the number of \emph{deletions} and $\insertions$ is called the number of \emph{insertions}.
Observe, that the subspace distance between the input $\txSpace$ and the output $\rxSpace$ is $\SubspaceDist(\txSpace,\rxSpace)=\insertions+\deletions$.

\subsection{Multishot Operator Channel}

A multishot (or $\shots$-shot) operator channel~\cite{nobrega2009multishot} with overall $\insertions$ insertions and $\deletions$ deletions is a discrete channel with input and output alphabet $\ProjspaceAny{\Nvec}$. 
Consider the partitions of insertions $\insertionsVec=(\insertionsShot{1},\insertionsShot{2},\dots,\insertionsShot{\shots})$ and deletions $\deletionsVec=(\deletionsShot{1},\deletionsShot{2},\dots,\deletionsShot{\shots})$ such that $\insertions = \sum_{i=1}^{\shots} \insertionsShot{i}$ and $\deletions = \sum_{i=1}^{\shots} \deletionsShot{i}$.
The input is a tuple of subspaces $\txSpaceVec\in\ProjspaceAny{\Nvec}$ with sum-dimension $\sumDim(\txSpaceVec) = \nTransmit$ and sum-dimension partition $\nTransmitVec$.
The output $\rxSpaceVec\in\ProjspaceAny{\Nvec}$ is related to the input $\txSpaceVec\in\ProjspaceAny{\Nvec}$ by
\begin{equation}\label{eq:def:multishot_op_channel}
  \rxSpaceVec=\delOp{\nTransmit-\deletions}(\txSpaceVec)\oplus\errSpaceVec,
\end{equation}
where $\delOp{\nTransmit-\deletions}(\txSpaceVec)$ returns a tuple with sum-dimension $\nTransmit-\deletions$ and sum-dimension partition $\nTransmitVec-\deletionsVec$ and $\errSpaceVec \in \Grassm{\Nvec,\insertionsVec}$ is a tuple of error spaces with $\sumDim(\errSpaceVec)=\insertions$ and $\txSpaceVec \cap \errSpaceVec=\{\0\}$. 
The multishot operator channel can be considered as $\shots$ instances of the (single-shot) operator channel defined in~\eqref{eq:def_op_channel} such that an overall number of $\insertions$ insertions and $\deletions$ deletions occurs.

The output $\rxSpaceVec$ of the multishot operator channel has sum-dimension
\begin{equation*}
  \nReceive=\nTransmit-\deletions+\insertions
\end{equation*}
with sum-dimension partition
\begin{equation}\label{eq:part_rec_space}
    \nReceiveVec = \nTransmitVec - \deletionsVec + \insertionsVec.
\end{equation}
The multishot operator channel is illustrated in Figure~\ref{fig:ms_op_channel}.
\begin{figure}[ht!]
  \centering
  \begin{tikzpicture}[auto]
	\node (a) at (0,0) {};
	
	\node [draw, inner sep=15pt,thick] (channel) at (150pt,0) {$\myspace{H}_{\nTransmit-\deletions}(\txSpaceVec)\oplus\errSpaceVec$};
	
	\node [coordinate,thick] (end) at (295pt,0){};
	
	\path[->,thick] (a) edge node[midway, above, xshift=-5pt] {$\txSpaceVec = (\txSpaceShot{1}\!, \dots, \txSpaceShot{\shots})$} node[midway, below, xshift=-5pt] {$\in \ProjspaceAny{\Nvec}$} (channel);
	
	\path[->,thick] (channel) edge node[midway, above, xshift=+5pt] {$\rxSpaceVec = (\rxSpaceShot{1}\!, \dots, \rxSpaceShot{\shots})$} node[midway, below, xshift=+5pt] {$\in \ProjspaceAny{\Nvec}$} (end);

\end{tikzpicture}
  \caption{Illustration of the $\shots$-shot operator channel.}
  \label{fig:ms_op_channel}
\end{figure}
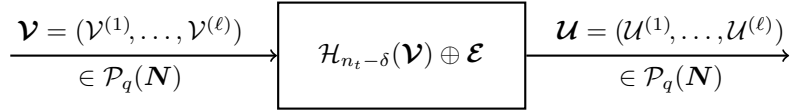

In~\cite{nobrega2010multishot,martinez2019reliable} a sum-rank-metric representation of the multishot operator channel in the spirit of~\cite{silva2008rank,Silva_PhD_ErrorControlNetworkCoding} was considered.
This equivalent channel representation is more suitable for decoding of \ac{LILRS} codes in the sum-rank metric. 
In this work, we consider the interpretation as a multishot operator channel that is more closely related to the sum-subspace metric.

\begin{remark}\label{rem:rand_instance_op_channel}
  By the term ``random instance of the $\shots$-shot operator channel with overall $\insertions$ insertions and $\deletions$ deletions'' we mean, that we draw uniformly at random an instance from all instances of the $\shots$-shot operator channel~\eqref{eq:def:multishot_op_channel}, i.e. we dram uniformly at random form all partitions of the insertions $\insertionsVec$ and deletions $\deletionsVec$, and for fixed $\txSpaceVec$, $\insertionsVec$ and $\deletionsVec$, the error space is chosen uniformly at random from the set
  \begin{equation}\label{eq:def_err_space_set}
    \Eset^{(\insertionsVec)}\defeq\left\{\errSpaceVec\in\Grassm{\Nvec, \insertionsVec}:\txSpaceShot{i}\cap\errSpaceShot{i}=\{\vec{0}\},\forall i \wedge \sumDim(\errSpaceVec)=\insertions\right\}.
  \end{equation}
\end{remark}

In Appendix~\ref{app:draw_uniform_errors} we propose an efficient procedure to implement random instances of the multishot operator channel for parameters $\NbarVec=(\Nbar,\dots,\Nbar)$ and $\nTransmitBarVec=(\nTransmitBar,\dots,\nTransmitBar)$ by adapting the dynamic-programming routine in~\cite[Appendix~A]{puchinger2020generic} for drawing an error of given sum-rank weight uniformly at random to the sum-subspace case (see Algorithm~\ref{alg:multi_shot_op_channel}).

We now extend the definition of $(\insertions,\deletions)$ reachability for the operator channel~\cite{bartz2020fast} to the multishot operator channel.

\begin{definition}[$(\insertions,\deletions)$ Reachability]\label{def:ins_del_reachability}
 Given two tuples of subspaces $\rxSpaceVec,\txSpaceVec \in \ProjspaceAny{\Nvec}$ we say that $\txSpaceVec$ is $(\insertions,\deletions)$-reachable from $\rxSpaceVec$ if there exists a realization of the multishot operator channel~\eqref{eq:def:multishot_op_channel} with $\insertions$ insertions and $\deletions$ deletions that transforms the input $\txSpaceVec$ to the output $\rxSpaceVec$.
\end{definition}

Next, we relate the $(\insertions,\deletions)$-reachability with the sum-subspace distance.

\begin{proposition}
 Consider $\rxSpaceVec \in \ProjspaceAny{\Nvec}$ and $\txSpaceVec \in \ProjspaceAny{\Nvec}$.
 If $\txSpaceVec$ is $(\insertions,\deletions)$-reachable from $\rxSpaceVec$, then we have that $\SumSubspaceDist(\rxSpaceVec,\txSpaceVec)=\insertions+\deletions$.
\end{proposition}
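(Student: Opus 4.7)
The plan is to reduce the multishot claim to $\shots$ independent single-shot statements and then sum. By Definition~\ref{def:ins_del_reachability} and \eqref{eq:def:multishot_op_channel}, $(\insertions,\deletions)$-reachability means that there exist partitions $\insertionsVec$ and $\deletionsVec$ summing to $\insertions$ and $\deletions$, respectively, and an error tuple $\errSpaceVec$ with $\sumDim(\errSpaceVec)=\insertions$ and $\txSpaceVec\cap\errSpaceVec=\{\0\}$ such that, shot-wise,
\begin{equation*}
\rxSpaceShot{i} \;=\; \txSpaceShot{i}{}' \oplus \errSpaceShot{i}, \qquad \txSpaceShot{i}{}' \;:=\; \delOp{\nTransmitShot{i}-\deletionsShot{i}}(\txSpaceShot{i}) \;\subseteq\; \txSpaceShot{i},
\end{equation*}
with $\dim(\txSpaceShot{i}{}')=\nTransmitShot{i}-\deletionsShot{i}$, $\dim(\errSpaceShot{i})=\insertionsShot{i}$, and $\errSpaceShot{i}\cap\txSpaceShot{i}=\{\0\}$. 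Since the sum-subspace distance in \eqref{eq:def_sum_subspace_dist} is additive over shots, it suffices to prove $\SubspaceDist(\rxSpaceShot{i},\txSpaceShot{i})=\insertionsShot{i}+\deletionsShot{i}$ for every~$i$ and then sum.

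For a fixed shot $i$, I would compute the sum and the intersection separately. The sum rewrites as $\rxSpaceShot{i}+\txSpaceShot{i} = \txSpaceShot{i}{}' + \errSpaceShot{i} + \txSpaceShot{i} = \txSpaceShot{i} + \errSpaceShot{i}$ because $\txSpaceShot{i}{}'\subseteq\txSpaceShot{i}$, and from $\txSpaceShot{i}\cap\errSpaceShot{i}=\{\0\}$ we get $\dim(\rxSpaceShot{i}+\txSpaceShot{i})=\nTransmitShot{i}+\insertionsShot{i}$. For the intersection, the containment $\txSpaceShot{i}{}'\subseteq\rxSpaceShot{i}\cap\txSpaceShot{i}$ is immediate. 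For the reverse containment, take $v\in\rxSpaceShot{i}\cap\txSpaceShot{i}$ and decompose $v=u+e$ with $u\in\txSpaceShot{i}{}'$ and $e\in\errSpaceShot{i}$; then $e=v-u\in\txSpaceShot{i}$, so $e\in\txSpaceShot{i}\cap\errSpaceShot{i}=\{\0\}$, giving $v\in\txSpaceShot{i}{}'$. Hence $\dim(\rxSpaceShot{i}\cap\txSpaceShot{i})=\nTransmitShot{i}-\deletionsShot{i}$.

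Plugging both into the definition of $\SubspaceDist$ yields $\SubspaceDist(\rxSpaceShot{i},\txSpaceShot{i}) = (\nTransmitShot{i}+\insertionsShot{i}) - (\nTransmitShot{i}-\deletionsShot{i}) = \insertionsShot{i}+\deletionsShot{i}$. Summing over $i=1,\dots,\shots$ gives $\SumSubspaceDist(\rxSpaceVec,\txSpaceVec)=\sum_i(\insertionsShot{i}+\deletionsShot{i})=\insertions+\deletions$, as required. There is no real obstacle here: the only slightly non-routine step is the intersection argument, which hinges critically on the direct-sum condition $\txSpaceShot{i}\cap\errSpaceShot{i}=\{\0\}$ built into the operator channel definition; without it one would only obtain inequalities rather than equalities.
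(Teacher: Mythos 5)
The paper states this proposition without proof, so there is no authors' argument to compare against; your task was effectively to supply the omitted routine verification. Your proof is correct and complete. The reduction to shots is justified because $\SumSubspaceDist$ in~\eqref{eq:def_sum_subspace_dist} is by definition the sum of the per-shot subspace distances, and the per-shot computation is exactly right: $\rxSpaceShot{i}+\txSpaceShot{i}=\txSpaceShot{i}\oplus\errSpaceShot{i}$ has dimension $\nTransmitShot{i}+\insertionsShot{i}$, and your two-way containment argument establishes $\rxSpaceShot{i}\cap\txSpaceShot{i}=\delOp{\nTransmitShot{i}-\deletionsShot{i}}(\txSpaceShot{i})$ of dimension $\nTransmitShot{i}-\deletionsShot{i}$, where the reverse inclusion is the only nontrivial step and is correctly handled by noting that $e=v-u$ lies in $\txSpaceShot{i}\cap\errSpaceShot{i}=\{\0\}$. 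Plugging into~\eqref{eq:def_subspace_dist} and summing over $i$ gives the claim. This is essentially the single-shot observation made after~\eqref{eq:def_op_channel} (itself cited from~\cite{koetter2008coding}), applied componentwise; your write-up makes explicit what both the cited reference and this paper leave implicit.
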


Similar as in~\cite{koetter2008coding} we now define normalized parameters for codes in the sum-subspace metric.
The normalized weight $\normweight$, the code rate $R$ and the normalized minimum distance $\normdist$ of a sum-subspace code $\mycode{C}$ with parameters $N=\sum_{i=1}^{\shots}N_i$ and $\nTransmit=\sum_{i=1}^{\shots}\nTransmitShot{i}$ is defined as
\begin{equation}\label{eq:normalized_parameters}
 \normweight\defeq\sum_{i=1}^{\shots}\frac{\nTransmitShot{i}}{N}=\frac{\nTransmit}{N},
 R\defeq\frac{\log_q(\mycode{C})}{\sum_{i=1}^{\shots}\nTransmitShot{i}N_i}
 \quad\text{and}\quad
 \normdist\defeq\frac{\SumSubspaceDist(\mycode{C})}{2\nTransmit}=\frac{\SumSubspaceDist(\mycode{C})}{2\normweight N},
\end{equation}
respectively.
The normalized parameters $\normweight$, $R$ and $\normdist$ defined in~\eqref{eq:normalized_parameters} lie naturally within the interval $[0,1]$.
Define $\nTransmitBar\defeq\nTransmit/\shots$. 
For $\nTransmitShot{i}=\nTransmitBar$ for all $i=1,\dots,\shots$ we can write the code rate as
\begin{equation*}
  R=\frac{\log_q(|\mycode{C}|)}{\sum_{i=1}^{\shots}\nTransmitBar N_i}
  =\frac{\shots\log_q(|\mycode{C}|)}{\nTransmit N}
  =\frac{\shots\log_q(|\mycode{C}|)}{\normweight N^2}.
\end{equation*}

A \emph{sum-subspace} code $\mycode{C}$ is a non-empty subset of $\ProjspaceAny{\Nvec}$, and has minimum subspace distance $\SumSubspaceDist(\mycode{C})$ when all subspaces in the code have distance at least $\SumSubspaceDist(\mycode{C})$ and there is at least one pair of subspaces with distance exactly $\SumSubspaceDist(\mycode{C})$.
In the following we consider \emph{constant-shot-dimension} codes\footnote{In~\cite{martinez2019reliable} these codes are called \emph{sum-constant-dimension} codes.}, i.e. codes that inject the same number of (linearly independent) packets $\nTransmitShot{i}$ in a given shot.
In this setup, we transmit a tuple of subspaces
\begin{equation}
  \txSpaceVec=\left(\txSpaceShot{1},\txSpaceShot{2},\dots,\txSpaceShot{\shots}\right)
  \in \Grassm{\Nvec, \nTransmitVec}
\end{equation}
and receive a tuple of subspaces
\begin{equation}
 \rxSpaceVec=\left(\rxSpaceShot{1},\rxSpaceShot{2},\dots,\rxSpaceShot{\shots}\right)
  \in \Grassm{\Nvec, \nReceiveVec},
\end{equation}
where
\begin{equation}\label{eq:comp_rec_sapces}
  \rxSpaceShot{i} =  
  \Rowspace{
  \begin{pmatrix}
   \vecxi^{(i)\top} & \u_1^{(i)\top} & \u_2^{(i)\top} & \dots & \u_\intOrder^{(i)\top}
  \end{pmatrix}
  } 
  =
  \RowspaceHuge{
  \begin{pmatrix}
   \xi_1^{(i)} & u_{1,1}^{(i)} & u_{1,2}^{(i)} & \dots & u_{1,\intOrder}^{(i)}
   \\[-2pt] 
   \vdots & \vdots & \vdots & \ddots & \vdots
   \\ 
   \xi_{\nReceiveShot{i}}^{(i)} & u_{\nReceiveShot{i},1}^{(i)} & u_{\nReceiveShot{i},2}^{(i)} & \dots & u_{\nReceiveShot{i},\intOrder}^{(i)}
  \end{pmatrix}
  }
\end{equation}
for all $i=1,\dots,\shots$.

Similar as for subspace codes in~\cite{koetter2008coding} we now define \emph{complementary} sum-subspace codes.
For any $\myspace{V}\in\ProjspaceAny{N}$ the \emph{dual} space $\myspaceDual{V}$ is defined as
\begin{equation*}
  \myspaceDual{V}\defeq\{\u\in\Fq^{N}:\u\v^\top=0,\,\forall \v\in\myspace{V}\}
\end{equation*}
where $\dim(\myspaceDual{V})=N-\dim(\myspace{V})$.
For a tuple $\txSpaceVec=(\txSpaceShot{1},\txSpaceShot{2},\dots,\txSpaceShot{\shots})\in\ProjspaceAny{\Nvec}$ we define the dual tuple as
\begin{equation*}
  \txSpaceVec^\perp\defeq\left((\txSpaceShot{1})^\perp,(\txSpaceShot{2})^\perp,\dots,(\txSpaceShot{\shots})^\perp\right)\in\ProjspaceAny{\Nvec}.
\end{equation*}
Note, that if $\txSpaceVec \in \Grassm{\Nvec,\nTransmitVec}$, then we have that $\txSpaceVec^\perp \in \Grassm{\Nvec,\Nvec-\nTransmitVec}$.
By applying~\cite[Equation~4]{koetter2008coding} to the subspace distance between each component space of two tuples $\rxSpaceVec,\txSpaceVec\in\ProjspaceAny{\Nvec}$ in~\eqref{eq:def_sum_subspace_dist} we get that 
\begin{equation}
  \SumSubspaceDist(\txSpaceVec^\perp,\rxSpaceVec^\perp)=\SumSubspaceDist(\txSpaceVec,\rxSpaceVec).
\end{equation}

Consider a constant-shot-dimension sum-subspace code $\mycode{C} \subseteq \Grassm{\Nvec,\nTransmitVec}$.
Then the complementary constant-shot-dimension sum-subspace code $\mycode{C}^\perp$ is defined as
\begin{equation}
  \mycode{C}^\perp\defeq\{\txSpaceVec^\perp:\txSpaceVec\in\mycode{C}\} 
  \subseteq \Grassm{\Nvec,\Nvec-\nTransmitVec}.
\end{equation}
The complementary code $\mycode{C}^\perp$ has cardinality $|\mycode{C}^\perp|=|\mycode{C}|$, minimum sum-subspace distance $\SumSubspaceDist(\mycode{C}^\perp)=\SumSubspaceDist(\mycode{C})$ and code rate
\begin{equation*}
  R^\perp=\frac{\log_q(|\mycode{C}^\perp|)}{\sum_{i=1}^{\shots}(N_i-\nTransmitShot{i})N_i}.
\end{equation*}

\subsection{Lifted Linearized Reed--Solomon Codes}

\emph{Lifted} \ac{LRS} codes~\cite{martinez2019reliable} are constant-shot-dimension multishot network codes for error-control in noncoherent multishot network coding.
The main idea behind the construction of \ac{LLRS} codes is to lift codewords of an \ac{LRS} code in a block-wise manner by augmenting each (transposed) codeword block by the corresponding $\Fq$-linearly independent code locators.
For the special case of $\shots=1$ the construction coincides with the Kötter--Kschischang subspace codes~\cite{koetter2008coding}.

Let $\a=(a_1,a_2,\dots,a_\shots)$ be a vector containing representatives from different conjugacy classes of $\Fqm$.
Let the vectors $\vecbeta^{(i)}=(\beta_1^{(i)},\beta_2^{(i)},\dots,\beta_{\nTransmitShot{i}}^{(i)})\in\Fqm^{\nTransmitShot{i}}$ contain $\Fq$-linearly independent elements from $\Fqm$ for all $i=1,\dots,\shots$ and define $\vecbeta=\left(\vecbeta^{(1)}\mid\vecbeta^{(2)}\mid\dots\mid\vecbeta^{(\shots)}\right)\in\Fqm^{\nTransmit}$ and $\nTransmitVec=(\nTransmitShot{1},\nTransmitShot{2},\dots,\nTransmitShot{\shots})$.
Then an \ac{LLRS} code $\liftedLinRS{\vecbeta,\a,\shots;\nTransmitVec,k}$ of sum-subspace dimension $\nTransmit=\nTransmitShot{1}+\nTransmitShot{2}+\dots+\nTransmitShot{\shots}$, sum-dimension partition $\nTransmitVec$ and dimension $k\leq\nTransmit$ is defined as

\begin{align*}
\Big\{\txSpaceVec(f) \defeq \left(\txSpaceShot{1}(f), \dots, \txSpaceShot{\shots}(f)\right) \, : \, f \in \SkewPolyringZeroDer_{<k}\Big\}
\subseteq \Grassm{\Nvec,\nTransmitVec}
\end{align*}
where $\Nvec=(N_1,\dots,N_\shots)$ with $N_i=\nTransmitShot{i}+m$, and, for $f \in \SkewPolyringZeroDer_{<k}$, we have
\begin{equation*}
\txSpaceShot{i}(f) \defeq
  \Rowspace{
  \begin{pmatrix}
   \vecbeta^{(i)\top} & \opev{f}{\vecbeta^{(i)}}{a_i}^\top
  \end{pmatrix}
  }\
  \in \Grassm{N_i,\nTransmitShot{i}},
  \qquad \forall i=1,\dots,\shots.
\end{equation*}
The \emph{lifting} operation corresponds to augmenting each transposed codeword block of an \ac{LRS} codeword by the corresponding (transposed) code locators and considering the $\Fq$-linear rowspace thereof (see~\cite{silva2008rank,nobrega2010multishot,martinez2019reliable}). 
The lifting operation causes a rate-loss since the code locators do not carry information since they are common for all codewords.

The minimum sum-subspace distance of $\liftedLinRS{\vecbeta,\a,\shots;\nTransmitVec,k}$ equals (see~\cite{martinez2019reliable}) 
\begin{equation*}
     \SumSubspaceDist(\liftedLinRS{\vecbeta,\a,\shots;\nTransmitVec,k}) = 2(\nTransmit - k + 1)
\end{equation*} 
and the code rate is
\begin{equation*}
    R = \frac{mk}{\sum_{i=1}^{\shots}\nTransmitShot{i}(\nTransmitShot{i} + m)}.
\end{equation*}

In~\cite{martinez2019reliable} and efficient interpolation-based decoding algorithm that can correct an overall number of $\insertions$ insertions and $\deletions$ deletions up to
\begin{equation}
    \insertions + \deletions < \nTransmit - k + 1
\end{equation}
was presented. 
However, the decoder from~\cite{martinez2019reliable} has the restriction that the dimension of the received spaces and the dimension of the transmitted spaces must be the same (c.f.~\cite[Section~V.H]{martinez2019reliable}).

%----------------------------------------------------------------------------
% Multishot Network Coding
%----------------------------------------------------------------------------
\section{Decoding of Lifted Interleaved LRS Codes for Error-Control in Multishot Network Coding}\label{sec:LILRS}

In this section, we consider the application of \emph{lifted} \ac{ILRS} codes for error control in multishot network coding. In particular, we focus on noncoherent transmissions, where the network topology and/or the coefficients of the in-network linear combinations at the intermediate nodes are not known (or used) at the transmitter and the receiver.
Therefore, we define and analyze \acf{LILRS} codes. 
We derive a {\LOlike} decoder~\cite{loidreau2006decoding,overbeck2007public,overbeck2008structural} for \ac{LILRS} codes which is capable of correcting insertions and deletions beyond the unique decoding region at the cost of a (very) small decoding failure probability.
Although the {\LOlike} decoder is not the most efficient decoder in terms of computational complexity, it allows to analyze the decoding failure probability and gives insights about the decoding procedure.
We derive a tight upper bound on the decoding failure probability of the {\LOlike} decoder for \ac{LILRS} codes, which, unlike simple heuristic bounds, considers the distribution of the error spaces caused by insertions.

We propose an efficient interpolation-based decoding scheme, which can correct insertions and deletions beyond the unique decoding region and which be used as a list decoder or as a probabilistic unique decoder.
We derive upper bounds on the worst-case list size and use the relation between the interpolation-based decoder and the {\LOlike} decoder to derive an upper bound on the decoding failure probability for the interpolation-based probabilistic unique decoding approach.
Unlike the interpolation-based decoder in~\cite[Section~V.H]{martinez2019reliable}, the proposed decoding schemes \emph{do not} have the restriction that the dimension of the received spaces and the dimension of the transmitted spaces must be the same.

\subsection{Lifted Interleaved Linearized Reed--Solomon Codes}\label{subsec:LILRS}

In this section we consider \ac{LILRS} codes for transmission over a multishot operator channel~\eqref{eq:def:multishot_op_channel}. 
We generalize the ideas from~\cite{martinez2019reliable} to obtain multishot subspace codes by \emph{lifting} the \ac{ILRS} codes defined in~\cite{bartz2022fast}.

\begin{definition}[Lifted Interleaved Linearized Reed--Solomon Code]\label{def:LILRS}
Let $\a=(a_1,a_2,\dots,a_\shots)$ be a vector containing representatives from different conjugacy classes of $\Fqm$.
Let the vectors $\vecbeta^{(i)}=(\beta_1^{(i)},\beta_2^{(i)},\dots,\beta_{\nTransmitShot{i}}^{(i)})\in\Fqm^{\nTransmitShot{i}}$ contain $\Fq$-linearly independent elements from $\Fqm$ for all $i=1,\dots,\shots$ and define $\vecbeta=\left(\vecbeta^{(1)}\mid\vecbeta^{(2)}\mid\dots\mid\vecbeta^{(\shots)}\right)\in\Fqm^{\nTransmit}$ and $\nTransmitVec=(\nTransmitShot{1},\nTransmitShot{2},\dots,\nTransmitShot{\shots})$.
A lifted $\intOrder$-interleaved linearized Reed--Solomon (LILRS) code $\liftedIntLinRS{\vecbeta,\a,\shots,\intOrder;\nTransmitVec,k}$ of sum-subspace dimension $\nTransmit=\nTransmitShot{1}+\nTransmitShot{2}+\dots+\nTransmitShot{\shots}$ and dimension $k\leq\nTransmit$ is defined as

\begin{align*}
\Big\{\txSpaceVec(\f) \defeq \left(\txSpaceShot{1}(\f), \dots, \txSpaceShot{\shots}(\f)\right) \, : \, \f \in \SkewPolyringZeroDer_{<k}^\intOrder\Big\}
\subseteq \Grassm{\Nvec,\nTransmitVec}
\end{align*}
where $\Nvec=(N_1,\dots,N_\shots)$ with $N_i=\nTransmitShot{i}+\intOrder m$, and, for $\f = (f_1, \dots, f_\intOrder)$, we have
\begin{equation*}
\txSpaceShot{i}(\f) \defeq
   \Rowspace{
  \begin{pmatrix}
   \vecbeta^{(i)\top} & \opev{f_1}{\vecbeta^{(i)}}{a_i}^\top & \dots & \opev{f_\intOrder}{\vecbeta^{(i)}}{a_i}^\top
  \end{pmatrix}
  }
  \in \Grassm{N_i,\nTransmitShot{i}}.
\end{equation*}
\end{definition}

Observe, that compared to \ac{LLRS} codes the relative overhead due to lifting decreases in $\intOrder$ since the evaluations are performed at the same code locators and thus have to be appended only once.
The reduction of the relative overhead comes at the cost of an increased packet size $N_i$ for each shot $i=1,\dots,\shots$.

The definition of~\ac{LILRS} codes generalizes several code families.  
For $\intOrder=1$ we obtain the lifted linearized Reed--Solomon codes from~\cite[Section~V.III]{martinez2019reliable}. 
For $\shots=1$ we obtain lifted interleaved Gabidulin codes as considered in e.g.~\cite{wachter2014list, bartz2018efficient} with Kötter--Kschischang codes~\cite{koetter2008coding} as special case for $\intOrder=1$.

Proposition~\ref{prop:min_dist_lilrs} shows that interleaving does not increase the minimum sum-subspace distance of the code.

\begin{proposition}[Minimum Distance]\label{prop:min_dist_lilrs}
 The minimum sum-subspace distance of a \ac{LILRS} code $\liftedIntLinRS{\vecbeta,\a,\shots,\intOrder;\nTransmitVec,k}$ as in Definition~\ref{def:LILRS} is
 \begin{equation}
  \SumSubspaceDist\left(\liftedIntLinRS{\vecbeta,\a,\shots,\intOrder;\nTransmitVec,k}\right)=2\left(\nTransmit-k+1\right).
 \end{equation}
\end{proposition}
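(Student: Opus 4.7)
The plan is to translate the sum-subspace distance between two codewords $\txSpaceVec(\f)$ and $\txSpaceVec(\g)$ into a statement about common roots of the difference skew polynomials $h_l = f_l - g_l \in \SkewPolyringZeroDer_{<k}$, and then invoke the bound on the number of $\Fq$-linearly independent roots of a skew polynomial across different conjugacy classes that is stated just before Lemma~\ref{lem:int_poly_op_ev}.

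First I would describe every element of $\txSpaceShot{i}(\f)$ in a canonical form. Because the entries of $\vecbeta^{(i)}$ are $\Fq$-linearly independent, an arbitrary $\Fq$-linear combination with coefficient vector $(c_1,\dots,c_{\nTransmitShot{i}})$ of the generating rows is determined by the element $b = \sum_j c_j \beta_j^{(i)} \in \spannedBy{\vecbeta^{(i)}}$, and by $\Fq$-linearity of the generalized operator evaluation the remaining coordinates are $\opev{f_l}{b}{a_i}$ for $l=1,\dots,\intOrder$. Consequently, $\txSpaceShot{i}(\f) \cap \txSpaceShot{i}(\g)$ corresponds to the $\Fq$-subspace
\begin{equation*}
\set{K}_i(\f,\g) \defeq \bigl\{ b \in \spannedBy{\vecbeta^{(i)}} : \opev{h_l}{b}{a_i} = 0 \text{ for all } l=1,\dots,\intOrder\bigr\},
\end{equation*}
whose dimension I denote $d_i$. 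The dimension formula then yields $\SubspaceDist(\txSpaceShot{i}(\f),\txSpaceShot{i}(\g)) = 2(\nTransmitShot{i} - d_i)$ and hence
\begin{equation*}
\SumSubspaceDist\!\left(\txSpaceVec(\f),\txSpaceVec(\g)\right) = 2\!\left(\nTransmit - \textstyle\sum_{i=1}^{\shots} d_i\right).
\end{equation*}

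Next I would prove the lower bound $\SumSubspaceDist \geq 2(\nTransmit-k+1)$. Assume $\f \neq \g$, so some $h_{l^\ast}$ is nonzero with $\deg(h_{l^\ast}) \leq k-1$. Choose an $\Fq$-basis of $\set{K}_i(\f,\g)$ in each shot; this produces $d_i$ $\Fq$-linearly independent elements of $\Fqm$ on which $\opev{h_{l^\ast}}{\cdot}{a_i}$ vanishes, for each $i$. Since the $a_i$ come from pairwise $\aut$-distinct conjugacy classes, the paragraph preceding Lemma~\ref{lem:int_poly_op_ev} gives $\sum_{i=1}^{\shots} d_i \leq \deg(h_{l^\ast}) \leq k-1$, whence $\SumSubspaceDist(\txSpaceVec(\f),\txSpaceVec(\g)) \geq 2(\nTransmit - k + 1)$.

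Finally I would show the bound is attained by exhibiting two codewords at exactly that distance. Since the $\vecbeta^{(i)}$ collectively contain $\nTransmit \geq k$ $\Fq$-linearly independent elements, Lemma~\ref{lem:int_poly_op_ev} provides a skew polynomial $h_1 \in \SkewPolyringZeroDer$ of degree $<k$ that vanishes under $\opev{\cdot}{\cdot}{a_i}$ on a prescribed collection of $k-1$ of these elements and is nonzero on one further $\beta_j^{(i)}$. The $k-1$ imposed vanishing conditions force $\deg(h_1) = k-1$, and by the same root-count bound the total $\Fq$-dimension of the kernel cannot exceed $k-1$, so equality $\sum_i d_i = k-1$ holds for the pair $\f = (h_1,0,\dots,0)$ and $\g = \mathbf{0}$. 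Setting $h_l = 0$ for $l \geq 2$ ensures no additional constraints. The main subtlety, and the only nontrivial step, is keeping the distinction between the kernel of the operator evaluation on all of $\Fqm$ and its intersection with $\spannedBy{\vecbeta^{(i)}}$ straight; the achievability construction must choose roots inside the evaluation set so that these two dimensions coincide.
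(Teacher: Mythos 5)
Your proof is correct and takes a genuinely different route from the paper's. The paper reduces to the LLRS ($\intOrder=1$) case: it embeds LLRS codewords into the LILRS code via $\f = (f, 0, \dots, 0)$ (appending all-zero columns to the lifted matrices changes neither intersection nor sum dimensions), obtains the upper bound $\SumSubspaceDist \leq 2(\nTransmit-k+1)$ from the LLRS minimum distance, and then argues by contradiction for the matching lower bound. Your proof is instead a self-contained root-counting computation: you identify $\dim_{\Fq}\bigl(\txSpaceShot{i}(\f)\cap\txSpaceShot{i}(\g)\bigr)$ with $d_i = \dim_{\Fq}\set{K}_i(\f,\g)$, write $\SumSubspaceDist(\txSpaceVec(\f),\txSpaceVec(\g)) = 2(\nTransmit - \sum_i d_i)$, and apply the degree bound on $\Fq$-linearly independent roots of a nonzero skew polynomial across distinct conjugacy classes to any nonzero component $h_{l^\ast} = f_{l^\ast}-g_{l^\ast}$ to get $\sum_i d_i \leq \deg(h_{l^\ast}) \leq k-1$. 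What this buys you: the argument does not presuppose the LLRS minimum-distance fact as a black box, and it establishes the lower bound uniformly for all pairs of distinct codewords, whereas the paper's contradiction step (asserting that a sub-threshold pair must place neutral spaces in all but one shot) is not actually justified as written. Both arguments use essentially the same achievability construction --- a degree-$(k-1)$ interpolant vanishing on $k-1$ of the code locators, split across the shots so the roots are $\Fq$-independent per conjugacy class --- and your closing remark about keeping the kernel inside $\spannedBy{\vecbeta^{(i)}}$ is precisely the check needed to conclude $\sum_i d_i = k-1$ exactly. The paper's route, when the LLRS minimum distance is granted, is shorter; yours is more rigorous and elementary.
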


\begin{proof}
    Consider an \ac{LLRS} code $\liftedLinRS{\vecbeta,\a,\shots;\nTransmitVec,k}$ with minimum subspace distance $2(\nTransmit-\deletions+1)$.
    Let $\txSpace_1,\txSpace_2 \in \liftedLinRS{\vecbeta,\a,\shots;\nTransmitVec,k}$ be two subspaces having the minimum distance to each other, i.e. we have that $\SubspaceDist(\txSpace_1,\txSpace_2) = 2(\nTransmit-\deletions+1)$.
    Define the subspaces 
    \begin{equation}\label{eq:def_neutral_space}
        \shot{\myspace{Z}}{i}
        =\Rowspace{
            \begin{pmatrix}
                {\shot{\vecbeta}{i}}^\top & \0^\top
            \end{pmatrix}
        },
        \qquad \forall i=1,\dots,\shots
    \end{equation}
    and consider without loss of generality the tuples
    \begin{equation*}
        \txSpaceVec_1 = \left(\txSpace_1, \shot{\myspace{Z}}{2}, \dots, \shot{\myspace{Z}}{\shots}\right)
        \qquad \text{and} \qquad
        \txSpaceVec_2 = \left(\txSpace_2, \shot{\myspace{Z}}{2}, \dots, \shot{\myspace{Z}}{\shots}\right).
    \end{equation*}
    By definition we have that $\txSpaceVec_1,\txSpaceVec_2 \in \liftedIntLinRS{\vecbeta,\a,\shots,\intOrder;\nTransmitVec,k}$.
    The component spaces $\shot{\myspace{Z}}{2}, \dots, \shot{\myspace{Z}}{\shots}$ do not contribute to the sum-subspace distance and thus we have that $\SumSubspaceDist(\txSpaceVec_1,\txSpaceVec_2) = \SubspaceDist(\txSpace_1,\txSpace_2)=2(\nTransmit-k+1)$.
    Hence, we have that $\SumSubspaceDist(\liftedIntLinRS{\vecbeta,\a,\shots,\intOrder;\nTransmitVec,k}) \leq 2(\nTransmit-k+1)$.
    Now suppose that $\SumSubspaceDist(\liftedIntLinRS{\vecbeta,\a,\shots,\intOrder;\nTransmitVec,k}) < 2(\nTransmit-k+1)$.
    Then there must exists two tuples $\txSpaceVec_1,\txSpaceVec_2 \in \liftedIntLinRS{\vecbeta,\a,\shots,\intOrder;\nTransmitVec,k}$ that consist of $(\shots-1)$ component spaces of the form~\eqref{eq:def_neutral_space} and two spaces $\txSpace_1,\txSpace_2 \in \liftedLinRS{\vecbeta,\a,\shots;\nTransmitVec,k}$ (in the same shot position), which contracts that $\liftedLinRS{\vecbeta,\a,\shots;\nTransmitVec,k}$ has minimum subspace distance $2(\nTransmit-k+1)$.
    Therefore, we conclude that $\SumSubspaceDist\left(\liftedIntLinRS{\vecbeta,\a,\shots,\intOrder;\nTransmitVec,k}\right)=2\left(\nTransmit-k+1\right)$.
\end{proof}

For a LILRS code $\mycode{C}=\liftedIntLinRS{\vecbeta,\a,\shots,\intOrder;\nTransmitVec,k}$ we have that $N_i=\nTransmitShot{i}+\intOrder m$ for all $i=1,\dots,\shots$ and therefore the code rate is
\begin{equation}\label{def:code_rate_LILRS}
 R
 =\frac{\log_q(|\mycode{C}|)}{\sum_{i=1}^{\shots}\nTransmitShot{i}N_i}
 =\frac{\intOrder mk}{\sum_{i=1}^{\shots}\nTransmitShot{i}(\nTransmitShot{i}+\intOrder m)}.
\end{equation}
Note, that there exist other definitions of the code rate for multishot codes that are not considered in this paper (see e.g.~\cite[Section~IV.A]{nobrega2009multishot}).

The normalized weight $\normweight$ and the normalized distance $\normdist$ of an \ac{LILRS} code $\liftedIntLinRS{\vecbeta,\a,\shots,\intOrder;\nTransmitVec,k}$ is
\begin{equation}
    \normweight=\frac{\nTransmit}{N}
    \quad\text{and}\quad
    \normdist=\frac{\nTransmit-k+1}{\nTransmit}.
\end{equation} 
For $\nTransmitShot{i}=\nTransmitBar$ for all $i=1,\dots,\shots$ the code rate of an \ac{LILRS} code $\liftedIntLinRS{\vecbeta,\a,\shots,\intOrder;\nTransmitVec,k}$ in~\eqref{def:code_rate_LILRS} becomes
\begin{equation*}
  R=\frac{\shots\intOrder m k}{\nTransmit N}.
\end{equation*}

For $\nTransmitShot{i}=\nTransmitBar$ for all $i=1,\dots,\shots$ the Singleton-like upper bound for constant-shot-dimension sum-subspace codes~\cite[Theorem~7]{martinez2019reliable} evaluated for the parameters of \ac{LILRS} codes becomes
\begin{equation*}
  |\mycode{C}|\leq \gammaq^\shots q^{\intOrder m\left(\nTransmit-\left(\SumSubspaceDist(\mycode{C})/2-1\right)\right)}
  =\gammaq^\shots q^{\intOrder mk}
\end{equation*}
which shows, that a Singleton-like bound achieving code can be at most $\gammaq^\shots<3.5^\shots$ times larger than the corresponding \ac{LILRS} code.
Equivalently, the code rate is therefore upper-bounded by
\begin{equation*}
  R\leq\frac{\shots(\log_q(\gammaq)+\intOrder mk)}{\nTransmit N}.
\end{equation*}

The benefit of the decreased relative overhead due to interleaving is illustrated in Figure~\ref{fig:norm_dist_rate}.
The figure shows, that the rate loss due to the overhead introduced by the lifting is reduced significantly, even for small interleaving orders.
Further, we see that \ac{LILRS} codes approach the Singleton-like bound for sum-subspace codes (see~\cite{martinez2019reliable}) with increasing interleaving order while preserving the extension field degree $m$\footnote{The Singleton-like depends on the sum-dimension $N$ of the ambient space and thus changes in $\intOrder$.}.
\begin{figure}[ht!]
  \centering
  \definecolor{mycolor1}{rgb}{1.00000,0.00000,1.00000}%
\definecolor{mycolor2}{rgb}{0.00000,1.00000,1.00000}%

\newcommand{\setthr}[2]{\draw[thin,black] (axis cs:#1,1e-12) -- (axis cs:#1,3e-12);\node at (axis cs: #1,5e-12) {\tiny \textcolor{black}{#2}};}

\begin{tikzpicture}

\begin{axis}[%
xmin=0,
xmax=1,
xlabel={Code rate $R$},
compat=newest,
xmajorgrids,
% ymode=log,
ymin=0,
ymax=1.2,
yminorticks=false,
label style={anchor=near ticklabel, font=\footnotesize},
label style={inner sep=0}, 
ylabel={Normalized distance $\eta$},
ymajorgrids,
yminorgrids,
tick label style={font=\scriptsize},
legend style={at={(.98,.98)},anchor=north east,legend cell align=left,align=left,draw=white!15!black, font=\footnotesize},
mystyle]

\addplot [color=blue!80!black,
solid,
]
table[row sep=crcr]{%
0.004122 	 1.062500 \\
0.035372 	 1.000000 \\
0.066622 	 0.937500 \\
0.097872 	 0.875000 \\
0.129122 	 0.812500 \\
0.160372 	 0.750000 \\
0.191622 	 0.687500 \\
0.222872 	 0.625000 \\
0.254122 	 0.562500 \\
0.285372 	 0.500000 \\
0.316622 	 0.437500 \\
0.347872 	 0.375000 \\
0.379122 	 0.312500 \\
0.410372 	 0.250000 \\
0.441622 	 0.187500 \\
0.472872 	 0.125000 \\
0.504122 	 0.062500 \\
};
\addlegendentry{Singleton bound $\intOrder = 1$};

\addplot [color=green!80!black,
solid,
]
table[row sep=crcr]{%
0.002061 	 1.062500 \\
0.048936 	 1.000000 \\
0.095811 	 0.937500 \\
0.142686 	 0.875000 \\
0.189561 	 0.812500 \\
0.236436 	 0.750000 \\
0.283311 	 0.687500 \\
0.330186 	 0.625000 \\
0.377061 	 0.562500 \\
0.423936 	 0.500000 \\
0.470811 	 0.437500 \\
0.517686 	 0.375000 \\
0.564561 	 0.312500 \\
0.611436 	 0.250000 \\
0.658311 	 0.187500 \\
0.705186 	 0.125000 \\
0.752061 	 0.062500 \\
};
\addlegendentry{Singleton bound $\intOrder = 3$};

\addplot [color=orange!80!black,
solid,
]
table[row sep=crcr]{%
0.000749 	 1.062500 \\
0.057568 	 1.000000 \\
0.114386 	 0.937500 \\
0.171204 	 0.875000 \\
0.228022 	 0.812500 \\
0.284840 	 0.750000 \\
0.341658 	 0.687500 \\
0.398477 	 0.625000 \\
0.455295 	 0.562500 \\
0.512113 	 0.500000 \\
0.568931 	 0.437500 \\
0.625749 	 0.375000 \\
0.682568 	 0.312500 \\
0.739386 	 0.250000 \\
0.796204 	 0.187500 \\
0.853022 	 0.125000 \\
0.909840 	 0.062500 \\
};
\addlegendentry{Singleton bound $\intOrder = 10$};

% =======================================================

\addplot [color=blue!80!black,
dashed,
mark=diamond,
mark options={solid}
]
table[row sep=crcr]{%
0.000000 	 1.062500 \\
0.031250 	 1.000000 \\
0.062500 	 0.937500 \\
0.093750 	 0.875000 \\
0.125000 	 0.812500 \\
0.156250 	 0.750000 \\
0.187500 	 0.687500 \\
0.218750 	 0.625000 \\
0.250000 	 0.562500 \\
0.281250 	 0.500000 \\
0.312500 	 0.437500 \\
0.343750 	 0.375000 \\
0.375000 	 0.312500 \\
0.406250 	 0.250000 \\
0.437500 	 0.187500 \\
0.468750 	 0.125000 \\
0.500000 	 0.062500 \\
};
\addlegendentry{\ac{LLRS} ($\intOrder = 1$)~\cite{martinez2019reliable}};

\addplot [color=green!80!black,
dashed,
mark=asterisk,
mark options={solid}
]
table[row sep=crcr]{%
0.000000 	 1.062500 \\
0.046875 	 1.000000 \\
0.093750 	 0.937500 \\
0.140625 	 0.875000 \\
0.187500 	 0.812500 \\
0.234375 	 0.750000 \\
0.281250 	 0.687500 \\
0.328125 	 0.625000 \\
0.375000 	 0.562500 \\
0.421875 	 0.500000 \\
0.468750 	 0.437500 \\
0.515625 	 0.375000 \\
0.562500 	 0.312500 \\
0.609375 	 0.250000 \\
0.656250 	 0.187500 \\
0.703125 	 0.125000 \\
0.750000 	 0.062500 \\
};
\addlegendentry{\ac{LILRS} $\intOrder = 3$};

\addplot [color=orange!80!black,
dashed,
mark=x,
mark options={solid}
]
table[row sep=crcr]{%
0.000000 	 1.062500 \\
0.056818 	 1.000000 \\
0.113636 	 0.937500 \\
0.170455 	 0.875000 \\
0.227273 	 0.812500 \\
0.284091 	 0.750000 \\
0.340909 	 0.687500 \\
0.397727 	 0.625000 \\
0.454545 	 0.562500 \\
0.511364 	 0.500000 \\
0.568182 	 0.437500 \\
0.625000 	 0.375000 \\
0.681818 	 0.312500 \\
0.738636 	 0.250000 \\
0.795455 	 0.187500 \\
0.852273 	 0.125000 \\
0.909091 	 0.062500 \\
};
\addlegendentry{\ac{LILRS} $\intOrder = 10$};

\node (lambda_1) at (0.41,0.12) {$\avgnormweight$=0.5};
\node (lambda_2) at (0.63,0.12) {$\avgnormweight$=0.25};
\node (lambda_3) at (0.95,0.12) {$\avgnormweight$=0.09};

\end{axis}
\end{tikzpicture}%

% \tikzstyle{ILRSs2}     = [color=blue!80!black, dash pattern=on 2pt off 4pt on 6pt off 4pt,mark=o, mark options={solid}]
% \tikzstyle{ILRSs5}     = [color=green!80!black, dash pattern=on 2pt off 4pt on 6pt off 4pt,mark=triangle, mark options={solid}]
% \tikzstyle{ILRSs10}    = [color=orange!80!black, dash pattern=on 2pt off 4pt on 6pt off 4pt,mark=square, mark options={solid}]
  \caption{Normalized distance $\normdist$ over the code rate $R$ for an \ac{LILRS} code $\liftedIntLinRS{\vecbeta,\a,\shots=2,\intOrder;\nTransmitVec=(8,8),k=4}$ over $\F_{3^8}$ for interleaving orders $\intOrder\in\{1,3,10\}$ with the corresponding (average) normalized weight $\avgnormweight$. The case $\intOrder=1$ corresponds to the \ac{LLRS} codes from~\cite{martinez2019reliable}.}
  \label{fig:norm_dist_rate}
\end{figure}

\subsection{{\LOlike} Decoder for LILRS Codes}\label{subsec:LO_dec_LILRS}

Loidreau and Overbeck proposed the first efficient decoder for interleaved Gabidulin codes in the rank metric~\cite{loidreau2006decoding,overbeck2007public,overbeck2008structural}.
The main idea behind the Loidreau--Overbeck decoder is to compute an $\Fq$-linear transformation matrix from a decoding matrix (which depends on the code and the received word) that allows to transform the received word into a \emph{corrupted} part and a \emph{noncorrupted} part.
The noncorrupted part is then used to recover the message polynomials e.g. via Lagrange interpolation.

The concept of the Loidreau--Overbeck decoder was generalized to decoding \ac{ILRS} codes in the sum-rank metric~\cite{bartz2022fast}.
In the sum-rank-metric case an $\Fq$-linear transformation matrix is obtained for each block.

Based on the previous decoders for the rank and sum-rank metric, we derive a {\LOlike} decoder for \ac{LILRS} codes.
Similar to the original decoder and its sum-rank-metric analogue we set up a decoding matrix that allows to compute $\Fq$-linear transformation matrices for each shot $\rxSpaceShot{i}$. 
The obtained transformation matrices allow to compute particular bases for the received subspaces $\rxSpaceShot{i}$ that can be split into a basis for the \emph{corrupted} part (corresponding to the error space $\errSpaceShot{i}$) and a \emph{noncurrupted} part (i.e. a basis for $\txSpaceShot{i} \cap \rxSpaceShot{i}$) for each shot.  
The noncorrupted part is then used to reconstruct the message polynomials via Lagrange interpolation.
The qualitative structure of the tuple $\hat{\U}=(\hat{\U}^{(1)},\hat{\U}^{(2)},\dots,\hat{\U}^{(\shots)})$ containing transformed basis matrices is illustrated in Figure~\ref{fig:transformed_rxSpaces}.

\begin{figure}[ht!]
  \centering
  \pgfkeys{tikz/mymatrixenv/.style={decoration={brace},every left delimiter/.style={xshift=8pt},every right delimiter/.style={xshift=-8pt}}}

\pgfkeys{tikz/mymatrix/.style={matrix of math nodes,nodes in empty cells,left delimiter={(},right delimiter={)},inner sep=1pt,outer sep=1.5pt,column sep=8pt,row sep=8pt,nodes={minimum width=20pt,minimum height=10pt,anchor=center,inner sep=0pt,outer sep=0pt}}}

\pgfkeys{tikz/myarray/.style={matrix of math nodes,nodes in empty cells,inner sep=3pt,outer sep=3pt,column sep=8pt,row sep=8pt,nodes={minimum width=20pt,minimum height=10pt,anchor=center,inner sep=0pt,outer sep=0pt}}}

\pgfkeys{tikz/myarray_left/.style={matrix of math nodes,nodes in empty cells,left delimiter={(},inner sep=1pt,outer sep=2pt,column sep=8pt,row sep=8pt,nodes={minimum width=20pt,minimum height=10pt,anchor=center,inner sep=0pt,outer sep=0pt}}}

\pgfkeys{tikz/myarray_mid/.style={matrix of math nodes,nodes in empty cells,left delimiter={|},right delimiter={|},inner sep=3pt,outer sep=3pt,column sep=8pt,row sep=8pt,nodes={minimum width=20pt,minimum height=10pt,anchor=center,inner sep=0pt,outer sep=0pt}}}

\pgfkeys{tikz/myarray_right/.style={matrix of math nodes,nodes in empty cells,right delimiter={)},inner sep=3pt,outer sep=3pt,column sep=8pt,row sep=8pt,nodes={minimum width=20pt,minimum height=10pt,anchor=center,inner sep=0pt,outer sep=0pt}}}

\pgfkeys{tikz/mymatrixbrace/.style={decorate,thick}}

\tikzset{
	style green/.style={
    set fill color=green!50!black!60,draw opacity=0.4,
    set border color=green!50!black!60,fill opacity=0.1,
  % style green/.style={
  %   set fill color=green!50!lime!60,draw opacity=0.4,
  %   set border color=green!50!lime!60,fill opacity=0.1,
  },
  style cyan/.style={
    set fill color=cyan!90!blue!60, draw opacity=0.4,
    set border color=blue!70!cyan!30,fill opacity=0.1,
  },
  style orange/.style={
    set fill color=orange!90, draw opacity=0.8,
    set border color=orange!90, fill opacity=0.3,
  },
  style brown/.style={
    set fill color=brown!70!orange!40, draw opacity=0.4,
    set border color=brown, fill opacity=0.3,
  },
  style purple/.style={
    set fill color=violet!90!pink!20, draw opacity=0.5,
    set border color=violet, fill opacity=0.3,    
  },
  style red/.style={
    set fill color=red!90!pink!20, draw opacity=0.5,
    set border color=red, fill opacity=0.3,    
  },
  kwad/.style={
    above left offset={-0.3, 0.28},
    below right offset={0.3, -0.28},
    #1
  },
  pion/.style={
    above left offset={-0.3, 0.4},
    below right offset={0.3, -0.01},
    #1
  },
  set fill color/.code={\pgfkeysalso{fill=#1}},
  set border color/.style={draw=#1}
}
\[
	\hat{\U}=
    \scalebox{0.85}{
    \begin{tikzpicture}[baseline={-0.5ex}, mymatrixenv]
    \matrix [myarray_left,inner sep=8pt, ampersand replacement=\&] (RX1) {
	    \tikzmarkin[kwad=style green]{corr1}\phantom{1} \& \phantom{1} \& \phantom{1} \& \phantom{1}
      \\
      \phantom{1} \& \phantom{1} \& \phantom{1} \& \phantom{1}\tikzmarkend{corr1}
      \\ 
      \tikzmarkin[kwad=style red]{noncorr1}\phantom{1} \& \phantom{1} \& \phantom{1} \& \phantom{1}\tikzmarkend{noncorr1}
	    \\    
    };

    \matrix [myarray, inner sep=8pt, right=of RX1, xshift=-35pt, ampersand replacement=\&] (RX2) {
      \tikzmarkin[kwad=style green]{corr2}\phantom{1} \& \phantom{1} \& \phantom{1} \& \phantom{1}\tikzmarkend{corr2}
      \\
      \tikzmarkin[kwad=style red]{noncorr2}\phantom{1} \& \phantom{1} \& \phantom{1} \& \phantom{1}\tikzmarkend{noncorr2}
      \\ 
      % \phantom{1} & \phantom{1} & \phantom{1} & \phantom{1}\tikzmarkend{noncorr2}
      % \\    
    };

    \matrix [myarray_right, inner sep=8pt, right=of RX2, ampersand replacement=\&] (RX3) {
      \tikzmarkin[kwad=style green]{corr3}\phantom{1} \& \phantom{1} \& \phantom{1} \& \phantom{1}
      \\
      \phantom{1} \& \phantom{1} \& \phantom{1} \& \phantom{1}\tikzmarkend{corr3}
      \\ 
      \tikzmarkin[kwad=style red]{noncorr3}\phantom{1} \& \phantom{1} \& \phantom{1} \& \phantom{1}\tikzmarkend{noncorr3}
      \\    
    };

    \draw ([xshift=-5pt,yshift=5pt]RX1.south east) node {,};
    \draw ([xshift=-35pt,yshift=5pt]RX3.south west) node (comma) {,};
    \draw node[right=of comma, xshift=1pt] {,};
    \draw node[right=of RX2, xshift=-25pt] {$\dots$};

    \end{tikzpicture}
    }
\]
  \caption{Qualitative illustration of the structure of the tuple $\hat{\U}$ containing transformed basis matrices. The green parts form a basis for the non-corrupted spaces whereas the red parts indicate a basis for the erroneous spaces.
  The green part is used to reconstruct the message polynomials.
  }
  \label{fig:transformed_rxSpaces}
\end{figure} 

The main motivation to derive the {\LOlike} decoder is to obtain an upper bound on the decoding failure probability that incorporates the distribution of the error spaces in $\errSpaceVec$.
In Section~\ref{subsec:decodingLILRS} we will reduce the interpolation-based decoder for \ac{LILRS} codes to the {\LOlike} decoder in order to obtain an upper bound on the decoding failure probability of the interpolation-based probabilistic unique decoder.

Up to our knowledge, this is the first {\LOlike} decoding scheme in the (sum-) subspace metric. 
It includes lifted interleaved Gabidulin (or interleaved Kötter--Kschischang) codes~\cite{bartz2018efficient,bartz2017algebraic} as special case for $\shots=1$.
Hence, the results give a strict upper bound\footnote{In~\cite{bartz2018efficient,bartz2017algebraic} a heuristic upper bound on the decoding failure probability, which does \emph{not} incorporate the distribution of the error spaces, was derived.} on the decoding failure probability of the decoders in~\cite{bartz2018efficient,bartz2017algebraic}.

Suppose we transmit the tuple of subspaces 
\begin{equation}
 \txSpaceVec(\f) = \left(\txSpaceShot{1}(\f), \dots, \txSpaceShot{\shots}(\f)\right)\in
 \liftedIntLinRS{\vecbeta,\a,\shots,\intOrder;\nTransmitVec,k}
\end{equation}
over an $\shots$-shot operator channel with overall $\insertions$ insertions and $\deletions$ deletions and receive the tuple of subspaces
\begin{equation}
 \rxSpaceVec = \left(\rxSpaceShot{1}, \dots, \rxSpaceShot{\shots}\right) \in \Grassm{\Nvec, \nReceiveVec},
\end{equation}
where the received subspaces $\rxSpaceShot{i}$ are as defined in~\eqref{eq:comp_rec_sapces} and $\nReceiveVec = \nTransmitVec + \insertionsVec - \deletionsVec$ (see~\eqref{eq:part_rec_space}). 
Define the vectors
\begin{equation}\label{eq:def_sum_sub_rec_vecs}
  \vecxi=\left(\vecxi^{(1)} \mid \vecxi^{(2)} \mid\dots\mid \vecxi^{(\shots)}\right)\in\Fqm^{\nReceive}
  \quad\text{and}\quad
  \u_j=\left(\u_j^{(1)} \mid \u_j^{(2)} \mid\dots\mid \u_j^{(\shots)}\right)\in\Fqm^{\nReceive},
\end{equation}
for all $j=1,\dots,\intOrder$ and consider the matrix
\begin{equation}\label{eq:LO_matrix_LILRS}
\LODecMat := 
\begin{pmatrix}
\Lambda_{\nTransmit-\deletions-1}(\vecxi)_{\a} \\
\Lambda_{\nTransmit-\deletions-k}\left(\u_1\right)_{\a} \\
\vdots \\
\Lambda_{\nTransmit-\deletions-k}\left(\u_\intOrder\right)_{\a} \\
\end{pmatrix}
\in \Fqm^{((\intOrder+1)(\nTransmit-\deletions)-\intOrder k -1) \times \nReceive}.
\end{equation}

\begin{lemma}[Transformed of Decoding Matrix]\label{lem:properties_LODecMat_LILRS}
  Consider the transmission of a tuple of subspaces $\txSpaceVec(\f) \in \liftedIntLinRS{\vecbeta,\a,\shots,\intOrder;\nTransmitVec,k}$ over an $\shots$-shot operator channel with overall $\insertions$ insertions and $\deletions$ deletions and receive the tuple of subspaces $\rxSpaceVec$.
  Let $\LODecMat$ be as in~\eqref{eq:LO_matrix_LILRS}.
  Then there exist invertible matrices $\W^{(i)}\in\Fq^{\nReceiveShot{i}\times\nReceiveShot{i}}$ such that for $\W=\diag(\W^{(1)},\W^{(2)},\dots,\W^{(\shots)})$ we have that
   \begin{equation}\label{eq:def_L_bar}
     \bar{\LODecMat}=\left(\bar{\LODecMat}^{(1)},\dots,\bar{\LODecMat}^{(\shots)}\right)
     =\LODecMat\cdot\W
    \in \Fqm^{((\intOrder+1)(\nTransmit-\deletions)-\intOrder k -1) \times \nReceive}
   \end{equation}
   consists of component matrices of the form
   \begin{equation*}
     \bar{\LODecMat}^{(i)}=
     \begin{pmatrix}
      \opVandermonde{\nTransmit-\deletions-1}{\bar{\vecxi}_1^{(i)}}{a_i} & \opVandermonde{\nTransmit-\deletions-1}{\bar{\vecxi}_2^{(i)}}{a_i} & \0
      \\ 
      \0 & \opVandermonde{\nTransmit-\deletions-k}{\widetilde{\e}_1^{(i)}}{a_i} & \opVandermonde{\nTransmit-\deletions-k}{\hat{\e}_1^{(i)}}{a_i}
      \\
      \vdots & \vdots & \vdots
      \\ 
      \0 & \opVandermonde{\nTransmit-\deletions-k}{\widetilde{\e}_\intOrder^{(i)}}{a_i} & \opVandermonde{\nTransmit-\deletions-k}{\hat{\e}_\intOrder^{(i)}}{a_i}
     \end{pmatrix}
     \in \Fqm^{((\intOrder+1)(\nTransmit-\deletions)-\intOrder k -1) \times \nReceiveShot{i}}
   \end{equation*}
   where $\bar{\vecxi}_1^{(i)}\in\Fqm^{\nTransmitShot{i}-\deletionsShot{i}}$, $\bar{\vecxi}_2^{(i)},\widetilde{\e}_l^{(i)}\in\Fqm^{\rankErrShot{i}},\hat{\e}_l^{(i)}\in\Fqm^{\deviationsShot{i}}$ and
   \begin{alignat*}{3}
    \rk_{q}(\bar{\vecxi}_1^{(i)})&=\nTransmitShot{i}-\deletionsShot{i}, \qquad &&\rk_{q}(\bar{\vecxi}_2^{(i)})&&=\rankErrShot{i}
    \\ 
    \rk_{q}(\widetilde{\e}_l^{(i)})&=\rankErrShot{i}, &&\rk_{q}(\hat{\e}_l^{(i)})&&=\deviationsShot{i}
    \\ 
    \rk_{q}((\widetilde{\e}_l^{(i)}\mid\hat{\e}_l^{(i)}))&=\rankErrShot{i}+\deviationsShot{i}, &&
   \end{alignat*}
   for all $i=1,\dots,\shots$ such that $\rk_{q^m}(\LODecMat)=\rk_{q^m}(\bar{\LODecMat})$.
\end{lemma}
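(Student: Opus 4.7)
The plan is to decompose each received space $\rxSpaceShot{i}$ along the channel-induced direct sum $\rxSpaceShot{i}=\delOp{\nTransmitShot{i}-\deletionsShot{i}}(\txSpaceShot{i})\oplus\errSpaceShot{i}$ and to refine this decomposition by isolating the subspace of $\errSpaceShot{i}$ whose locator coordinates vanish. The refined basis yields the block-diagonal $\Fq$-invertible column transformation $\W=\diag(\W^{(1)},\ldots,\W^{(\shots)})$; a subsequent $\Fqm$-row reduction, which preserves $\Fqm$-rank, then produces the asserted block-triangular shape of $\bar{\LODecMat}^{(i)}$.

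First, for each shot $i$, I would fix an $\Fq$-basis of $\delOp{\nTransmitShot{i}-\deletionsShot{i}}(\txSpaceShot{i})\subseteq\txSpaceShot{i}$ of size $\nTransmitShot{i}-\deletionsShot{i}$, and then split $\errSpaceShot{i}$ according to the projection onto the locator block of $\Fq^{N_i}$: the kernel has some $\Fq$-dimension $\deviationsShot{i}$ yielding a ``pure-deviation'' sub-basis, while a chosen complement has dimension $\rankErrShot{i}=\insertionsShot{i}-\deviationsShot{i}$ and a sub-basis whose locator parts are $\Fq$-linearly independent by construction. Concatenating these three sub-bases and writing them in terms of the original basis of $\rxSpaceShot{i}$ defines $\W^{(i)}$. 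Because $\W^{(i)}\in\Fq^{\nReceiveShot{i}\times\nReceiveShot{i}}$ has entries in the fixed field of $\aut$, the identity $\opVandermonde{d}{\x\,\W^{(i)}}{a_i}=\opVandermonde{d}{\x}{a_i}\,\W^{(i)}$ holds for all $d$ and all $\x\in\Fqm^{\nReceiveShot{i}}$, so $\LODecMat^{(i)}\W^{(i)}$ partitions naturally into nine $\Lambda$-blocks. Since the first column group lies in $\txSpaceShot{i}$, its $l$-th message coordinate equals $\opev{f_l}{\bar{\vecxi}_1^{(i)}}{a_i}$; since the third column group has vanishing locator, the top-right block is $\0$. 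The $\Fq$-rank conditions on $\bar{\vecxi}_1^{(i)}$ and $\bar{\vecxi}_2^{(i)}$ follow from the injectivity of the locator projection on the respective summands, while the conditions on $\widetilde{\e}_l^{(i)}$ and $\hat{\e}_l^{(i)}$ encode the genericity of the error components that later feeds into the failure-probability analysis.

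The only nontrivial step, and the main obstacle, is zeroing out the lower-left blocks $\opVandermonde{\nTransmit-\deletions-k}{\opev{f_l}{\bar{\vecxi}_1^{(i)}}{a_i}}{a_i}$ without disturbing the rest of the structure. Here I would invoke the identity
\begin{equation*}
  \opexp{a_i}{\opev{f_l}{\bar{\vecxi}_1^{(i)}}{a_i}}{j}
  =\sum_{m=0}^{k-1}\aut^j(f_{l,m})\,\opexp{a_i}{\bar{\vecxi}_1^{(i)}}{m+j},
\end{equation*}
which follows from the $\aut^j$-semilinearity $\opexp{a}{cb}{j}=\aut^j(c)\opexp{a}{b}{j}$ combined with the composition rule $\opexp{a}{\opexp{a}{b}{m}}{j}=\opexp{a}{b}{m+j}$ derived from~\eqref{eq:def_op_exp}. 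Because $j\leq\nTransmit-\deletions-k-1$ and $m\leq k-1$, the index $m+j$ stays in $[0,\nTransmit-\deletions-2]$, so every such row lies in the $\Fqm$-row span of the top-left block $\opVandermonde{\nTransmit-\deletions-1}{\bar{\vecxi}_1^{(i)}}{a_i}$. A single shot-uniform $\Fqm$-row reduction, depending only on the coefficients of $\f$, therefore clears the lower-left blocks in all $\bar{\LODecMat}^{(i)}$ simultaneously while leaving the top block and the middle and right column groups intact. The identity $\rk_{q^m}(\bar{\LODecMat})=\rk_{q^m}(\LODecMat)$ is immediate since $\W$ is $\Fq$-invertible and the final row reduction preserves $\Fqm$-rank.
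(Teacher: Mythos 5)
Your proposal takes the same route as the paper: build $\W^{(i)}$ from the channel-induced direct sum $\rxSpaceShot{i}=\delOp{\nTransmitShot{i}-\deletionsShot{i}}(\txSpaceShot{i})\oplus\errSpaceShot{i}$, further split the error part according to whether the locator coordinate vanishes, commute $\W^{(i)}$ through the $\Lambda$-operator by $\Fq$-linearity, and clear the lower-left blocks with an $\Fqm$-row reduction. Your explicit identity $\opexp{a_i}{\opev{f_l}{\bar{\vecxi}_1^{(i)}}{a_i}}{j}=\sum_{m}\aut^j(f_{l,m})\,\opexp{a_i}{\bar{\vecxi}_1^{(i)}}{m+j}$ together with the degree bound $m+j\le\nTransmit-\deletions-2$ is precisely the content behind the paper's terse phrase ``performing $\Fqm$-linear row operations,'' and you make it cleaner than the paper does.

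Two points need fixing. First, the row reduction does \emph{not} leave the middle column group intact: it subtracts $\opev{f_l}{\bar{\vecxi}_2^{(i)}}{a_i}$ there as well, turning $\widetilde{\u}_l^{(i)}$ into the residual $\widetilde{\e}_l^{(i)}=\widetilde{\u}_l^{(i)}-\opev{f_l}{\bar{\vecxi}_2^{(i)}}{a_i}$. That substitution is exactly what produces the $\widetilde{\e}_l^{(i)}$ appearing in the asserted $\bar{\LODecMat}^{(i)}$, so if the middle column group were truly untouched the result would not match the lemma. Second, the $\Fq$-rank claims for $\widetilde{\e}_l^{(i)}$, $\hat{\e}_l^{(i)}$ and $(\widetilde{\e}_l^{(i)}\mid\hat{\e}_l^{(i)})$ are deterministic facts, not a ``genericity'' assumption feeding the failure-probability analysis: they must hold for \emph{every} channel realization, and the paper derives them by contradiction from $\txSpaceShot{i}\cap\errSpaceShot{i}=\{\0\}$ together with $\widetilde{\U}^{(i)}$ being a full-$\Fq$-rank basis matrix. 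Conflating them with the later probabilistic analysis of $\rk_{q^m}(\bar{\Z})$ misplaces which part of the argument is structural and which part is random; you should instead argue each rank condition directly as the paper does.
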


The proof of Lemma~\ref{lem:properties_LODecMat_LILRS} is based on particular bases for the received spaces in $\rxSpaceVec$ and properties of the intersection and error spaces in $\txSpaceVec \cap \rxSpaceVec$ and $\errSpaceVec$, respectively, and can be found in Appendix~\ref{app:proof_properties_LODecMat_LILRS}.

\begin{lemma}[Properties of Decoding Matrix]\label{lem:right_kernel_prop_LILRS}
  Consider the notation and definitions as in Lemma~\ref{lem:properties_LODecMat_LILRS} and define the vectors
\begin{equation*}
  \bar{\e}_l\defeq\left((\widetilde{\e}_{l}^{(1)}\mid\hat{\e}_{l}^{(1)})\mid\dots\mid(\widetilde{\e}_{l}^{(\shots)}\mid\hat{\e}_{l}^{(\shots)})\right)\in\Fqm^{\insertions},
  \quad\forall l=1,\dots\intOrder,
\end{equation*}
and the matrix 
\begin{equation}
 \bar{\Z}=\left(\bar{\Z}^{(1)}\mid\dots\mid\bar{\Z}^{(\shots)}\right)\defeq
   \left(
   \begin{array}{c}
    \Lambda_{\nTransmit-\deletions-k}(\bar{\e}_1)_{\vec{a}}    
    \\ 
    \vdots 
    \\ 
    \Lambda_{\nTransmit-\deletions-k}(\bar{\e}_\intOrder)_{\vec{a}}
   \end{array}
   \right)\in\Fqm^{\intOrder(\nTransmit-\deletions-k)\times\insertions}.
   \label{eq:def_Z}
\end{equation}
Let $\h = (\h^{(1)} \mid \h^{(2)} \mid \dots \mid \h^{(\shots)}) \in \Fqm^{\nReceive}$ with $\h^{(i)} \in \Fqm^{\nReceiveShot{i}}$ for all $i=1,\dots,\shots$ be a nonzero vector in the right kernel of the decoding matrix $\LODecMat$ and suppose that $\bar{\Z}$ has $\Fqm$-rank $\insertions$. Then:
\begin{enumerate}
  \item \label{itm:LO_LILRS_lemma_rank_L} We have $\rk_{q^m}(\LODecMat) = \nReceive-1$.
  
  \item \label{itm:LO_LILRS_lemma_rank_of_kernel_element} We have $\rk_{q}(\h^{(i)})=\nTransmitShot{i}-\deletionsShot{i}$ for all $i=1,\dots,\ell$, i.e., $\h$ has sum-rank weight $\SumRankWeight^{(\nReceiveVec)}(\h) = \nTransmit-\deletions$. 
  
  \item \label{itm:LO_LILRS_lemma_Ti_matrices} There are invertible matrices $\T^{(i)} \in \Fq^{\nReceiveShot{i} \times \nReceiveShot{i}}$, for all $i=1,\dots,\ell$, such that the last (rightmost) $\insertionsShot{i}$ positions of $\h^{(i)} \T^{(i)}$ are zero.
  
  \item \label{itm:LO_LILRS_lemma_EDi_zero} The first (upper) $\nTransmitShot{i}-\deletionsShot{i}$ rows of $\hat{\U}^{(i)} = \left(\T^{(i)}\right)^{-1}\U^{(i)}$ form a basis for the non-corrupted received space $\rxSpaceShot{i} \cap \txSpaceShot{i}$ for all $i=1,\dots,\shots$.
  
  \item \label{itm:LO_LILRS_reconstruct_polynomials} The $l$-th message polynomial $f_l$ can be uniquely reconstructed from the transformed basis $\hat{\U}^{(i)}$ for the received space $\rxSpaceShot{i}$ by Lagrange interpolation on the first $\nTransmitShot{i} - \deletionsShot{i}$ rows of $\hat{\U}^{(i)}$ for all $l=1,\dots,\intOrder$ and $i=1,\dots,\shots$.  
\end{enumerate}
\end{lemma}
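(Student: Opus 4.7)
The plan is to move to the transformed coordinates of Lemma~\ref{lem:properties_LODecMat_LILRS} and analyze the right kernel of $\bar{\LODecMat}=\LODecMat\,\W$. Since the block-diagonal matrix $\W$ is invertible over $\Fq$, we have $\rk_{q^m}(\LODecMat)=\rk_{q^m}(\bar{\LODecMat})$, the equation $\LODecMat\,\h^\top=\0$ is equivalent to $\bar{\LODecMat}\,\bar{\h}^\top=\0$ under $\h^\top=\W\,\bar{\h}^\top$, and $\rk_q(\h^{(i)})=\rk_q(\bar{\h}^{(i)})$ shot-wise. I would then split $\bar{\h}^{(i)}=(\bar{\h}^{(i)}_1\mid\bar{\h}^{(i)}_2\mid\bar{\h}^{(i)}_3)$ according to the column blocks of $\bar{\LODecMat}^{(i)}$ of widths $\nTransmitShot{i}-\deletionsShot{i}$, $\rankErrShot{i}$, $\deviationsShot{i}$. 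The $\intOrder$ lower row blocks of $\bar{\LODecMat}\,\bar{\h}^\top=\0$ collapse exactly to the single constraint $\bar{\Z}\,\bar{\h}_{23}^\top=\0$, where $\bar{\h}_{23}$ concatenates the pieces $(\bar{\h}^{(i)}_2,\bar{\h}^{(i)}_3)$ across shots into a vector of length $\insertions$. The hypothesis that $\bar{\Z}$ has full column rank $\insertions$ then forces $\bar{\h}_{23}=\0$. Substituting back, the top row block becomes $\Lambda_{\nTransmit-\deletions-1}(\bar{\vecxi})_{\a}\,\bar{\h}_1^\top=\0$ with $\bar{\vecxi}=(\bar{\vecxi}_1^{(1)}\mid\dots\mid\bar{\vecxi}_1^{(\shots)})$ of sum-rank weight $\nTransmit-\deletions$; since the $a_i$ are pairwise $\aut$-distinct, the rank property of generalized Moore matrices gives $\Fqm$-rank $\nTransmit-\deletions-1$, so its right kernel is one-dimensional, establishing Claim~\ref{itm:LO_LILRS_lemma_rank_L}.

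For Claim~\ref{itm:LO_LILRS_lemma_rank_of_kernel_element} I plan to argue by contradiction: if $\rk_q(\bar{\h}_1^{(i)})<\nTransmitShot{i}-\deletionsShot{i}$ for some shot~$i$, apply a per-shot invertible $\Fq$-change of basis to the coordinates of $\bar{\h}_1^{(i)}$, absorbing it into $\bar{\vecxi}_1^{(i)}$ via the $\Fq$-linearity of $\Lambda_d(\cdot)_{\a}$, so that the $\Fq$-dependent coordinates of $\bar{\h}_1^{(i)}$ become zero. Discarding those positions yields a nonzero vector in the right kernel of a smaller generalized Moore matrix that now has strictly more rows than columns, hence full column rank, which is a contradiction. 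Claim~\ref{itm:LO_LILRS_lemma_Ti_matrices} is then immediate: since $\rk_q(\h^{(i)})=\nTransmitShot{i}-\deletionsShot{i}=\nReceiveShot{i}-\insertionsShot{i}$, one can always pick an invertible $\T^{(i)}\in\Fq^{\nReceiveShot{i}\times\nReceiveShot{i}}$ that sends an $\Fq$-basis of the entries of $\h^{(i)}$ into the first $\nTransmitShot{i}-\deletionsShot{i}$ coordinates.

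Claim~\ref{itm:LO_LILRS_lemma_EDi_zero} I would derive from the canonical choice $\T^{(i)}=(\W^{(i)})^{-\top}$, which satisfies Claim~\ref{itm:LO_LILRS_lemma_Ti_matrices} trivially because then $\h^{(i)}\,\T^{(i)}=\bar{\h}^{(i)}$ has the required zero pattern. Under this choice $\hat{\U}^{(i)}=(\W^{(i)})^\top\U^{(i)}$ is exactly the change of basis of $\rxSpaceShot{i}$ identified in the proof of Lemma~\ref{lem:properties_LODecMat_LILRS}, which splits $\rxSpaceShot{i}$ into a basis of $\rxSpaceShot{i}\cap\txSpaceShot{i}$ (first $\nTransmitShot{i}-\deletionsShot{i}$ rows) and a basis of the error space (last $\insertionsShot{i}$ rows). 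Any other valid $\T^{(i)}$ differs from the canonical one by right-multiplication with a block-lower-triangular $\Fq$-matrix whose top-left block is invertible, which preserves the $\Fq$-span of the first $\nTransmitShot{i}-\deletionsShot{i}$ rows of $\hat{\U}^{(i)}$, so the claim does not depend on the specific choice of $\T^{(i)}$. Finally, for Claim~\ref{itm:LO_LILRS_reconstruct_polynomials}, each of these $\sum_i(\nTransmitShot{i}-\deletionsShot{i})=\nTransmit-\deletions\geq k$ basis rows lies in $\txSpaceShot{i}$ and therefore has the form $(\xi,\opev{f_1}{\xi}{a_i},\dots,\opev{f_\intOrder}{\xi}{a_i})$, with the $\xi$'s being $\Fq$-linearly independent inside each shot and the $a_i$ lying in distinct conjugacy classes; Lemma~\ref{lem:int_poly_op_ev} then uniquely recovers each $f_l\in\SkewPolyringZeroDer_{<k}$.

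The main obstacle is Claim~\ref{itm:LO_LILRS_lemma_rank_of_kernel_element}: beyond the one-dimensional kernel count of Claim~\ref{itm:LO_LILRS_lemma_rank_L}, it requires that the unique-up-to-scalar kernel vector of the thin generalized Moore matrix have maximal $\Fq$-rank \emph{in every shot individually}. The reduction to a smaller generalized Moore matrix with a strict excess of rows over columns, made possible by $\Fq$-linearity of $\Lambda_d(\cdot)_{\a}$ in its input, is the key ingredient that one must verify does not degenerate under the per-shot basis change.
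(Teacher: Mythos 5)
Your plan is sound and arrives at all five claims correctly, and the overall scaffolding---passing to the transformed matrix $\bar{\LODecMat}=\LODecMat\W$, reading off a block structure with $\bar{\Z}$ in the ``error'' columns, and concluding that full column rank of $\bar{\Z}$ kills the error coordinates of the kernel vector---matches the paper's proof in the appendix. You deviate in two places, both of which hold up. For Claim~\ref{itm:LO_LILRS_lemma_rank_of_kernel_element} the paper does \emph{not} argue by shrinking a generalized Moore matrix: it observes that the first $\nTransmit-\deletions-1$ rows of $\bar{\LODecMat}$ are (up to column permutation and trailing zeros) a generator matrix of an $[\nTransmit-\deletions+\rankErr,\,\nTransmit-\deletions-1]$ \ac{LRS} code whose dual is \ac{MSRD} with minimum sum-rank distance $\nTransmit-\deletions$, so every nonzero kernel vector has sum-rank weight at least $\nTransmit-\deletions$, which together with the per-shot upper bound $\rk_q(\bar\h^{(i)})\le\nTransmitShot{i}-\deletionsShot{i}$ forces equality shot-by-shot. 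Your direct route---column-reduce $\bar\h_1^{(i)}$ over $\Fq$, absorb the change of basis into $\bar\vecxi_1^{(i)}$, discard the zeroed positions, and note the restricted Moore matrix has $\le\nTransmit-\deletions-1$ columns at $\Fq$-independent points per shot, hence full column rank and trivial kernel---is a correct in-place re-derivation of that same MSRD fact; it is more elementary but longer (one small wording nit: you only need ``at least as many rows as columns'', not ``strictly more''; removing a single column already gives a full-rank square Moore matrix). For Claim~\ref{itm:LO_LILRS_lemma_EDi_zero} the paper sets $\D^{(i)}=(\T^{(i)-1})^\top$ and argues that the column span of the last $\insertionsShot{i}$ columns of $\LODecMat^{(i)}\D^{(i)}$ must coincide with that of $\bar{\LODecMat}^{(i)}$ (otherwise $\h\T$ could not lie in the appropriate kernel), so those columns again index the error space. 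Your alternative---take the canonical $\T^{(i)}=(\W^{(i)})^{-\top}$ for which $\hat\U^{(i)}=\widetilde\U^{(i)}$ by the proof of Lemma~\ref{lem:properties_LODecMat_LILRS}, then observe any other valid $\T^{(i)}$ differs by right multiplication with an invertible block-lower-triangular $\Fq$-matrix, which leaves the $\Fq$-rowspan of the top $\nTransmitShot{i}-\deletionsShot{i}$ rows unchanged---is cleaner and avoids the slightly delicate column-span comparison. The remaining claims (Claim~\ref{itm:LO_LILRS_lemma_rank_L} via the one-dimensional kernel, Claim~\ref{itm:LO_LILRS_lemma_Ti_matrices} via $\Fq$-column echelon form, Claim~\ref{itm:LO_LILRS_reconstruct_polynomials} via Lagrange interpolation from Lemma~\ref{lem:int_poly_op_ev}) match the paper.
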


We now provide a sketch of the proof.
The full proof of Lemma~\ref{lem:right_kernel_prop_LILRS} can be found in Appendix~\ref{app:proof_right_kernel_prop_LILRS}.

\begin{proofsketch}
    \begin{itemize}
      \item[--] Ad~\ref{itm:LO_LILRS_lemma_rank_L}):
        The matrix $\bar{\LODecMat}$ can be rearranged into an upper block-triangular matrix whose rank is determined by the two blocks on the diagonal, which have rank $\nTransmit-\deletions-1$ and $\insertions$ and thus imply that the $\Fqm$-rank of the whole matrix equals $\nTransmit-\deletions-1+\insertions = \nReceive-1$. 
        The statement follows since by Lemma~\ref{lem:properties_LODecMat_LILRS} we have that $\rk_{q^m}(\LODecMat)=\rk_{q^m}(\bar{\LODecMat})$.

      \item[--] Ad~\ref{itm:LO_LILRS_lemma_rank_of_kernel_element}):
        By assumption the $\Fqm$-rank of $\bar{\Z}$ equals $\insertions$ which implies that $\rk_{q^m}(\shot{\bar{\Z}}{i})=\insertionsShot{i}$ for all $i=1,\dots,\shots$.
        Thus, for any $\bar{\h} \in \rker(\bar{\LODecMat}) \setminus \{\0\}$ the $\insertionsShot{i}$ rightmost entries of $\shot{\bar{\h}}{i}$ must be zero which implies that $\rk_q(\shot{\bar{\h}}{i}) \leq \nTransmitShot{i}-\deletionsShot{i}$ for all $i=1,\dots,\shots$.
        On the other hand $\bar{\h}$ is contained in a code with minimum sum-rank distance $\nTransmit-\deletions$ which is the dual of the code spanned by the first $\nTransmit-\deletions-1$ rows of $\bar{\LODecMat}$.
        The statement follows by combining these two facts.

      \item[--] Ad~\ref{itm:LO_LILRS_lemma_Ti_matrices}): 
        By~\ref{itm:LO_LILRS_lemma_rank_of_kernel_element}) the $\Fq$-rank of $\shot{\h}{i} \in \Fqm^{\nReceiveShot{i}}$ equals $\nTransmitShot{i}-\deletionsShot{i}$ for all $i=1,\dots,\shots$.
        Hence there exist matrices $\shot{\T}{i} \in \Fq^{\nReceiveShot{i} \times \nReceiveShot{i}}$ such that the $\nReceiveShot{i}-(\nTransmitShot{i}-\deletionsShot{i})=\insertionsShot{i}$ rightmost entries of $\shot{\h}{i} \shot{\T}{i}$ are equal to zero.

      \item[--] Ad~\ref{itm:LO_LILRS_lemma_EDi_zero}): 
         Define the matrices $\D^{(i)}=\left(\T^{(i)-1}\right)^\top$ for all $i=1,\dots,\shots$ and observe that $\h\T\in\rker(\LODecMat\cdot\diag(\D^{(1)},\dots,\D^{(\shots)}))$.
         By using the $\Fqm$-rank condition on $\bar{\Z}$ one can show that the span of the $\insertionsShot{i}$ rightmost columns of the matrices $\shot{\bar{\LODecMat}}{i}$ and $\shot{\LODecMat}{i}\shot{\D}{i}$ coincides.
         These columns correspond to the insertions which in turn implies that the last $\insertionsShot{i}$ rows of $\shot{\hat{\U}}{i}=(\shot{\D}{i})^\top \shot{\U}{i}$ form a basis for $\errSpaceShot{i}$.
         The statement follows since by the definition of the operator channel we have that $\txSpaceShot{i} \cap \errSpaceShot{i} = \{\0\}$ for all $i=1,\dots,\shots$.

      \item[--] Ad~\ref{itm:LO_LILRS_reconstruct_polynomials}):
        By~\ref{itm:LO_LILRS_lemma_EDi_zero}) the first $\nTransmitShot{i}-\deletionsShot{i}$ rows of the transformed basis $\shot{\hat{\U}}{i}$ form a basis for the noncorrupted intersection space $\txSpaceShot{i} \cap \rxSpaceShot{i}$ for all $i=1,\dots,\shots$.
        Due to the $\Fq$-linearity of the generalized operator evaluation for a fixed evaluation parameter (i.e. per shot), the message polynomials can be reconstructed by constructing the corresponding Lagrange interpolation polynomials (see Figure~\ref{fig:transformed_rxSpace}).
    \end{itemize}
\end{proofsketch}

The complete procedure for the {\LOlike} decoder for \ac{LILRS} codes is given in Algorithm~\ref{alg:LO_LILRS}.
The structure of the transformed basis matrices $\hat{\U}^{(i)}$ for all $i=1,\dots,\shots$ is illustrated in Figure~\ref{fig:transformed_rxSpace}.

\begin{algorithm}[ht]
    \caption{
    \algoname{{\LOlike} Decoder for \ac{LILRS} Codes}
    }
  \label{alg:LO_LILRS}

  \begin{algorithmic}[1]
  \Statex \textbf{Input:} 
   A tuple containing the basis matrices $\U=(\shot{\U}{1},\shot{\U}{2}, \dots, \shot{\U}{\shots})\in\prod_{i=1}^{\shots}\Fqm^{\nReceiveShot{i}\times(\intOrder+1)}$ for the output $\rxSpaceVec=(\rxSpaceShot{1},\rxSpaceShot{2},\dots,\rxSpaceShot{\shots}) \in \Grassm{\Nvec,\nReceiveVec}$ of an $\shots$-shot operator channel with overall $\insertions$ insertions and $\deletions$ deletions for input $\txSpaceVec(\f)\in\liftedIntLinRS{\vecbeta,\a,\shots,\intOrder;\nTransmitVec,k}$
  
  \Statex

  \Statex \textbf{Output:} 
  Message polynomial vector $\f=(f_1,\dots,f_\intOrder)\in\SkewPolyringZeroDer_{<k}^\intOrder$ or ``decoding failure''
  
  \Statex
  \State Set up the matrix $\LODecMat$ as in~\eqref{eq:LO_matrix_LILRS}
  \State Compute right kernel $\myspace{H}=\rker(\LODecMat)$
  % % 
  \If{$\dim(\myspace{H})>1$} \label{line:LILRS_check_if_kernel_dimension>1}
    \State \Return ``decoding failure''
  \Else
     \State Compute an element $\h=\left(\vec{h}^{(1)} \mid \dots \mid \vec{h}^{(\shots)}\right)\in\myspace{H}\setminus\{\vec{0}\}$ \label{line:LO_LILRS_h}
  %
  % %   \\ 
    \For{$i=1,\dots,\shots$}
        \State Compute $\nTransmitShot{i}-\deletionsShot{i}\gets\rk_{q}\left(\h^{(i)}\right)$ \label{line:LO_LILRS_rk_hi}
        \State {Compute full-rank matrix $\T^{(i)}\in\Fq^{\nReceiveShot{i}\times\nReceiveShot{i}}$ s.t. \newline \hspace*{.96cm} the rightmost $\insertionsShot{i}=\nReceiveShot{i}-(\nTransmitShot{i}-\deletionsShot{i})$ entries of $\vec{h}^{(i)}\T^{(i)}$ are zero.} \label{line:LO_LILRS_Ti}
  % %     \\ 
        \State $\hat{\U}^{(i)}\gets\left(\T^{(i)}\right)^{-1}\U^{(i)}
      =
       \begin{pmatrix}
          \hat{\vecxi}^{(i)\top} & \hat{\u}_1^{(i)\top} & \dots & \hat{\u}_\intOrder^{(i)\top} 
       \end{pmatrix}
       \in\Fqm^{\nReceiveShot{i}\times (\intOrder+1)}$ \label{line:LO_LILRS_R_trans}
    \EndFor
    \For{$l=1,\dots,\intOrder$}
      \State{$f_l\gets\IPop{\set{B}_l}$ where $\set{B}_l = \left\{(\hat{\xi}^{(i)}_\mu,\hat{u}^{(i)}_{l,\mu}, a_i): i=1,\dots,\shots,\mu=1,\dots,\nTransmitShot{i}-\deletionsShot{i}\right\}$} \label{line:LO_LILRS_IP}
    \EndFor
    \State \Return $\f=(f_1,\dots,f_\intOrder)$
  \EndIf
\end{algorithmic}
\end{algorithm}

\begin{figure}[ht!]
  \centering
  \pgfkeys{tikz/mymatrixenv/.style={decoration={brace},every left delimiter/.style={xshift=2pt},every right delimiter/.style={xshift=-0pt}}}

\pgfkeys{tikz/mymatrix/.style={matrix of math nodes,nodes in empty cells,left delimiter={(},right delimiter={)},inner sep=1pt,outer sep=1.5pt,column sep=8pt,row sep=8pt,nodes={minimum height=10pt,anchor=center,inner sep=0pt,outer sep=0pt}}}

\pgfkeys{tikz/mymatrixbrace/.style={decorate,thick}}

\tikzset{
	style green/.style={
    set fill color=green!50!black!60,draw opacity=0.4,
    set border color=green!50!black!60,fill opacity=0.1,
  },
  style cyan/.style={
    set fill color=cyan!90!blue!60, draw opacity=0.4,
    set border color=blue!70!cyan!30,fill opacity=0.1,
  },
  style orange/.style={
    set fill color=orange!90, draw opacity=0.8,
    set border color=orange!90, fill opacity=0.3,
  },
  style brown/.style={
    set fill color=brown!70!orange!40, draw opacity=0.4,
    set border color=brown, fill opacity=0.3,
  },
  style purple/.style={
    set fill color=violet!90!pink!20, draw opacity=0.5,
    set border color=violet, fill opacity=0.3,    
  },
  style red/.style={
    set fill color=red!90!pink!20, draw opacity=0.5,
    set border color=red, fill opacity=0.3,    
  },
  kwad/.style={
    above left offset={-0.15,0.35},
    below right offset={0.52,-0.5},
    #1
  },
  pion/.style={
    above left offset={-0.7,0.5},
    below right offset={0.22,-0.5},
    #1
  },
  set fill color/.code={\pgfkeysalso{fill=#1}},
  set border color/.style={draw=#1}
}
\[
	\hat{\U}^{(i)}\!=\!\big(\T^{(i)}\big)^{-1}\U^{(i)}\!=\!
  \scalebox{0.9}{
    \begin{tikzpicture}[baseline={-0.5ex}, mymatrixenv]
    % \matrix [mymatrix,inner sep=5pt, ampersand replacement=\&] (m) {
	  %   \tikzmarkin[pion=style green]{noncorr} \hat{\xi}^{(i)}_1  \&  \hat{u}_{1,1}^{(i)} \&  \dots  \& \hat{u}_{\intOrder,1}^{(i)} 
	  %   \\[-5pt]
	  %   \vdots  \& \vdots \& \ddots \& \vdots   
	  %   \\[-5pt]
	  %   \hat{\xi}^{(i)}_{\nTransmitShot{i}\!-\deletionsShot{i}}  \& \hat{u}_{1,\nTransmitShot{i}\!-\deletionsShot{i}}^{(i)}  \& \dots \& \hat{u}_{\intOrder,\nTransmitShot{i}\!-\deletionsShot{i}}^{(i)}  \tikzmarkend{noncorr}   
	  %   \\
	  %   \tikzmarkin[kwad=style red]{corr}\hat{\xi}^{(i)}_{\nTransmitShot{i}\!-\deletionsShot{i}+1}   \&\hat{u}_{1,\nTransmitShot{i}\!-\deletionsShot{i}+1}^{(i)} \& \dots \& \hat{u}_{\intOrder,\nTransmitShot{i}\!-\deletionsShot{i}+1}^{(i)} 
	  %   \\[-5pt]
	  %   \vdots  \& \vdots \& \ddots \& \vdots  
	  %   \\[-5pt]
	  %   \hat{\xi}^{(i)}_{\nReceiveShot{i}}   \&\hat{u}_{1,\nReceiveShot{i}}^{(i)} \& \dots \& \hat{u}_{\intOrder,\nReceiveShot{i}}^{(i)} \tikzmarkend{corr}   
	  %   \\    
    % };
    \matrix [mymatrix, inner sep=5pt, ampersand replacement=\&, column 3/.append style={minimum width=10pt}, column 4/.append style={ minimum width=60pt}] (m) {
      \tikzmarkin[pion=style green]{noncorr} \hat{\xi}^{(i)}_1  \&  \opev{f_1}{\hat{\xi}^{(i)}_1}{a_i} \&  \dots  \& \opev{f_\intOrder}{\hat{\xi}^{(i)}_1}{a_i} 
      \\[-5pt]
      \vdots  \& \vdots \& \ddots \& \vdots   
      \\[-5pt]
      \hat{\xi}^{(i)}_{\nTransmitShot{i}\!-\deletionsShot{i}}  \& \opev{f_1}{\hat{\xi}^{(i)}_{\nTransmitShot{i}\!-\deletionsShot{i}}}{a_i}  \& \dots \& \opev{f_\intOrder}{\hat{\xi}^{(i)}_{\nTransmitShot{i}\!-\deletionsShot{i}}}{a_i}  \tikzmarkend{noncorr}   
      \\
      \tikzmarkin[kwad=style red]{corr}\hat{\xi}^{(i)}_{\nTransmitShot{i}\!-\deletionsShot{i}+1}   \&\hat{u}_{1,\nTransmitShot{i}\!-\deletionsShot{i}+1}^{(i)} \& \dots \& \hat{u}_{\intOrder,\nTransmitShot{i}\!-\deletionsShot{i}+1}^{(i)} 
      \\[-5pt]
      \vdots  \& \vdots \& \ddots \& \vdots  
      \\[-5pt]
      \hat{\xi}^{(i)}_{\nReceiveShot{i}}   \&\hat{u}_{1,\nReceiveShot{i}}^{(i)} \& \dots \& \phantom{12} \hat{u}_{\intOrder,\nReceiveShot{i}}^{(i)}\hspace*{7.5pt} \tikzmarkend{corr}   
      \\    
    };
    \draw[<->, thick, color=gray]  ([xshift=20pt, yshift=5pt]m-1-4.north east) -- ([xshift=20pt,yshift=-5pt]m-3-4.south east) node[midway,xshift=25pt] {$\nTransmitShot{i}\!\!-\deletionsShot{i}$};
    \draw[<->, thick, color=gray]  ([xshift=20pt, yshift=2pt]m-4-4.north east) -- ([xshift=20pt,yshift=-5pt]m-6-4.south east) node[midway,xshift=25pt] {$\insertionsShot{i}$};
    \end{tikzpicture}
    }
\]
  \caption{Illustration of the structure of the transformed basis matrices $\hat{\U}^{(i)}$. The green part forms a basis for the non-corrupted space $\rxSpaceShot{i}\cap\txSpaceShot{i}$ whereas the red part forms a basis for the error space $\errSpaceShot{i}$.
  }
  \label{fig:transformed_rxSpace}
\end{figure}

\begin{lemma}[Decoding Failure Probability]\label{lem:LO_LILRS_probability_full_Fqm_rank}
Suppose that a tuple of subspaces
\begin{equation}
 \txSpaceVec(\f) = \left(\txSpaceShot{1}(\f), \dots, \txSpaceShot{\shots}(\f)\right)\in
 \liftedIntLinRS{\vecbeta,\a,\shots,\intOrder;\nTransmitVec,k}
\end{equation}
is transmitted over a random instance of the $\shots$-shot operator channel (see Remark~\ref{rem:rand_instance_op_channel}) with overall $\insertions$ insertions and $\deletions$ deletions, where $\insertions$ and $\deletions$ satisfy $\insertions \leq \insertionsmax := \intOrder(\nTransmit-\deletions-k)$.
Let $\bar{\Z}$ be defined as in~\eqref{eq:def_Z} (see Lemma~\ref{lem:right_kernel_prop_LILRS}).
Then, we have
\begin{align*}
\Pr\!\left(\rk_{q^m}(\Lambda_{\nTransmit-\deletions-k}(\bar{\Z})_{\a})<\insertions\right) \leq \gammaq^{\ell+1} q^{-m(\insertionsmax-\insertions+1)}.
\end{align*}
\end{lemma}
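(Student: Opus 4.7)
The plan is to bound the failure event $\{\rk_{q^m}(\bar{\Z}) < \insertions\}$ by a union bound over right-kernel vectors, combined with a per-$\vec{w}$ probability estimate that factorizes shot-wise.

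First, since $\bar{\Z}$ has at least $\insertions$ rows under the hypothesis $\insertions \leq \insertionsmax = \intOrder(\nTransmit-\deletions-k)$, the rank deficiency event is equivalent to the existence of a nonzero right-kernel vector $\vec{w} \in \Fqm^\insertions$. Since the kernel condition is invariant under $\Fqm^*$-scaling of $\vec{w}$, I would enumerate candidates modulo scaling; by Eq.~\eqref{eq:bounds_gaussian_binomial} applied over $\Fqm$, there are at most $\quadbinomqm{\insertions}{1} \leq \gammaq\, q^{m(\insertions-1)}$ such equivalence classes.

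Next, for each fixed $\vec{w}$ I would bound $\Pr(\bar{\Z}\vec{w} = \0)$ by a shot-wise argument. By Remark~\ref{rem:rand_instance_op_channel} the error spaces $\errSpaceShot{i}$ are mutually independent across shots (conditioned on the insertion partition $\insertionsVec$), and by Lemma~\ref{lem:properties_LODecMat_LILRS} the entries of the shot-$i$ parts of $\bar{\e}_1, \dots, \bar{\e}_\intOrder$ are $\Fqm$-coordinates of the $\intOrder$ length-$m$ evaluation blocks of a basis of $\errSpaceShot{i}$. Using the Lam--Leroy-style independence across the distinct conjugacy class representatives $a_1, \dots, a_\shots$, the event $\bar{\Z}\vec{w}=\0$ splits into $\shots$ shot-wise events, each imposing $\Fqm$-linear constraints on the evaluation blocks of $\errSpaceShot{i}$ whose counts sum to $\intOrder(\nTransmit-\deletions-k) = \insertionsmax$ in total. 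Counting admissible error subspaces satisfying each shot-wise condition via a Gaussian-binomial ratio and bounding it using Eq.~\eqref{eq:bounds_gaussian_binomial} would give a per-shot factor of at most $\gammaq\, q^{-m \Delta_i}$ with $\sum_i \Delta_i = \insertionsmax$, and taking the product over the $\shots$ shots yields $\Pr(\bar{\Z}\vec{w} = \0) \leq \gammaq^{\shots}\, q^{-m\, \insertionsmax}$.

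Combining,
\begin{equation*}
\Pr\!\left(\rk_{q^m}(\bar{\Z}) < \insertions\right) \leq \gammaq\, q^{m(\insertions-1)} \cdot \gammaq^{\shots}\, q^{-m\,\insertionsmax} = \gammaq^{\shots+1}\, q^{-m(\insertionsmax - \insertions + 1)},
\end{equation*}
which is the claimed bound (and is uniform in $\insertionsVec$, so averaging over the random partition does not affect it). The main technical obstacle is the per-shot count: within a given shot $i$ the $\intOrder$ evaluation blocks of $\errSpaceShot{i}$ are \emph{not} independent, being length-$m$ slices of the basis of a single random subspace constrained by $\errSpaceShot{i}\cap\txSpaceShot{i} = \{\0\}$. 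Extracting the clean per-shot factor with only a single $\gammaq$ of slack requires an explicit combinatorial argument leveraging Gaussian-binomial identities and the Lam--Leroy structure, mirroring the sum-rank \ac{ILRS} technique in~\cite{bartz2022fast}.
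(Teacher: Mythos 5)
Your proposal does not follow the paper's route, and it has a genuine gap. The paper's proof is short and consists of two reductions: first it observes that, for a fixed insertion partition $\insertionsVec$, drawing $\errSpaceVec$ uniformly from $\Eset^{(\insertionsVec)}$ is equivalent to drawing the error matrices $\widetilde{\E},\hat{\E}$ uniformly from all matrices of the prescribed sum-rank $\insertions$; and second it directly invokes~\cite[Lemma~7]{bartz2022fast}, which is a ready-made bound on $\Pr(\rk_{q^m}(\bar{\Z})<\insertions)$ for exactly this stacked $\genMoore$ matrix with random sum-rank-$\insertions$ input. You instead attempt a from-scratch union bound over right-kernel vectors; your arithmetic ($\gammaq q^{m(\insertions-1)}\cdot\gammaq^{\shots}q^{-m\insertionsmax}$) reproduces the target, but two of the steps it relies on are not justified.

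The central problem is your claim that the event $\bar{\Z}\vec{w}=\0$ ``splits into $\shots$ shot-wise events.'' It does not. Each of the $\insertionsmax$ scalar conditions in $\bar{\Z}\vec{w}=\0$ has the form $\sum_{i=1}^{\shots}\sum_{\mu}\opexp{a_i}{\bar{e}_{l,\mu}^{(i)}}{j}\,w_\mu^{(i)}=0$, i.e.\ a single linear equation whose left-hand side is a \emph{sum} over shots. Independence of $\errSpaceShot{1},\dots,\errSpaceShot{\shots}$ does not turn an equation $c_1+\cdots+c_\shots=0$ into $\shots$ separate equations $c_i=0$. The Lam--Leroy decoupling you allude to is a statement about the \emph{rank} of a $\genMoore$ matrix (it equals the sum of the per-conjugacy-class ranks of the evaluation points), not about the kernel-membership event for a fixed $\vec{w}$. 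So the shot-wise product $\prod_i\gammaq\,q^{-m\Delta_i}$ is not established, and you acknowledge yourself that the remaining ``explicit combinatorial argument'' is missing. A further difficulty with the raw union-bound route is that $\Pr(\bar{\Z}\vec{w}=\0)$ is not uniform over $\vec{w}$: the $\insertionsmax$ rows of $\bar{\Z}\vec{w}$ are $\opexp{\a}{\cdot}{j}$-powers, so for $\vec{w}$ with a small $\aut$-orbit they collapse to fewer independent constraints, and the $q^{-m\insertionsmax}$ estimate is not automatic per kernel class. The paper sidesteps both issues by moving to the sum-rank-weighted matrix model and delegating the probability estimate to the already-proved \ac{ILRS} lemma.
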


\begin{proof}
 Let $\insertionsVec=(\insertionsShot{1},\dots,\insertionsShot{\shots})\in\ZZ_{\geq0}^{\shots}$ and $\deletionsVec=(\deletionsShot{1},\dots,\deletionsShot{\shots})\in\ZZ_{\geq0}^{\shots}$ be the partition of the insertions and deletions of the $\shots$-shot operator channel, respectively. 
 By assumption, the tuple of error spaces $\errSpaceVec \in \Grassm{\Nvec, \insertionsVec}$ is chosen uniformly at random from the set $\Eset^{(\insertionsVec)}$ as defined in~\eqref{eq:def_err_space_set}.
 Each instance of the $\shots$-shot operator channel yields a decoding matrix $\bar{\LODecMat}$ of the form~\eqref{eq:def_L_bar} for some $\t=(t_1,\dots,t_\shots)\in\ZZ_{\geq0}^{\shots}$, $\deviationsVec=(\deviations_1,\dots,\deviations_\shots)\in\ZZ_{\geq0}^{\shots}$.
 Now let us fix $\insertionsVec$ and $\deletionsVec$, which yields a particular instance of $\t$ and $\deviationsVec$.
 By Lemma~\ref{lem:right_kernel_prop_LILRS} the decoder succeeds if the $\Fqm$-rank of the matrix 
 \begin{equation*}
 \bar{\Z}\defeq
   \left(
   \begin{array}{c}
    \Lambda_{\nTransmit-\deletions-k}(\bar{\e}_1)_{\vec{a}}    
    \\ 
    \vdots 
    \\ 
    \Lambda_{\nTransmit-\deletions-k}(\bar{\e}_\intOrder)_{\vec{a}}
   \end{array}
   \right)\in\Fqm^{\intOrder(\nTransmit-\deletions-k)\times\insertions},
 \end{equation*}
 equals $\insertions$ given that
 \begin{equation}\label{eq:sum_rank_condition}
  \SumRankWeight\left(\shot{\widetilde{\E}}{1}\shot{\hat{\E}}{1}\mid\dots\mid \shot{\widetilde{\E}}{\shots}\shot{\hat{\E}}{\shots}\right)=t+\deviations=\insertions.
 \end{equation}
 Drawing $\errSpaceVec$ uniformly at random with a fixed rank partition $\insertionsVec$ from $\Eset^{(\insertionsVec)}$ corresponds to drawing $\widetilde{\E}$ and $\hat{\E}$ uniformly at random from all matrices that satisfy~\eqref{eq:sum_rank_condition}.
 This implies that the probability 
 \begin{equation*}
  \Pr(\rk_{q^m}(\Lambda_{\nTransmit-\deletions-1}(\bar{\Z})_\a)<\insertions)
 \end{equation*}
 depends on $\insertionsVec$ and \emph{not} on $\t$ and $\deviationsVec$.
 Hence, we can use~\cite[Lemma~7]{bartz2022fast} with $t=\insertions$, $\t=\insertionsVec$, $n=\nReceive$ and get 

\begin{equation*}
\Pr(\rk_{q^m}(\Lambda_{\nTransmit-\deletions-1}(\bar{\Z})_\a)<\insertions) \leq \gammaq^{\shots+1} q^{-m\left(\insertionsmax-\insertions+1\right)}.
\end{equation*}
Note that this expression is independent of the rank partition $\insertionsVec$, so it is also an upper bound for $\Pr\!\left(\rk_{q^m}(\Lambda_{\nTransmit-\deletions-k}(\bar{\Z})_{\a})<\insertions\right)$ with $\errSpaceVec$ drawn according to Remark~\ref{rem:rand_instance_op_channel}.
\end{proof}

\begin{theorem}[{\LOlike} Decoder for~\ac{LILRS} Codes]\label{thm:LO_decoder_LILRS}
Suppose we transmit the tuple of subspaces 
\begin{equation}
 \txSpaceVec(\f) = \left(\txSpaceShot{1}(\f), \dots, \txSpaceShot{\shots}(\f)\right)\in
 \liftedIntLinRS{\vecbeta,\a,\shots,\intOrder;\nTransmitVec,k}
\end{equation}
over a random instance of the $\shots$-shot operator channel (see Remark~\ref{rem:rand_instance_op_channel}) 
\begin{equation*}
  \rxSpaceVec=\delOp{\nTransmit-\deletions}(\txSpaceVec)\oplus\errSpaceVec,
\end{equation*}
with overall $\insertions$ insertions and $\deletions$ deletions, where
\begin{align}\label{eq:dec_region_LO_LILRS}
\insertions \leq \insertionsmax := \intOrder(\nTransmit-\deletions-k).
\end{align} 
Then, Algorithm~\ref{alg:LO_LILRS} with input $\txSpaceVec(\f)$ returns the correct message polynomial vector $\f$ with success probability at least
\begin{align}\label{eq:succ_prob_LO_LILRS}
\Pr(\text{success}) \geq 1 - \gammaq^{\ell+1} q^{-m(\insertionsmax-\insertions+1)}.
\end{align}
Furthermore, the algorithm has complexity $O(\intOrder \nReceive^\omega)$ operations in $\Fqm$ plus $O(m \nReceive^{\omega-1})$ operations in $\Fq$.
\end{theorem}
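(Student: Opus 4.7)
The plan is to prove the success-probability statement and the complexity statement separately, leaning entirely on the structural result of Lemma~\ref{lem:right_kernel_prop_LILRS} and the probabilistic bound of Lemma~\ref{lem:LO_LILRS_probability_full_Fqm_rank}.

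First, I would argue that whenever the matrix $\bar{\Z}$ defined in~\eqref{eq:def_Z} has full $\Fqm$-column-rank $\insertions$, Algorithm~\ref{alg:LO_LILRS} outputs the correct message polynomial vector $\f$. The hypothesis \eqref{eq:dec_region_LO_LILRS}, namely $\insertions \leq \insertionsmax = \intOrder(\nTransmit-\deletions-k)$, guarantees that $\bar{\Z}$ has at least as many rows as columns, so $\rk_{q^m}(\bar{\Z}) = \insertions$ is feasible. Under this full-rank assumption, item~1 of Lemma~\ref{lem:right_kernel_prop_LILRS} forces $\rk_{q^m}(\LODecMat) = \nReceive - 1$, so the right kernel is one-dimensional and the test in Line~\ref{line:LILRS_check_if_kernel_dimension>1} passes. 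Items~2 and~3 then imply that any nonzero kernel element $\h$ picked in Line~\ref{line:LO_LILRS_h} has sum-rank weight $\nTransmit-\deletions$ with the correct partition, which validates the computation of $\T^{(i)}$ in Line~\ref{line:LO_LILRS_Ti}. By item~4, the first $\nTransmitShot{i}-\deletionsShot{i}$ rows of the transformed basis $\hat{\U}^{(i)}$ in Line~\ref{line:LO_LILRS_R_trans} constitute a basis of the non-corrupted subspace $\txSpaceShot{i}\cap\rxSpaceShot{i}$, and by item~5 the Lagrange interpolation in Line~\ref{line:LO_LILRS_IP} reconstructs each $f_l$ uniquely.

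Second, the probability bound \eqref{eq:succ_prob_LO_LILRS} follows directly by contrapositive together with Lemma~\ref{lem:LO_LILRS_probability_full_Fqm_rank}: the decoder fails only if $\rk_{q^m}(\bar{\Z})<\insertions$, an event whose probability under a random instance of the multishot operator channel is at most $\gammaq^{\ell+1}q^{-m(\insertionsmax-\insertions+1)}$.

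Third, for the complexity I would walk through Algorithm~\ref{alg:LO_LILRS} line by line. The construction of $\LODecMat$ in~\eqref{eq:LO_matrix_LILRS} requires $\bnd{O}{\intOrder \nReceive}$ applications of the operators $\opexp{a_i}{\cdot}{j}$ on vectors in $\Fqm^{\nReceiveShot{i}}$; using the recurrence \eqref{eq:def_op_exp} together with the Frobenius-like action $\aut(\cdot)=(\cdot)^{q^r}$ (which costs $\bnd{O}{m}$ operations in $\Fq$ per element), this yields $\bnd{O}{\intOrder \nReceive^2}$ operations in $\Fqm$ plus $\bnd{O}{m\intOrder \nReceive}$ in $\Fq$. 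Computing the right kernel of the $\bnd{O}{\intOrder \nReceive}\times \nReceive$ matrix $\LODecMat$ costs $\bnd{O}{\intOrder \nReceive^\omega}$ operations in $\Fqm$ by standard linear algebra. Each per-shot step (rank computation in Line~\ref{line:LO_LILRS_rk_hi}, construction of $\T^{(i)}$ in Line~\ref{line:LO_LILRS_Ti}, and multiplication by $(\T^{(i)})^{-1}$ in Line~\ref{line:LO_LILRS_R_trans}) is performed via $\Fq$-expansion of vectors in $\Fqm^{\nReceiveShot{i}}$, contributing $\bnd{O}{m \nReceive^{\omega-1}}$ operations in $\Fq$ across all shots. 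Finally, each of the $\intOrder$ Lagrange interpolations in Line~\ref{line:LO_LILRS_IP} costs $\bnd{O}{\nReceive^2}$ operations in $\Fqm$, which is absorbed into the kernel cost. Summing these contributions yields the claimed $\bnd{O}{\intOrder \nReceive^\omega}$ operations in $\Fqm$ plus $\bnd{O}{m \nReceive^{\omega-1}}$ in $\Fq$.

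The main obstacle is not conceptual — all heavy lifting is done by Lemmas~\ref{lem:properties_LODecMat_LILRS}--\ref{lem:LO_LILRS_probability_full_Fqm_rank} — but rather bookkeeping: one must carefully check that the row count $(\intOrder+1)(\nTransmit-\deletions)-\intOrder k -1$ of $\LODecMat$, which depends on the unknown individual parameters $\deletions$ and $\insertions$ (not only on the observed $\nReceive$), can nevertheless be set from the worst-case decoding radius without breaking either the success condition or the complexity bounds; I would address this by noting that using the maximal admissible radius only adds (at most) a constant factor of rows to $\LODecMat$ and leaves all arguments intact.
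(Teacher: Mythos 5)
Your proof matches the paper's in both substance and structure: correctness follows from Lemma~\ref{lem:right_kernel_prop_LILRS} conditional on the event $\rk_{q^m}(\bar{\Z})=\insertions$, the probability of that event is bounded via Lemma~\ref{lem:LO_LILRS_probability_full_Fqm_rank}, and the complexity is obtained by a line-by-line accounting dominated by the kernel solve (giving $\oh{\intOrder\nReceive^\omega}$ operations in $\Fqm$) and the $\Fq$-echelon computations (giving $\oh{m\nReceive^{\omega-1}}$ operations in $\Fq$), with the Lagrange interpolations absorbed. Your closing caveat about the decoder's a priori knowledge of $\deletions$ when forming $\LODecMat$ is a fair observation that the paper also leaves implicit rather than resolves, but it does not open a gap relative to the theorem as stated, so your argument stands on the same footing as the published one.
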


\begin{proof}
Due to Proposition~\ref{lem:right_kernel_prop_LILRS}, the algorithm returns the correct message polynomial vector $\f$ if the $\Fqm$-rank of $\Lambda_{\nTransmit-\deletions-k}(\bar{\Z})_\a$ is at least $\insertions$. Hence, the success probability is lower bounded by the probability that $\rk_{q^m}(\Lambda_{\nTransmit-\deletions-k}(\bar{\Z})_{\a})=\insertions$, which is given in Lemma~\ref{lem:LO_LILRS_probability_full_Fqm_rank}.

The lines of the algorithm have the following complexities:
\begin{itemize}
  \item Lines~\ref{line:LILRS_check_if_kernel_dimension>1} and~\ref{line:LO_LILRS_h}: 
  This can be done by solving the linear system of equations $\LODecMat \h^\top = \0$. 
  Since $\LODecMat \in \Fqm^{((\intOrder+1)(\nTransmit-\deletions)-\intOrder k -1) \times \nReceive}$, it costs $\oh{\intOrder \nReceive^\omega}$ operations in $\Fqm$.
  \item Line~\ref{line:LO_LILRS_rk_hi} can be implemented by transforming the matrix representation of $\h^{(i)}$, which is an $m\times\nReceiveShot{i}$ matrix over $\Fq$, into column echelon form. For each $i$, this costs $\oh{m {\nReceiveShot{i}}^{\omega-1}}$ operations in $\Fq$. In total, all $\shots$ calls of this line cost $\oh{\shots m \sum_i {\nReceiveShot{i}}^{\omega-1}} \subseteq \oh{m \nReceive^{\omega-1}}$ operations in $\Fq$.
  \item Line~\ref{line:LO_LILRS_Ti} can be implemented by transforming the matrix representation of $\h^{(i)}$ into column echelon form, which was already accomplished in Line~\ref{line:LO_LILRS_rk_hi}. 
  \item Line~\ref{line:LO_LILRS_R_trans} requires $\oh{\intOrder {\nReceiveShot{i}}^2}$ multiplications over $\Fqm$ and thus $\oh{\intOrder\sum_i{\nReceiveShot{i}}^2}\subseteq\oh{\intOrder \nReceive^2}$ operations in $\Fqm$ in total.
  \item Line~\ref{line:LO_LILRS_IP} computes $\intOrder$ interpolation polynomials of degree less than $k\leq \nTransmit$ point tuples. This costs in total $\softoh{\intOrder \OMul{\nTransmit}}$ operations in $\Fqm$ (c.f.~\cite{PuchingerWachterzeh-ISIT2016,caruso2017fast}).
\end{itemize}
This proves the complexity statement.
\end{proof}

The decoding region of the {\LOlike} decoder for \ac{LILRS} codes is illustrated in Figure~\ref{fig:decodingRegion}.

\subsection{An Interpolation-Based Decoding Approach}\label{subsec:decodingLILRS}

We now derive an interpolation-based decoding approach for~\ac{LILRS} codes.
The decoding principle consists of an interpolation step and a root-finding step.
In~\cite{martinez2019reliable}, (lifted) linearized Reed--Solomon codes are decoded using the isometry between the sum-rank and the skew metric.
In this work we consider an interpolation-based decoding scheme in the generalized operator evaluation domain.
The new decoder is a generalization of \cite{wachter2014list} (interleaved Gabidulin codes in the rank metric) and \cite{bartz2018efficient} (lifted interleaved Gabidulin codes in the subspace metric).
Compared to the {\LOlike} decoder from Section~\ref{subsec:LO_dec_LILRS}, which requires $\oh{\intOrder^\omega\nReceive^2}$ operations in $\Fqm$, the proposed interpolation based decoder has a reduced computational complexity in the order of $\softoh{\intOrder^\omega\OMul{\nReceive}}$ operations in $\Fqm$.

\subsubsection{Interpolation Step}

Suppose we transmit the tuple of subspaces 
\begin{equation}
 \txSpaceVec(\f) = \left(\txSpaceShot{1}(\f), \dots, \txSpaceShot{\shots}(\f)\right) \in
 \liftedIntLinRS{\vecbeta,\a,\shots,\intOrder;\nTransmitVec,k}
\end{equation}
over an $\shots$-shot operator channel~\eqref{eq:def:multishot_op_channel} with $\insertions$ insertions and $\deletions$ deletions and receive the tuple of subspaces
\begin{equation}
 \rxSpaceVec = \left(\rxSpaceShot{1},\rxSpaceShot{2}, \dots, \rxSpaceShot{\shots}\right)\in \Grassm{\Nvec,\nReceiveVec}
\end{equation}
where the received subspaces $\rxSpaceShot{i}$ are represented as in~\eqref{eq:comp_rec_sapces}.
We describe $\rxSpaceVec$ by the tuple containing the basis matrices of the received subspaces as
\begin{equation}\label{eq:basis_rec_spaces}
    \U\defeq\left(\shot{\U}{1},\shot{\U}{2},\dots,\shot{\U}{\shots}\right)\in\prod_{i=1}^{\shots}\Fqm^{\nReceiveShot{i}\times (\intOrder+1)}
\end{equation}
where
\begin{equation}
    \shot{\U}{i}\defeq
    \begin{pmatrix}
   \vecxi^{(i)\top} & \u_1^{(i)\top} & \u_2^{(i)\top} & \dots & \u_\intOrder^{(i)\top}
  \end{pmatrix}
  \in\Fqm^{\nReceiveShot{i}\times (\intOrder+1)}
\end{equation}
has $\Fq$-rank $\rk_q(\shot{\U}{i})=\nReceiveShot{i}$ and satisfies $\rxSpaceShot{i}=\Rowspace{\shot{\U}{i}}$ for all $i=1,\dots,\shots$.
\begin{remark}\label{rem:rest_rx_spaces}
 In contrast to~\cite[Section~V.H]{martinez2019reliable} we do not need the assumption that the $\Fq$-rank of $\vecxi^{(i)}$ equals $\nReceiveShot{i}$ for all $i=1,\dots,\shots$, which is not the case in general (see also~\cite[Section~5.1.2]{Silva_PhD_ErrorControlNetworkCoding}).
\end{remark}
For a multivariate skew polynomial of the form
\begin{equation}\label{eq:mult_var_skew_poly}
  Q(x, y_1,\dots, y_\intOrder)=Q_0(x)+Q_1(x)y_1+\dots+Q_\intOrder(x)y_\intOrder
\end{equation}
where $Q_l(x)\in\SkewPolyringZeroDer$ for all $l\in\intervallincl{0}{\intOrder}$ define the $\nReceive$ generalized operator evaluation maps
\begin{align}\label{eq:defDiGenOpLILRS}
\MultSkewPolyringZeroDer \times \Fqm^{\intOrder+1} &\to \Fqm \notag \\
  \left(Q, (\xi_j^{(i)},u_{1,l}^{(i)},\dots,u_{\intOrder,l}^{(i)})\right) 
  &\mapsto\evalMap{j}{i}(Q)\defeq \opev{Q_0}{\xi_j^{(i)}}{a_i}+\textstyle\sum_{l=1}^{\intOrder}\opev{Q_l}{u_{j,l}^{(i)}}{a_i}.
\end{align}
for all $j=1,\dots,\nReceiveShot{i}$ and $i=1,\dots,\shots$.
Now consider the following interpolation problem in $\SkewPolyringZeroDer$. 

\begin{problem}[\ac{LILRS} Interpolation Problem]\label{prob:skewIntProblemGenOpLILRS}
 Given the integers $\degConstraint,\intOrder,\shots\in\ZZ_{\geq0}$, a set  
 \begin{equation}
  \mathscr{E}=\left\{\evalMap{j}{i}:i=1,\dots,\shots, j=1,\dots,n_i\right\}
 \end{equation}
 containing the generalized operator evaluation maps defined in~\eqref{eq:defDiGenOpLILRS} and a vector $\vec{w}=(0,k-1,\dots,k-1)\in\ZZ_{\geq0}^{\intOrder+1}$, find a nonzero polynomial of the form 
 \begin{equation}
  Q(x, y_1,\dots, y_\intOrder)=Q_0(x)+Q_1(x)y_1+\dots+Q_\intOrder(x)y_\intOrder
 \end{equation}
 with $Q_l(x)\in\SkewPolyringZeroDer$ for all $l\in\intervallincl{0}{\intOrder}$ that satisfies:
 \begin{enumerate}
  \item $\evalMap{j}{i}(Q)=0, \qquad\forall i=1,\dots,\shots$, $j=1,\dots,\nReceiveShot{i}$,
  \item $\deg_{\vec{w}}(Q(x,y_1,\dots,y_\intOrder))<\degConstraint$.
 \end{enumerate}
\end{problem}

Define the skew polynomials 
\begin{equation}
  Q_0(x)=\sum_{i=0}^{\degConstraint-1}q_{0,i}x^i
  \qquad\text{and}\qquad
  Q_j(x)=\sum_{i=0}^{\degConstraint-k}q_{j,i}x^i,
\end{equation}
and the vectors
\begin{equation}
  \vecxi=\left(\vecxi^{(1)} \mid \vecxi^{(2)} \mid\dots\mid \vecxi^{(\shots)}\right)\in\Fqm^{\nReceive}
  \quad\text{and}\quad
  \u_j=\left(\u_j^{(1)} \mid \u_j^{(2)} \mid\dots\mid \u_j^{(\shots)}\right)\in\Fqm^{\nReceive}
\end{equation}
for all $j=1,\dots,\intOrder$.
Then a solution of Problem~\ref{prob:skewIntProblemGenOpLILRS} can be found by solving the $\Fqm$-linear system
\begin{equation}\label{eq:intSystemLILRS}
  \intMat\vec{q}=\vec{0}
\end{equation}
for
\begin{equation}\label{eq:defIntVec_LILRS}
  \vec{q} = \left(q_{0,0},q_{0,1},\dots,q_{0,\degConstraint-1}\mid q_{1,0},q_{1,1},\dots,q_{1,\degConstraint-k}\mid \dots \mid  q_{\intOrder,0},q_{\intOrder,1},\dots,q_{\intOrder,\degConstraint-k}\right)
\end{equation}
where the interpolation matrix $\intMat\in\Fqm^{\nReceive\times \degConstraint(\intOrder+1)-\intOrder(k-1)}$ is given by
\begin{equation}\label{eq:intMatrixLILRS}
  \intMat=
  \begin{pmatrix}
   \Lambda_{\degConstraint}(\vecxi)_\a^\top & \Lambda_{\degConstraint-k+1}(\u_1)_\a^\top & \dots & \Lambda_{\degConstraint-k+1}(\u_\intOrder)_\a^\top
  \end{pmatrix}.
\end{equation}

Problem~\ref{prob:skewIntProblemGenOpLILRS} can be solved by the skew Kötter interpolation~\cite{liu2014kotter} with the generalized operator evaluation maps $\evalMap{j}{i}$ as defined in~\eqref{eq:defDiGenOpLILRS} requiring $\oh{\intOrder^2 \nReceive^2}$ operations in $\Fqm$.
A solution of Problem~\ref{prob:skewIntProblemGenOpLILRS} can be found efficiently requiring only $\softoh{\intOrder^\omega \OMul{\nReceive}}$ operations in $\Fqm$ using a variant of the minimal approximant bases approach from~\cite{bartz2022fast}.
Another approach yielding the same computational complexity of $\softoh{\intOrder^\omega \OMul{\nReceive}}$ operations in $\Fqm$ is given by the fast divide-and-conquer Kötter interpolation from~\cite{bartz2021fastSkewKNH}.

\begin{lemma}[Existence of Solution]
 A nonzero solution of Problem~\ref{prob:skewIntProblemGenOpLILRS} exists if $\degConstraint=\big\lceil\tfrac{\nReceive+\intOrder(k-1)+1}{\intOrder+1}\big\rceil$.
\end{lemma}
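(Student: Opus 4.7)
The plan is to recast the interpolation problem as a homogeneous $\Fqm$-linear system and then check that the prescribed value of $\degConstraint$ makes the number of unknowns strictly exceed the number of equations, which is exactly what the paper already sets up in~\eqref{eq:intSystemLILRS}--\eqref{eq:intMatrixLILRS}.

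First, I would count the unknowns. Since $\deg(Q_0) < \degConstraint$ contributes $\degConstraint$ coefficients and, for each $l = 1, \dots, \intOrder$, the $\vec{w}$-weighted degree constraint $\deg(Q_l) + (k-1) < \degConstraint$ allows $\degConstraint - k + 1$ coefficients in $Q_l$ (where we use the convention that $\degConstraint \geq k$, otherwise the problem is trivial), the coefficient vector $\q$ in~\eqref{eq:defIntVec_LILRS} has length
\begin{equation*}
\degConstraint + \intOrder(\degConstraint - k + 1) = (\intOrder+1)\degConstraint - \intOrder(k-1).
\end{equation*}

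Next, I would count the equations. The $\nReceive$ evaluation constraints $\evalMap{j}{i}(Q) = 0$ are, by~\eqref{eq:relation_op_ev_moore} and the definition of $\evalMap{j}{i}$ in~\eqref{eq:defDiGenOpLILRS}, exactly the rows of the homogeneous system $\intMat \q = \0$, where $\intMat \in \Fqm^{\nReceive \times ((\intOrder+1)\degConstraint - \intOrder(k-1))}$ is the matrix in~\eqref{eq:intMatrixLILRS}. A nonzero solution $\q \neq \0$ exists whenever the right-kernel of $\intMat$ is nontrivial, and a sufficient condition for this is
\begin{equation*}
(\intOrder+1)\degConstraint - \intOrder(k-1) > \nReceive,
\qquad \text{i.e.,} \qquad
\degConstraint > \tfrac{\nReceive + \intOrder(k-1)}{\intOrder+1}.
\end{equation*}

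Finally, I would verify that the prescribed $\degConstraint = \big\lceil \tfrac{\nReceive + \intOrder(k-1) + 1}{\intOrder+1} \big\rceil$ satisfies this strict inequality. Writing $\nReceive + \intOrder(k-1) = (\intOrder+1)\mu + r$ with $0 \leq r \leq \intOrder$, one checks that
\begin{equation*}
\big\lceil \tfrac{\nReceive + \intOrder(k-1) + 1}{\intOrder+1} \big\rceil = \mu + 1 > \tfrac{(\intOrder+1)\mu + r}{\intOrder+1} = \tfrac{\nReceive + \intOrder(k-1)}{\intOrder+1},
\end{equation*}
so the dimension count goes through and the existence of a nonzero solution follows. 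There is no substantive obstacle here beyond being careful with the ceiling arithmetic; the result is a direct dimension count, and nothing about the structure of the generalized operator evaluation or the conjugacy classes of $\a$ needs to be invoked for mere existence (those properties would only be needed to bound the degree of the output or to establish uniqueness).
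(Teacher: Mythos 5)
Your proposal is correct and takes essentially the same approach as the paper: a dimension count on the homogeneous $\Fqm$-linear system $\intMat\vec{q}=\vec{0}$, concluding that a nontrivial right kernel exists once the number of unknowns $(\intOrder+1)\degConstraint-\intOrder(k-1)$ strictly exceeds the number of equations $\nReceive$. The paper states the equivalence $\nReceive<\degConstraint(\intOrder+1)-\intOrder(k-1)\Leftrightarrow\degConstraint\geq\tfrac{\nReceive+\intOrder(k-1)+1}{\intOrder+1}$ and reads off the ceiling; you spell out the same integer arithmetic more explicitly, which is a minor expository addition rather than a different argument.
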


\begin{proof}
 Problem~\ref{prob:skewIntProblemGenOpLILRS} corresponds to a system of $\nReceive$ $\Fqm$-linear equations in $\degConstraint(\intOrder+1)-\intOrder(k-1)$ unknowns (see~\eqref{eq:intSystemLILRS}) which has a nonzero solution if the number of equations is less than the number of unknowns, i.e. if
 \begin{equation}
  \nReceive<\degConstraint(\intOrder+1)-\intOrder(k-1)\label{eq:existenceCondLILRS}
  \quad\Longleftrightarrow\quad
  \degConstraint\geq\tfrac{\nReceive+\intOrder(k-1)+1}{\intOrder+1}.
 \end{equation} 
\end{proof}

The $\Fqm$-linear solution space $\Qspace$ of Problem~\ref{prob:skewIntProblemGenOpLILRS} is defined as
\begin{equation}\label{eq:def_sol_space_LILRS}
 \Qspace\defeq\{Q\in\MultSkewPolyringZeroDer:\vec{q}(Q)\in\rker(\intMat)\}  
\end{equation}
where $\vec{q}(Q)\in\Fqm^{\degConstraint(\intOrder+1)-\intOrder(k-1)}$ denotes the coefficient vector of $Q$ as defined in~\eqref{eq:defIntVec_LILRS}.
The dimension of the $\Fqm$-linear solution space $\Qspace$ of Problem~\ref{prob:skewIntProblemGenOpLILRS} (i.e. the dimension of the right kernel of $\intMat$ in~\eqref{eq:intMatrixLILRS}) is denoted by 
\begin{equation}
  d_I\defeq\dim(\Qspace)=\dim(\rker(\intMat)).
\end{equation}

\subsubsection{Root-Finding Step}\label{subsubsec:RF_LILRS}

The goal of the root-finding step is to recover the message polynomials $f_1,\dots,f_\intOrder\in\SkewPolyringZeroDer_{<k}$ from the multivariate polynomial constructed in the interpolation step. 
We now derive a condition for the recovery of the message polynomials.

\begin{lemma}[Roots of Polynomial]\label{lem:decConditionLILRS}
 Let
 \begin{equation}
  P(x)\defeq Q_0(x)+Q_1(x)f_1(x)+\dots+Q_\intOrder(x)f_\intOrder(x).
 \end{equation}
 Then there exist elements $\zeta_1^{(i)},\dots,\zeta_{\nTransmitShot{i}-\deletionsShot{i}}^{(i)}$ in $\Fqm$ that are $\Fq$-linearly independent for each $i=1,\dots,\shots$ such that
 \begin{equation}
  \opev{P}{\zeta_j^{(i)}}{a_i}=0
 \end{equation}
 for all $i=1,\dots,\shots$ and $j=1,\dots,\nTransmitShot{i}-\deletionsShot{i}$.
\end{lemma}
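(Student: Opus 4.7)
The plan is to exploit the interpolation conditions together with the structure of the transmitted space and the definition of the channel, identifying the $\zeta_j^{(i)}$ as the ``first components'' of a basis of the non-corrupted intersection $\rxSpaceShot{i}\cap\txSpaceShot{i}(\f)$. Recall that the operator channel for shot $i$ produces $\rxSpaceShot{i} = \delOp{\nTransmitShot{i}-\deletionsShot{i}}(\txSpaceShot{i}(\f)) \oplus \errSpaceShot{i}$, so the intersection $\rxSpaceShot{i}\cap\txSpaceShot{i}(\f)$ has $\Fq$-dimension exactly $\nTransmitShot{i}-\deletionsShot{i}$. Every vector in $\txSpaceShot{i}(\f)$ is, by $\Fq$-linearity of $\opev{f_l}{\cdot}{a_i}$, of the form $(\zeta,\opev{f_1}{\zeta}{a_i},\dots,\opev{f_\intOrder}{\zeta}{a_i})$ for some $\zeta\in\spannedBy{\vecbeta^{(i)}}$. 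Hence the intersection admits a basis of $\nTransmitShot{i}-\deletionsShot{i}$ vectors of exactly this shape, which defines $\zeta_1^{(i)},\dots,\zeta_{\nTransmitShot{i}-\deletionsShot{i}}^{(i)}$.

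Next, I would verify that these $\zeta_j^{(i)}$ are $\Fq$-linearly independent. This follows because the map $(\zeta,\opev{f_1}{\zeta}{a_i},\dots,\opev{f_\intOrder}{\zeta}{a_i})\mapsto\zeta$ is an $\Fq$-linear isomorphism from $\txSpaceShot{i}(\f)$ onto $\spannedBy{\vecbeta^{(i)}}$, so a basis of a subspace of $\txSpaceShot{i}(\f)$ projects to an $\Fq$-linearly independent set in $\Fqm$. The non-trivial input is that the $\beta_j^{(i)}$ are $\Fq$-linearly independent by construction of the LILRS code, which ensures injectivity of this projection.

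The core computation is then to transfer the interpolation conditions to $P$. Each intersection basis vector lies in $\rxSpaceShot{i} = \Rowspace{\shot{\U}{i}}$ and is therefore an $\Fq$-linear combination of the rows of $\shot{\U}{i}$. Using the $\Fq$-linearity of $\opev{Q_l}{\cdot}{a_i}$ (for the fixed evaluation parameter $a_i$) together with the interpolation conditions $\evalMap{j}{i}(Q)=0$ from Problem~\ref{prob:skewIntProblemGenOpLILRS}, I obtain
\begin{equation*}
\opev{Q_0}{\zeta_j^{(i)}}{a_i} + \sum_{l=1}^{\intOrder}\opev{Q_l}{\opev{f_l}{\zeta_j^{(i)}}{a_i}}{a_i} = 0.
\end{equation*}
Applying the product rule $\opev{(Q_l\cdot f_l)}{b}{a}=\opev{Q_l}{\opev{f_l}{b}{a}}{a}$ collapses the inner evaluations, yielding $\opev{P}{\zeta_j^{(i)}}{a_i}=0$ for all admissible $i$ and $j$.

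The main obstacle I anticipate is bookkeeping rather than conceptual: one must carefully track that the interpolation conditions, which are originally indexed by the basis rows of $\shot{\U}{i}$, propagate to arbitrary $\Fq$-linear combinations of those rows (which is where the per-shot $\Fq$-linearity of the generalized operator evaluation is essential, but also where the restriction to a single conjugacy class per shot matters). Once this is cleanly stated, the argument reduces to the chain ``intersection basis $\Rightarrow$ linear combination of received vectors $\Rightarrow$ operator-evaluation root of $P$,'' and the $\Fq$-linear independence of the $\zeta_j^{(i)}$ follows from the lifting structure alone.
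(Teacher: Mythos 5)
Your proposal is correct and takes essentially the same approach as the paper's proof: identify the intersection $\rxSpaceShot{i}\cap\txSpaceShot{i}(\f)$ as a $(\nTransmitShot{i}-\deletionsShot{i})$-dimensional subspace, pick a basis of the form $(\zeta_j^{(i)},\opev{f_1}{\zeta_j^{(i)}}{a_i},\dots,\opev{f_\intOrder}{\zeta_j^{(i)}}{a_i})$, and then transfer the interpolation conditions to these basis vectors via the per-shot $\Fq$-linearity of the generalized operator evaluation. The paper is terser and takes the $\Fq$-linear independence of the $\zeta_j^{(i)}$ and the propagation of the interpolation conditions to arbitrary $\Fq$-linear combinations of the rows of $\shot{\U}{i}$ as immediate, whereas you spell both out; the content is the same.
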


\begin{proof}
 In each shot the non-corrupted intersection space has dimension $\dim(\rxSpaceShot{i}\cap\txSpaceShot{i})=\nTransmitShot{i}-\deletionsShot{i}$ for all $i=1,\dots,\shots$.
 A basis for each intersection space $\rxSpaceShot{i}\cap\txSpaceShot{i}$ can be represented as
 \begin{equation}
    \left\{\left(\zeta_j^{(i)},\opev{f_1}{\zeta_j^{(i)}}{a_i},\!\dots,\opev{f_\intOrder}{\zeta_j^{(i)}}{a_i}\right)\!:\!j\!\in\!\intervallincl{1}{\nTransmitShot{i}\!-\!\deletionsShot{i}}\right\}
 \end{equation}
 where $\zeta_1^{(i)},\dots,\zeta_{\nTransmitShot{i}-\deletionsShot{i}}^{(i)}$ are $(\nTransmitShot{i}-\deletionsShot{i})$ $\Fq$-linearly independent elements from $\Fqm$ for all $i=1,\dots,\shots$.
 Since each intersection space $\rxSpaceShot{i}\cap\txSpaceShot{i}$ is a subspace of the received space $\rxSpaceShot{i}$ we have that
 \begin{equation}\label{eq:decConditionLILRS}
  \opev{P}{\zeta_j^{(i)}}{a_i}\defeq\opev{Q_0}{\zeta_j^{(i)}}{a_i}+\sum_{l=1}^{\intOrder}\opev{Q_l}{\opev{f_l}{\zeta_j^{(i)}}{a_i}}{a_i}
  =0
 \end{equation}
 for all $i=1,\dots,\shots, j=1,\dots,\nTransmitShot{i}-\deletionsShot{i}$.
\end{proof} 

\begin{theorem}[Decoding Region]\label{thm:dec_region_lilrs}
 Let $\rxSpaceVec \in \Grassm{\Nvec,\nReceiveVec}$ be the tuple containing the received subspaces and let $Q(x,y_1,\dots,y_\intOrder)\neq0$ fulfill the constraints in Problem~\ref{prob:skewIntProblemGenOpLILRS}. 
 Then for all codewords $\txSpaceVec(\f)\in\liftedIntLinRS{\vecbeta,\a,\shots,\intOrder;\nTransmitVec,k}$ that are $(\insertions,\deletions)$-reachable from $\rxSpaceVec$, where $\insertions$ and $\deletions$ satisfy 
 \begin{equation}\label{eq:listDecRegion}
  \insertions+\intOrder \deletions<\intOrder(\nTransmit-k+1),
 \end{equation}
 we have that
 \begin{equation}\label{eq:rootFindingEquationLILRS}
  P(x)\!=\!Q_0(x)+Q_1(x)f_1(x)+\!\dots\!+Q_\intOrder(x)f_\intOrder(x)\!=\!0.
 \end{equation}
\end{theorem}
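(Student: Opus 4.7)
The plan is to combine two bounds on the auxiliary univariate skew polynomial
$$P(x)=Q_0(x)+Q_1(x)f_1(x)+\dots+Q_\intOrder(x)f_\intOrder(x):$$
a \emph{lower} bound on $\deg(P)$ coming from the many generalized operator zeros supplied by Lemma~\ref{lem:decConditionLILRS}, and an \emph{upper} bound coming from the weighted-degree constraint in Problem~\ref{prob:skewIntProblemGenOpLILRS}. Showing the two bounds are simultaneously unsatisfiable unless $P\equiv 0$ will reduce the claim to a clean inequality in $\insertions$ and $\intOrder\deletions$ which, after plugging in $\nReceive=\nTransmit+\insertions-\deletions$, turns out to be exactly~\eqref{eq:listDecRegion}.

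First, I would bound $\deg(P)$ from above. From $\deg_{\vec{w}}(Q)<\degConstraint$ with $\vec{w}=(0,k-1,\dots,k-1)$ one reads off $\deg(Q_0)<\degConstraint$ and $\deg(Q_l)<\degConstraint-(k-1)$ for every $l\in\intervallincl{1}{\intOrder}$. Since each $f_l\in\SkewPolyringZeroDer_{<k}$, the skew-polynomial product rule gives $\deg(Q_l f_l)\le\deg(Q_l)+\deg(f_l)<\degConstraint$, and hence $\deg(P)<\degConstraint$.

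Next, I would collect the zeros of $P$. Applying Lemma~\ref{lem:decConditionLILRS} to each shot $i$ furnishes $\Fq$-linearly independent elements $\zeta_1^{(i)},\dots,\zeta_{\nTransmitShot{i}-\deletionsShot{i}}^{(i)}\in\Fqm$ at which the generalized operator evaluation of $P$ with respect to $a_i$ vanishes. By construction of the code, $a_1,\dots,a_\shots$ are representatives of distinct $\aut$-conjugacy classes of $\Fqm$, so the degree bound for skew polynomials with operator zeros across several conjugacy classes recalled in Section~\ref{subsec:skew_polys} implies that, if $P\neq 0$,
\begin{equation*}
\deg(P)\;\geq\;\sum_{i=1}^{\shots}(\nTransmitShot{i}-\deletionsShot{i})\;=\;\nTransmit-\deletions.
\end{equation*}
Combined with $\deg(P)<\degConstraint$, this forces $P\equiv 0$ as soon as $\degConstraint\leq\nTransmit-\deletions$.

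Finally, I would verify that~\eqref{eq:listDecRegion} is exactly the compatibility condition between this inequality and the existence bound $\degConstraint\geq\lceil(\nReceive+\intOrder(k-1)+1)/(\intOrder+1)\rceil$. Substituting $\nReceive=\nTransmit+\insertions-\deletions$ and clearing denominators, $(\nTransmit+\insertions-\deletions+\intOrder(k-1)+1)\leq(\intOrder+1)(\nTransmit-\deletions)$ simplifies to $\insertions+\intOrder\deletions\leq\intOrder(\nTransmit-k+1)-1$, i.e.~$\insertions+\intOrder\deletions<\intOrder(\nTransmit-k+1)$. Since $\nTransmit-\deletions$ is integer, the ceiling disappears and an admissible $\degConstraint$ can be chosen, proving $P=0$. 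The main obstacle to watch is bookkeeping rather than depth: one has to be sure that the cross-conjugacy-class degree bound is applied to \emph{generalized operator} evaluations (as developed in Section~\ref{subsec:skew_polys}) and not to remainder evaluations, and that the independence hypothesis on the $\zeta_j^{(i)}$ really holds per shot—both are consequences of the basis structure of $\rxSpaceShot{i}\cap\txSpaceShot{i}$ already made explicit in Lemma~\ref{lem:decConditionLILRS}.
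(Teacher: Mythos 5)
Your proof is correct and follows essentially the same route as the paper: invoke Lemma~\ref{lem:decConditionLILRS} to obtain $\nTransmit-\deletions$ per-shot $\Fq$-linearly independent generalized operator zeros of $P$ across $\shots$ distinct $\aut$-conjugacy classes, use the skew-polynomial degree bound to conclude $P=0$ once $\degConstraint\leq\nTransmit-\deletions$, and combine this with the existence condition $\degConstraint(\intOrder+1)>\nReceive+\intOrder(k-1)$ to arrive at~\eqref{eq:listDecRegion}. The only difference is that you spell out the upper bound $\deg(P)<\degConstraint$ via $\deg(Q_l)<\degConstraint-(k-1)$ and $\deg(f_l)<k$, which the paper leaves implicit when it writes that ``the degree of $P(x)$ exceeds the degree bound''; this small addition makes the contradiction more transparent but is not a different argument.
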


\begin{proof}
 By Lemma~\ref{lem:decConditionLILRS} there exist elements $\zeta_1^{(i)},\dots,\zeta_{\nTransmitShot{i}-\deletionsShot{i}}$ in $\Fqm$ that are $\Fq$-linearly independent for each $i=1,\dots,\shots$ such that
 \begin{equation}
  \opev{P}{\zeta_j^{(i)}}{a_i}=0
 \end{equation}
 for all $i=1,\dots,\shots$ and $j=1,\dots,\nTransmitShot{i}-\deletionsShot{i}$.
 By choosing 
 \begin{equation}\label{eq:decDegreeConstraintLILRS}
  \degConstraint\leq \nTransmit-\deletions
 \end{equation}
 the degree of $P(x)$ exceeds the degree bound from~\cite{caruso2019residues}) which is possible only if $P(x)=0$.
 Combining~\eqref{eq:existenceCondLILRS} and~\eqref{eq:decDegreeConstraintLILRS} we get
 \begin{align*}
  \nReceive+\intOrder(k-1)&<\degConstraint(\intOrder+1)\leq(\intOrder+1)(\nTransmit-\deletions)
  \\\Longleftrightarrow\qquad
  \insertions+\intOrder\deletions&<\intOrder(\nTransmit-k+1).
 \end{align*}
\end{proof}

The decoding region in~\eqref{eq:listDecRegion} shows and improved insertion-correction performance due to interleaving. The resulting improvement is illustrated in Figure~\ref{fig:decodingRegion}.

In the root-finding step, all polynomials $f_1,\dots,f_\intOrder \in \SkewPolyringZeroDer_{<k}$ that satisfy~\eqref{eq:rootFindingEquationLILRS} need to be found. 
Instead of using only one solution of Problem~\ref{prob:skewIntProblemGenOpLILRS} to set up the root-finding system we use a basis for the $d_I$-dimensional $\Fqm$-linear solution space $\Qspace$ (see also~\cite{wachter2014list,bartz2017algebraic}.
Alternatively, a degree-restricted subset of a Gröbner basis for the interpolation module of cardinality at most $\intOrder$ can be used to set up the root-finding system and find the minimal number of solutions (see~\cite{bartz2022fast}). 

To set up the root-finding system set up with a basis for $\Qspace$ define the matrices 
 \begin{equation}\label{eq:def_RF_blocks}
  % \mat{Q}_j^{i}=
  \autMat{\Q}{j}{i}\defeq
  \begin{pmatrix}
   \aut^i\left(q_{1,j}^{(1)}\right) & \aut^i\left(q_{2,j}^{(1)}\right) & \dots & \aut^i\left(q_{\intOrder,j}^{(1)}\right)
   \\
   \vdots & \vdots & \ddots & \vdots
   \\ 
   \aut^i\left(q_{1,j}^{(d_I)}\right) & \aut^i\left(q_{2,j}^{(d_I)}\right) & \dots & \aut^i\left(q_{\intOrder,j}^{(d_I)}\right)
  \end{pmatrix}\in\Fqm^{d_I \times s}
 \end{equation}
 and the vectors
 \begin{equation}
  % \vec{f}_j^{i}
  \autVec{\f}{j}{i}
  \defeq\left(\aut^i\left(f_j^{(1)}\right),\dots,\aut^i\left(f_j^{(\intOrder)}\right)\right)\in\Fqm^{\intOrder}
 \end{equation}
 and
 \begin{equation}\label{eq:def_RF_vecs}
  % \vec{q}_{0,j}^i
  \autVec{\q}{0,j}{i}
  \defeq\left(\aut^i\left(q_{0,j}^{(1)}\right),\dots,\aut^i\left(q_{0,j}^{(d_I)}\right)\right)\in\Fqm^{d_I}.
 \end{equation}

Defining the root-finding matrix
 \begin{equation}\label{eq:rootFindingMatrixLILRS}
 \RFmat\defeq
 \begin{pmatrix}
 \phantom{\aut^{-}(}\Q_0        &     &  &              \\
 \autMat{\Q}{1}{-1}         & \autMat{\Q}{0}{-1}  &        &              \\[-3pt]
 \vdots             & \autMat{\Q}{1}{-2}  & \ddots &              \\[-3pt]
 \autMat{\Q}{\degConstraint-k}{-(\degConstraint-k)} & \vdots    & \ddots & \autMat{\Q}{0}{-(k-1)}       \\[-3pt]
              & \autMat{\Q}{\degConstraint-k}{-(\degConstraint-k-1)}    & \ddots & \autMat{\Q}{1}{-k\phantom{(-1)}}           \\[-5pt]
                                          &     & \ddots & \vdots          \\
              &           & & \autMat{\Q}{\degConstraint-k}{-(\degConstraint-1)}
 \end{pmatrix}\in\Fqm^{\degConstraint d_I \times \intOrder k}
\end{equation}
and the vectors
\begin{equation*}
 \RFvec\defeq\left(\vec{f}_0,\autVec{\f}{1}{-1},\dots,\autVec{\f}{k-1}{-(k-1)}\right)^\top\in\Fqm^{\intOrder k}
\end{equation*}
and
\begin{equation*}
 \vec{q}_{0}\defeq\left(\vec{q}_{0,0},\autVec{\vec{q}_{0,1}}{}{-1},\dots, \autVec{\vec{q}_{0,\degConstraint-1}}{}{-(\degConstraint-1)}\right)^\top\in\Fqm^{\degConstraint d_I}
\end{equation*}
as in~\cite{bartz2022fast} we can write the root-finding system~\eqref{eq:rootFindingEquationLILRS} as
\begin{equation}\label{eq:rootFindingSystemFqmLILRS}
 \RFmat\cdot\RFvec=-\vec{q}_0.
\end{equation}

The root-finding system can be solved efficiently by the minimal approximant bases method in~\cite[Algorithm~7]{bartz2020fast} (see also~\cite[Section~IV.C]{bartz2022fast}) requiring at most $\softoh{\intOrder^\omega \OMul{n}}$ operations in $\Fqm$.

\subsubsection{List Decoding}\label{subsubsec:LILRS_list}

In general, the root-finding matrix $\RFmat$ in~\eqref{eq:rootFindingSystemFqmLILRS} can be rank deficient.
In this case we obtain a \emph{list} of potential message polynomials $f_1,\dots,f_\intOrder$.
By~\cite[Proposition~4]{bartz2022fast} the root-finding system in~\eqref{eq:rootFindingEquationLILRS} has at most $q^{m(k(\intOrder-1))}$ solutions $f_,\dots,f_\intOrder \in\SkewPolyringZeroDer_{<k}$.
In general, we have that $k\leq\nTransmit$, where $\nTransmit\leq\shots m$.
Hence, for $m\approx\nTransmit/\shots$ we get a worst-case list size of $q^{\frac{\nTransmit}{\shots}(k(\intOrder-1))}$.

\begin{algorithm}[ht]
    \caption{\algoname{List Decoding of \ac{LILRS} Codes}}
    \label{alg:list_dec_LILRS}

    \begin{algorithmic}[1]

        \Statex \textbf{Input:} A tuple containing the basis matrices $\U=(\shot{\U}{1},\shot{\U}{2}, \dots, \shot{\U}{\shots})\in\prod_{i=1}^{\shots}\Fqm^{\nReceiveShot{i}\times(\intOrder+1)}$ for the output $\rxSpaceVec=(\rxSpaceShot{1},\rxSpaceShot{2},\dots,\rxSpaceShot{\shots}) \in \Grassm{\Nvec,\nReceiveVec}$ of an $\shots$-shot operator channel with overall $\insertions$ insertions and $\deletions$ deletions for input $\txSpaceVec(\f)\in\liftedIntLinRS{\vecbeta,\a,\shots,\intOrder;\nTransmitVec,k}$

        \Statex

        \Statex \textbf{Output:} A list $\List$ containing message polynomial vectors $\f=(f_1,\dots,f_\intOrder) \in \SkewPolyringZeroDer_{<k}^\intOrder$ that satisfy~\eqref{eq:rootFindingEquationLILRS}

        \Statex

        \State Find left $\SkewPolyringZeroDer$-linearly independent $Q^{(1)},\dots,Q^{(\IntParam)} \in \Qspace \setminus \{0\}$ whose left $\SkewPolyringZeroDer$-span contains the $\Fqm$-linear solution space $\Qspace$ of Problem~\ref{prob:skewIntProblemGenOpLILRS}

        \State Using $Q^{(1)},\dots,Q^{(\IntParam)}$, find the list $\List\subseteq\SkewPolyringZeroDer^{\intOrder}_{<k}$ of all $\f=(f_1,\dots,f_\intOrder) \in \SkewPolyringZeroDer_{<k}^\intOrder$ that satisfy~\eqref{eq:rootFindingEquationLILRS}

        \State \Return $\List$
    
    \end{algorithmic}
\end{algorithm}

\begin{theorem}[List Decoding of \ac{LILRS} Codes]\label{thm:list_dec_LILRS}
 Let $\rxSpaceVec \in \Grassm{\Nvec,\nReceiveVec}$ be a tuple of received subspaces of a transmission of a codeword $\txSpaceVec\in\liftedIntLinRS{\vecbeta,\a,\shots,\intOrder;\nTransmitVec,k}$ over an $\shots$-shot operator channel with overall $\insertions$ insertions and $\deletions$ deletions.
 If the number of overall insertions $\insertions$ and deletions $\deletions$ satisfy 
 \begin{equation}
  \insertions+\intOrder \deletions<\intOrder(\nTransmit-k+1),
 \end{equation}
 then a list $\List$ of size 
 \begin{equation}
    |\List|\leq q^{m(k(\intOrder-1))}
 \end{equation}
 containing all message polynomial vectors $\f\in\SkewPolyringZeroDer_{<k}^\intOrder$ corresponding to codewords $\txSpaceVec(\f)\in\liftedIntLinRS{\vecbeta,\a,\shots,\intOrder;\nTransmitVec,k}$ that are $(\insertions,\deletions)$-reachable from $\rxSpaceVec$ can be found requiring at most $\softoh{\intOrder^\omega\OMul{\nReceive}}$ operations in $\Fqm$.
\end{theorem}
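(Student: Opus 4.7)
The plan is to combine the three ingredients that have already been prepared in the preceding subsections: existence of a nonzero interpolation polynomial, the decoding radius derived in Theorem~\ref{thm:dec_region_lilrs}, and a complexity analysis of the interpolation and root-finding steps.

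First, I would fix the degree constraint $\degConstraint = \lceil (\nReceive+\intOrder(k-1)+1)/(\intOrder+1) \rceil$, which by~\eqref{eq:existenceCondLILRS} guarantees that Problem~\ref{prob:skewIntProblemGenOpLILRS} admits a nonzero solution, so $\Qspace\neq\{0\}$ and a left $\SkewPolyringZeroDer$-basis $Q^{(1)},\dots,Q^{(\IntParam)}$ of $\Qspace$ exists. Next, I would invoke Theorem~\ref{thm:dec_region_lilrs}: under the hypothesis $\insertions+\intOrder\deletions<\intOrder(\nTransmit-k+1)$, every codeword $\txSpaceVec(\f)\in\liftedIntLinRS{\vecbeta,\a,\shots,\intOrder;\nTransmitVec,k}$ that is $(\insertions,\deletions)$-reachable from $\rxSpaceVec$ has the property that its message tuple $\f=(f_1,\dots,f_\intOrder)$ satisfies $P(x)=Q_0(x)+\sum_{l=1}^{\intOrder}Q_l(x)f_l(x)=0$ for every $Q\in\Qspace$, hence in particular for each $Q^{(i)}$. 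Thus $\List$ is contained in the set of common $\SkewPolyringZeroDer^{\intOrder}_{<k}$-solutions of the system~\eqref{eq:rootFindingEquationLILRS} induced by $Q^{(1)},\dots,Q^{(\IntParam)}$, which is the output of Step~2 of Algorithm~\ref{alg:list_dec_LILRS}, so the algorithm is correct.

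For the list-size bound, I would argue via the linearized root-finding system~\eqref{eq:rootFindingSystemFqmLILRS} derived in Section~\ref{subsubsec:RF_LILRS}: since the solution set $\f\in\SkewPolyringZeroDer^{\intOrder}_{<k}$ is the affine preimage under the $\Fqm$-linear map $\RFmat$ of the vector $-\q_0$, its cardinality is at most $q^{m\cdot\dim_{\Fqm}\ker(\RFmat)}$; and because the system has $\intOrder k$ unknowns of which at least $k$ are constrained non-trivially by any single valid $Q$ (this is the content of~\cite[Proposition~4]{bartz2022fast} as cited above Theorem~\ref{thm:list_dec_LILRS}), the kernel has $\Fqm$-dimension at most $k(\intOrder-1)$, yielding $|\List|\leq q^{m(k(\intOrder-1))}$.

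For the complexity, I would break Algorithm~\ref{alg:list_dec_LILRS} into its two steps. Step~1 is an instance of a minimal approximant basis computation whose cost is $\softoh{\intOrder^\omega\OMul{\nReceive}}$ operations in $\Fqm$, as discussed after~\eqref{eq:intMatrixLILRS} (using either~\cite{bartz2022fast} or the divide-and-conquer Kötter interpolation of~\cite{bartz2021fastSkewKNH}). Step~2 reduces to solving the linearized system~\eqref{eq:rootFindingSystemFqmLILRS} over $\Fqm$, which, by the same minimal approximant bases technique referenced in Section~\ref{subsubsec:RF_LILRS}, can also be accomplished in $\softoh{\intOrder^\omega\OMul{\nReceive}}$ operations since $k\leq\nTransmit\leq\nReceive$. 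Adding the two terms gives the claimed overall complexity.

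The only non-routine step is the list-size argument: one has to be careful that the bound $q^{m(k(\intOrder-1))}$ holds uniformly over the chosen $\Fqm$-basis of $\Qspace$ and does not deteriorate when $\intOrder'<\intOrder$, but this follows directly from the rank properties of $\RFmat$ invoked via~\cite[Proposition~4]{bartz2022fast}; all remaining steps are immediate consequences of the earlier lemmas and theorems in this section.
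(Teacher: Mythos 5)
Your proposal is correct and follows essentially the same route as the paper, whose proof simply cites Lemma~\ref{lem:decConditionLILRS}, Theorem~\ref{thm:dec_region_lilrs}, the list-size bound from \cite[Proposition~4]{bartz2022fast}, and the interpolation/root-finding complexity claims from the surrounding text; you have reassembled precisely those ingredients. The only minor imprecision is the parenthetical $k\leq\nTransmit\leq\nReceive$ (the inequality $\nTransmit\leq\nReceive$ fails when $\deletions>\insertions$; what one actually needs is $k\leq\degConstraint\leq\nReceive$), but this does not affect the argument.
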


\begin{proof}
    The proof follows directly from Lemma~\ref{lem:decConditionLILRS}, Theorem~\ref{thm:dec_region_lilrs} and the discussion above.
\end{proof}

\subsubsection{Probabilistic Unique Decoding}\label{subsubsec:LILRS_unique}

We now consider the interpolation-based decoder from Section~\ref{subsec:decodingLILRS} as a probabilistic unique decoder which either returns a unique solution (if the list size is equal to one) or a decoding failure. 
The main idea is to use a basis for the $d_I$-dimensional $\Fqm$-linear solution space $\Qspace$ of the interpolation system~\eqref{eq:intSystemLILRS} to set up the root-finding matrix~\eqref{eq:rootFindingMatrixLILRS} which in turn facilitates that the root-finding matrix $\RFmat$ can have full rank.

Using similar arguments as in~\cite{bartz2018efficient,bartz2017algebraic,bartz2022fast} we can lower bound the dimension $d_I$ of the $\Fqm$-linear solution space $\Qspace$ of Problem~\ref{prob:skewIntProblemGenOpLILRS}.

\begin{lemma}[Dimension of Solution Space]\label{lem:dim_Qspace_LILRS}
 Let $\insertions$ and $\deletions$ satisfy~\eqref{eq:listDecRegion}.
 Then the dimension $d_I=\dim(\Qspace)$ of the $\Fqm$-linear solution space $\Qspace$ of Problem~\ref{prob:skewIntProblemGenOpLILRS} satisfies
 \begin{equation}
  d_I\geq \intOrder(\degConstraint+1)-\intOrder k-\insertions.
 \end{equation}
\end{lemma}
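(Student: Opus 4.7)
The plan is to exhibit a large explicit subspace of $\Qspace$. First, the $\Fqm$-linear solution space $\Qspace$ depends only on the received subspaces $\rxSpaceShot{i}$ and not on the chosen bases: because the generalized operator evaluation is $\Fq$-linear in its second argument, any basis change within each shot (by an invertible matrix over $\Fq$) transforms the $\nReceive$ interpolation conditions into an equivalent linear system. I would therefore switch to a basis whose first $\nTransmitShot{i}-\deletionsShot{i}$ rows are those supplied by Lemma~\ref{lem:decConditionLILRS}, namely $(\zeta_j^{(i)},\opev{f_1}{\zeta_j^{(i)}}{a_i},\dots,\opev{f_\intOrder}{\zeta_j^{(i)}}{a_i})$ with $\Fq$-linearly independent $\zeta_j^{(i)}$ in each shot, and whose remaining $\insertionsShot{i}$ rows span $\errSpaceShot{i}$.

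In this basis, the composition property $\opev{(Q_l\cdot f_l)}{\zeta}{a}=\opev{Q_l}{\opev{f_l}{\zeta}{a}}{a}$ collapses each of the $\nTransmit-\deletions$ non-corrupted conditions into $\opev{P}{\zeta_j^{(i)}}{a_i}=0$, where $P(x):=Q_0(x)+\sum_{l=1}^\intOrder Q_l(x)f_l(x)$. Hence any tuple $(Q_0,\dots,Q_\intOrder)$ that satisfies the degree constraints of Problem~\ref{prob:skewIntProblemGenOpLILRS} and the relation $P\equiv 0$ meets all $\nTransmit-\deletions$ non-corrupted interpolation conditions automatically.

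Next, I would count this auxiliary subspace. Let $\Qspace''$ denote the set of such tuples. Since $Q_0=-\sum_{l=1}^\intOrder Q_l f_l$ is forced, and its degree is at most $(\degConstraint-k)+(k-1)=\degConstraint-1<\degConstraint$, one may parameterize $\Qspace''$ by free choices of $(Q_1,\dots,Q_\intOrder)\in\SkewPolyringZeroDer_{<\degConstraint-k+1}^{\intOrder}$, so that $\dim \Qspace''=\intOrder(\degConstraint-k+1)$.

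Finally, the remaining $\insertions$ corrupted interpolation conditions carve out $\Qspace''\cap\Qspace$ as the solution of at most $\insertions$ additional $\Fqm$-linear equations, whence $\dim(\Qspace\cap\Qspace'')\ge \intOrder(\degConstraint-k+1)-\insertions=\intOrder(\degConstraint+1)-\intOrder k-\insertions$, and the bound follows from $\Qspace\cap\Qspace''\subseteq\Qspace$. The only delicate step is the basis-change reduction, which hinges on the fact that the shot-wise change-of-basis matrices lie over $\Fq$ while the generalized operator evaluation is $\Fq$-linear in its second argument; the remaining arguments are a direct dimension count.
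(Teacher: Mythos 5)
Your proof is correct, and it is essentially the same argument as the paper's, just with different bookkeeping: you work directly with the polynomial constraint $P\equiv 0$ to exhibit a large explicit subspace $\Qspace''\subseteq\Qspace$ cut by $\leq\insertions$ further linear conditions, whereas the paper performs the equivalent row (basis) and column (substitution $Q_0\mapsto P$) transformation of $\intMat$, bounds $\rk_{q^m}(\intMat)\leq\degConstraint+\insertions$ using the full column rank of $\Lambda_{\degConstraint}(\veczeta)_\a^\top$, and then invokes rank--nullity. The two routes are exactly dual, and your direct subspace construction makes the implicit column operation in the paper's ``same column space'' claim explicit, which is arguably cleaner.
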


\begin{proof}
 Let
 \begin{equation}
    \left\{\left(\zeta_j^{(i)},\opev{f_1}{\zeta_j^{(i)}}{a_i},\dots,\opev{f_\intOrder}{\zeta_j^{(i)}}{a_i}\right):j\in\intervallincl{1}{\nTransmitShot{i}-\deletionsShot{i}}\right\}
 \end{equation}
 be a basis for each non-corrupted intersection space $\rxSpaceShot{i}\cap\txSpaceShot{i}$ where $\zeta_1^{(i)},\dots,\zeta_{\nTransmitShot{i}-\deletionsShot{i}}^{(i)}$ are $(\nTransmitShot{i}-\deletionsShot{i})$ $\Fq$-linearly independent elements from $\Fqm$ for all $i=1,\dots,\shots$. 
 Define the vector $\vec{\zeta}\defeq(\veczeta^{(1)} \mid \veczeta^{(2)} \mid \dots \mid \veczeta^{(\shots)})\in\Fqm^{\nTransmit-\deletions}$ where $\veczeta^{(i)}=(\zeta_1^{(i)},\dots,\zeta_{\nTransmitShot{i}-\deletionsShot{i}}^{(i)})$ for all $i=1,\dots,\shots$.
 Let 
 \begin{equation}
   \left\{\left(\epsilon_j^{(i)},\tilde{e}_{j,1}^{(i)},\dots,\tilde{e}_{j,\intOrder}^{(i)},\right):j\in\intervallincl{1}{\insertionsShot{i}}\right\}
 \end{equation}
 be a basis for the error space $\errSpaceShot{i}$ for all $i=1,\dots,\shots$ and define
 \begin{align*}
   \vecepsilon&=(\vecepsilon^{(1)} \mid \vecepsilon^{(2)} \mid \dots \mid \vecepsilon^{(\shots)})\in\Fqm^{\insertionsShot{i}}
   \quad\text{with}\quad
   \vecepsilon^{(i)}=(\epsilon_1^{(i)},\epsilon_2^{(i)},\dots,\epsilon_{\insertionsShot{i}}^{(i)}),
   \\
   \tilde{\vec{e}}_j&=(\tilde{\vec{e}}_j^{(1)} \mid \tilde{\vec{e}}_j^{(2)} \mid \dots \mid \tilde{\vec{e}}_j^{(\shots)})\in\Fqm^{\insertionsShot{i}}
   \quad\text{with}\quad
   \tilde{\vec{e}}_j^{(i)}=(\tilde{e}_{1,j}^{(i)},\tilde{e}_{2,j}^{(i)},\dots,\tilde{e}_{\insertionsShot{i},j}^{(i)}).
 \end{align*}
 Then the matrix
 \begin{equation}
   \widetilde{\R}_I=
   \begin{pmatrix}
    \Lambda_{\degConstraint}(\veczeta)_\a^\top & \0 & \dots & \0
    \\ 
    \Lambda_{\degConstraint}(\vecepsilon)_\a^\top & \Lambda_{\degConstraint-k+1}(\tilde{\vec{e}}_1)_\a^\top & \dots & \Lambda_{\degConstraint-k+1}(\tilde{\vec{e}}_\intOrder)_\a^\top
   \end{pmatrix}
   \in\Fqm^{\nReceive\times \degConstraint(\intOrder+1)-\intOrder(k-1)}
 \end{equation}
 has the same column space as the matrix $\intMat$ in~\eqref{eq:intMatrixLILRS}.
 Since $\SumRankWeight(\veczeta) = \nTransmit-\deletions$ and $\degConstraint\leq\nTransmit-\deletions$ (see~\eqref{eq:decDegreeConstraintLILRS}) we have that the matrix $\Lambda_{\degConstraint}(\veczeta)_\a^\top$ has $\Fqm$-rank $\degConstraint$.
 The last $\insertions$ rows of $\widetilde{\R}_I$ can increase the $\Fqm$-rank of $\widetilde{\R}_I$ by at most $\insertions$.
 Thus we have that $\rk_{q^m}(\intMat)=\rk_{q^m}(\widetilde{\R}_I)\leq \degConstraint+\insertions$.
 Hence, the dimension $d_I$ of the $\Fqm$-linear solution space $\Qspace$ of Problem~\ref{prob:skewIntProblemGenOpLILRS} satisfies
 \begin{align*}
  d_I=\dim(\rker(\intMat))&=\degConstraint(\intOrder+1)-\intOrder(k-1)-\rk_{q^m}(\intMat)
  \\ 
  &\geq \intOrder(\degConstraint+1)-\intOrder k -\insertions.
 \end{align*}
\end{proof}

The rank of the root-finding matrix $\RFmat$ can be full if and only if the dimension of the solution space of the interpolation problem $d_I$ is at least $\intOrder$, i.e. if
\begin{align}
 d_I\geq\intOrder\qquad\Longleftrightarrow\qquad
 \insertions&\leq\intOrder\degConstraint-\intOrder k\nonumber
 \\\Longleftrightarrow\qquad
 \insertions+\intOrder\deletions&\leq \intOrder(\nTransmit-k). \label{eq:decRegionLILRSprob}
\end{align}
The probabilistic unique decoding region in~\eqref{eq:decRegionLILRSprob} is only sightly smaller than the list decoding region in~\eqref{eq:listDecRegion}.
The improved decoding region for~\ac{LILRS} codes is illustrated in Figure~\ref{fig:decodingRegion}.
Recall from Remark~\ref{rem:rest_rx_spaces} that unlike the proposed decoder, the decoder in~\cite{martinez2019reliable} has the restriction that $\nReceive=\nTransmit$, which corresponds to the case that $\insertions = \deletions$.
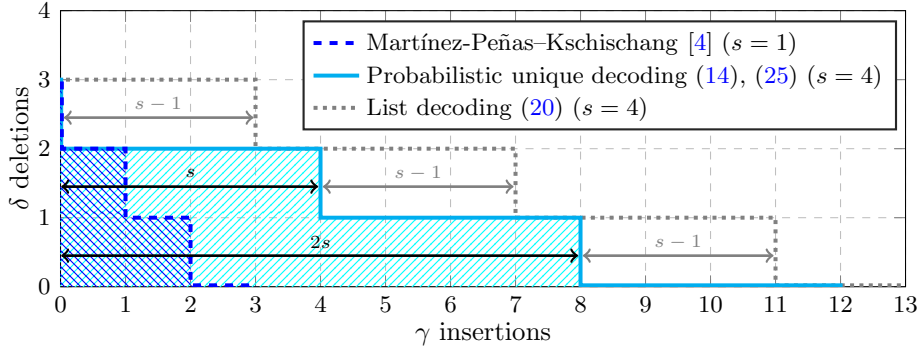
\begin{figure}[ht!]
\centering
% This file was created by matlab2tikz.
% Minimal pgfplots version: 1.3
%
%The latest updates can be retrieved from
%  http://www.mathworks.com/matlabcentral/fileexchange/22022-matlab2tikz
%where you can also make suggestions and rate matlab2tikz.
%
\definecolor{mycolor1}{rgb}{0.00000,0.44700,0.74100}%
\definecolor{mycolor2}{rgb}{0.85000,0.32500,0.09800}%
\begin{tikzpicture}

\begin{axis}[%
xmin=0,
xmax=13,
xtick={ 0,  1,  2,  3,  4,  5,  6,  7,  8,  9, 10, 11, 12, 13},
xlabel={$\insertions\text{ insertions}$},
xmajorgrids,
ymin=0,
ymax=4,
ytick={0, 1, 2, 3, 4},
ylabel={$\deletions\text{ deletions}$},
ymajorgrids,
legend style={at={(0.99,0.965)},anchor=north east,legend cell align=left,align=left, draw=white!15!black},
mystyle,
height=0.28\columnwidth,
width=0.85\columnwidth,
reverse legend
]

% \addplot[const plot,color=gray,dash pattern=on 1pt off 3pt on 3pt off 3pt,line width=2.0pt] plot table[row sep=crcr] {%
% -1	3\\
% 0	3\\
% 1	3\\
% 2	3\\
% 3	2\\
% 4	2\\
% 5	2\\
% 6	2\\
% 7	1\\
% 8	1\\
% 9	1\\
% 10	1\\
% 11	0\\
% 12	0\\
% 13	0\\
% 14	0\\
% 15	-1\\
% 16	-1\\
% };

\addplot[const plot, color=gray,  dotted, line width=1.5pt] plot table[row sep=crcr]{%
-2	-5\\
-1	3\\
0.02	3\\
1	3\\
2	3\\
3	2\\
4	2\\
5	2\\
6	2\\
7	1\\
8	1\\
9	1\\
10	1\\
11	0.02\\
12	0.02\\
13	0.02\\
14	0.02\\
15	-1\\
16	-1\\
17	-3\\
};
\addlegendentry{List decoding~\eqref{eq:listDecRegion} $(\intOrder=4)$};

\addplot[const plot, fill=cyan,color=cyan,  pattern=north east lines, pattern color = cyan, solid, line width=1.5pt] plot table[row sep=crcr] {%
-3	-3 \\
-2	3\\
-1	3\\
0.01	2\\
1	2\\
2	2\\
3	2\\
4	1\\
5	1\\
6	1\\
7	1\\
8	0.02\\
9	0.02\\
10	0.02\\
11	0.02\\
12	-3\\
};
% \addlegendentry{Interleaved subspace code $(\intOrder=4)$};
\addlegendentry{Probabilistic unique decoding~\eqref{eq:dec_region_LO_LILRS},~\eqref{eq:decRegionLILRSprob} $(\intOrder=4)$};

\addplot[const plot, fill = blue, color=blue, pattern=north west lines, pattern color = blue,  dashed, line width=1.5pt] plot table[row sep=crcr] {%
-2	-2\\
-1	3\\
0.02	2\\
1	1\\
2	0.02\\
3	-1\\
4	-2\\
5	-2\\
6	-2\\
7	-2\\
8	-2\\
9	-2\\
10	-2\\
11	-2\\
12	-2\\
13	-2\\
};
% \addlegendentry{Martinez-Kschischang code \cite{martinez2019reliable} $(\intOrder=1)$}; 
\addlegendentry{Mart{\'\i}nez-Pe{\~n}as--Kschischang \cite{martinez2019reliable} $(\intOrder=1)$};

% add arrows
\node[anchor=east] (A) at (axis cs: 0.01, 1.45){};
\node[anchor=west] (B) at (axis cs: 3.965,1.45){};
\draw [<->, line width=1pt] (A) edge (B);
\node (T) at (axis cs: 2,1.65){\footnotesize{$\intOrder$}};

\node[anchor=east] (C) at (axis cs: 0.01, 0.45){};
\node[anchor=west] (D) at (axis cs: 7.965,0.45){};
\draw [<->, line width=1pt] (C) edge (D);
\node (T) at (axis cs: 4,0.65){\footnotesize{$2\intOrder$}};

% add arrows
\node[anchor=east] (A) at (axis cs: 0.05, 2.45){};
\node[anchor=west] (B) at (axis cs: 2.97,2.45){};
\draw [<->, line width=1pt,color=gray] (A) edge (B);
\node[color=gray] (T) at (axis cs: 1.5,2.65){\footnotesize{$\intOrder-1$}};

% add arrows 
\node[anchor=east] (A) at (axis cs: 4.035, 1.45){};
\node[anchor=west] (B) at (axis cs: 6.97,1.45){};
\draw [<->, line width=1pt,color=gray] (A) edge (B);
\node[color=gray] (T) at (axis cs: 5.5,1.65){\footnotesize{$\intOrder-1$}};

% add arrows 
\node[anchor=east] (A) at (axis cs: 8.035, 0.45){};
\node[anchor=west] (B) at (axis cs: 10.97,0.45){};
\draw [<->, line width=1pt,color=gray] (A) edge (B);
\node[color=gray] (T) at (axis cs: 9.5,0.65){\footnotesize{$\intOrder-1$}};

\end{axis}  

\end{tikzpicture}%
\caption{Decoding region for Mart{\'\i}nez-Pe{\~n}as--Kschischang~\cite{martinez2019reliable} codes $(\intOrder=1)$ and for decoding of lifted $(\intOrder=4)$-interleaved linearized Reed--Solomon codes.
The decoding region for insertions increases with the interleaving order $\intOrder$.
}\label{fig:decodingRegion}
\end{figure}

Combining~\eqref{eq:decDegreeConstraintLILRS} and~\eqref{eq:decRegionLILRSprob}  we get the degree constraint for the probabilistic unique decoder (see~\cite{bartz2017algebraic})
\begin{equation}
  \degConstraintUnique=\left\lceil\frac{\nReceive+\intOrder k}{\intOrder+1}\right\rceil.
\end{equation} 

In order to get an estimate of the probability of successful decoding, we use similar assumptions as in~\cite{wachter2014list,bartz2017algebraic} to derive a heuristic upper bound on the decoding failure probability $P_f$.

The root-finding matrix $\RFmat$ in~\eqref{eq:rootFindingMatrixLILRS} contains a lower block-diagonal matrix with $\Q_0, \autMat{\Q}{0}{-1},\dots,\autMat{\Q}{0}{-(k-1)}$ on the diagonal.
Since $\rk_{q^m}(\autMat{\Q}{0}{i})=\rk_{q^m}(\Q_0)$ for all $i$, this implies that $\rk_{q^m}(\RFmat) = \intOrder k$ if $\rk_{q^m}(\Q_0) = \intOrder$.

Under the assumption that the coefficients $q_{i,j}^{(r)}$ are uniformly distributed over $\Fqm$ (see~\cite[Lemma~9]{wachter2014list,bartz2017algebraic}), we can upper bound the decoding failure probability $P_f \defeq \Pr(\rk_{q^m}(\RFmat) < \intOrder k)$ by the probability that the $(d_I \times \intOrder)$ matrix $\Q_0$ with uniformly distributed elements from $\Fqm$ has $\Fqm$-rank less than $\intOrder$ and get
\begin{equation}\label{eq:heur_upper_bound_Pfail_LILRS}
  P_f
  \leq \gammaq q^{-m(d_I-\intOrder+1)}
  \leq \gammaq q^{-m\left(\intOrder\left(\left\lceil\frac{\nReceive+\intOrder k}{\intOrder+1}\right\rceil-k\right)-\insertions+1\right)}.
\end{equation}
Note, that for $\degConstraint=\nTransmit-\deletions$ (see~\eqref{eq:decDegreeConstraintLILRS}) we get
\begin{equation*}
  P_f
  \leq \gammaq q^{-m(\insertionsmax-\insertions+1)}.
\end{equation*}

Note, that the assumption that the coefficients $q_{i,j}^{(r)}$ are uniformly distributed over $\Fqm$ does not reflect the distribution of the error space tuple $\errSpaceVec$.
Although there is evidence that this assumption is reasonable (see e.g.~\cite{hoermann2022folded} for folded \ac{LRS} codes), it does not reflect the actual error model of the multishot operator channel.

Similar as in~\cite[Lemma~8]{wachter2014list} for interleaved Gabidulin codes and~\cite[Theorem~4]{bartz2022fast} for \ac{ILRS} codes, the conditions of successful decoding of the interpolation-based decoder can be reduced to the conditions of the {\LOlike} decoder from Section~\ref{subsec:LO_dec_LILRS}.
This reduction allows to obtain an upper bound on the decoding failure probability since the distribution of the error space tuple $\errSpaceVec$ is considered in the derivation.
The results of the interpolation-based probabilistic unique decoder are summarized in Algorithm~\ref{alg:unique_dec_LILRS} and Theorem~\ref{thm:unique_dec_LILRS}.

\begin{algorithm}[ht]
    \caption{\algoname{Probabilistic Unique Decoding of \ac{LILRS} Codes}}\label{alg:unique_dec_LILRS}

    \begin{algorithmic}[1]
        \Statex \textbf{Input:} A tuple containing the basis matrices $\U=(\shot{\U}{1},\shot{\U}{2}, \dots, \shot{\U}{\shots})\in\prod_{i=1}^{\shots}\Fqm^{\nReceiveShot{i}\times(\intOrder+1)}$ for the output $\rxSpaceVec=(\rxSpaceShot{1},\rxSpaceShot{2},\dots,\rxSpaceShot{\shots})\in \Grassm{\Nvec,\nReceiveVec}$ of an $\shots$-shot operator channel with overall $\insertions$ insertions and $\deletions$ deletions for input $\txSpaceVec(\f)\in\liftedIntLinRS{\vecbeta,\a,\shots,\intOrder;\nTransmitVec,k}$

        \Statex
    
        \Statex \textbf{Output:} Message polynomial vector $\f=(f_1,\dots,f_\intOrder) \in \SkewPolyringZeroDer_{<k}^\intOrder$ or ``decoding failure''

        \Statex

        \State Define the generalized operator evaluation maps as in~\eqref{eq:defDiGenOpLILRS}

        \State Find left $\SkewPolyringZeroDer$-linearly independent $Q^{(1)},\dots,Q^{(\IntParam)} \in \Qspace \setminus \{0\}$ whose left $\SkewPolyringZeroDer$-span contains the $\Fqm$-linear solution space $\Qspace$ of Problem~\ref{prob:skewIntProblemGenOpLILRS}

        \If{$\IntParam=\intOrder$}

            \State Use $Q^{(1)},\dots,Q^{(\intOrder)}$ to find the unique vector $\f= (f_1,\dots,f_\intOrder) \in \SkewPolyringZeroDer_{<k}^\intOrder$ that satisfies~\eqref{eq:rootFindingEquationLILRS}

        \Else
            \State \Return ``decoding failure''
        \EndIf

  \\ 
    \Return Message polynomial vector $\f=(f_1,\dots,f_\intOrder) \in \SkewPolyringZeroDer_{<k}^\intOrder$

    \end{algorithmic}
\end{algorithm}

\begin{theorem}[Probabilistic Unique Decoding of \ac{LILRS} Codes]\label{thm:unique_dec_LILRS}
 Let $\rxSpaceVec \in \Grassm{\Nvec,\nReceiveVec}$ be a tuple of received subspaces of a transmission of a codeword $\txSpaceVec\in\liftedIntLinRS{\vecbeta,\a,\shots,\intOrder;\nTransmitVec,k}$ over random instance of the $\shots$-shot operator channel (see Remark~\ref{rem:rand_instance_op_channel}) with overall $\insertions$ insertions and $\deletions$ deletions.
 If the number of overall insertions $\insertions$ and deletions $\deletions$ satisfy 
 \begin{equation}
  \insertions+\intOrder \deletions\leq\intOrder(\nTransmit-k),
 \end{equation}
 then a the unique message polynomial vector $\f\in\SkewPolyringZeroDer_{<k}^\intOrder$ corresponding to the codeword $\txSpaceVec(\f)\in\liftedIntLinRS{\vecbeta,\a,\shots,\intOrder;\nTransmitVec,k}$ satisfying $\SumSubspaceDist(\txSpaceVec(\f),\rxSpaceVec)=\insertions+\deletions$ can be found with probability at least
 \begin{equation}\label{eq:lower_bound_Psucc_LILRS}
    \Pr(\text{success}) \geq 1 - \gammaq^{\ell+1} q^{-m(\insertionsmax-\insertions+1)}
 \end{equation}
 requiring at most $\softoh{\intOrder^\omega\OMul{\nReceive}}$ operations in $\Fqm$.
\end{theorem}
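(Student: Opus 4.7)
The plan is to reduce the success condition of Algorithm~\ref{alg:unique_dec_LILRS} to the success condition of the {\LOlike} decoder from Theorem~\ref{thm:LO_decoder_LILRS}, so that the probability estimate in Lemma~\ref{lem:LO_LILRS_probability_full_Fqm_rank} can be invoked directly. First I would observe that Algorithm~\ref{alg:unique_dec_LILRS} returns the correct and unique message polynomial vector $\f$ whenever the root-finding matrix $\RFmat$ from~\eqref{eq:rootFindingMatrixLILRS} has full column rank $\intOrder k$ over $\Fqm$: by Theorem~\ref{thm:dec_region_lilrs} and Lemma~\ref{lem:decConditionLILRS} the correct $\f$ is always \emph{a} solution of the root-finding system~\eqref{eq:rootFindingSystemFqmLILRS}, and full column rank of $\RFmat$ forces it to be the \emph{unique} such solution.

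Next I would use the lower block-triangular structure of $\RFmat$, whose diagonal consists of the blocks $\Q_0, \autMat{\Q}{0}{-1}, \dots, \autMat{\Q}{0}{-(k-1)}$ all having the same $\Fqm$-rank as $\Q_0$, to reduce $\rk_{q^m}(\RFmat) = \intOrder k$ to $\rk_{q^m}(\Q_0) = \intOrder$. Note that the decoding condition $\insertions + \intOrder \deletions \leq \intOrder(\nTransmit-k)$ combined with Lemma~\ref{lem:dim_Qspace_LILRS} already guarantees $d_I \geq \intOrder$, so that $\Q_0$ has sufficiently many rows. The heart of the argument is then to show that, under this same condition, the $d_I \times \intOrder$ matrix $\Q_0$ built from a basis $Q^{(1)},\dots,Q^{(d_I)}$ of $\Qspace$ actually attains $\Fqm$-rank $\intOrder$ whenever $\rk_{q^m}\!\bigl(\Lambda_{\nTransmit-\deletions-k}(\bar{\Z})_{\a}\bigr) = \insertions$, with $\bar{\Z}$ as in~\eqref{eq:def_Z}. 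The approach here mirrors the proofs of~\cite[Lemma~8]{wachter2014list} and~\cite[Theorem~4]{bartz2022fast}: after the $\Fq$-linear change of basis $\W$ from Lemma~\ref{lem:properties_LODecMat_LILRS}, any interpolation polynomial $Q \in \Qspace$ decomposes into a part supported on the non-corrupted intersections $\txSpaceShot{i} \cap \rxSpaceShot{i}$ and a part supported on the error spaces $\errSpaceShot{i}$; the non-corrupted part forces, via Lemma~\ref{lem:decConditionLILRS}, the identity $P(x) = Q_0(x) + \sum_l Q_l(x) f_l(x) = 0$, so the entire $\Q_0$-column profile is governed by the error contribution, which in turn is controlled by $\bar{\Z}$ exactly as in the {\LOlike} decoder.

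Once the success condition is reduced to $\rk_{q^m}\!\bigl(\Lambda_{\nTransmit-\deletions-k}(\bar{\Z})_{\a}\bigr) = \insertions$, Lemma~\ref{lem:LO_LILRS_probability_full_Fqm_rank} directly yields the lower bound on $\Pr(\text{success})$ in~\eqref{eq:lower_bound_Psucc_LILRS}. For the complexity, the interpolation step can be solved by the minimal approximant bases method from~\cite{bartz2022fast} in $\softoh{\intOrder^\omega \OMul{\nReceive}}$ operations in $\Fqm$; the root-finding step, by~\cite[Algorithm~7]{bartz2020fast}, requires at most $\softoh{\intOrder^\omega \OMul{n}}$ operations in $\Fqm$ as discussed after~\eqref{eq:rootFindingSystemFqmLILRS}. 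Since $k \leq \nTransmit \leq \nReceive$, both contributions fit within $\softoh{\intOrder^\omega \OMul{\nReceive}}$.

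The main obstacle is the rank-reduction step in the second paragraph: carefully tracking how a basis of $\Qspace$ translates, under the change of basis $\W$ that block-diagonalizes the decoding matrix, into a statement about $\Q_0$ requires recovering exactly the block structure from Lemma~\ref{lem:properties_LODecMat_LILRS} inside the interpolation module. The remaining pieces, namely invoking Lemma~\ref{lem:LO_LILRS_probability_full_Fqm_rank} and tallying up the complexities of the two algorithmic steps, are essentially bookkeeping.
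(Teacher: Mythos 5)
Your high-level plan matches the paper's: reduce the success of the interpolation-based unique decoder to the full-rank condition $\rk_{q^m}(\Q_0)=\intOrder$, then further to the {\LOlike} decoder's success condition, and invoke Lemma~\ref{lem:LO_LILRS_probability_full_Fqm_rank}. However, the central reduction step — showing that $\rk_{q^m}(\bar{\Z})=\insertions$ forces $\rk_{q^m}(\Q_0)=\intOrder$ — is left as a gesture, and the mechanism you sketch (``$Q$ decomposes into a non-corrupted part and an error part, and the non-corrupted part forces $P=0$'') is imprecise. The polynomial $Q$ does not decompose; the interpolation \emph{conditions} do, and more importantly this observation alone does not bound the rank of $\Q_0$. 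The paper proves the contrapositive: if $\rk_{q^m}(\Q_0)<\intOrder$ then (because the root-finding system is consistent) the augmented matrix $\Qbar_0=(\Q_0\mid\q_{0,0}^\top)$ also has rank $<\intOrder$, so one can $\Fqm$-linearly combine the basis $Q^{(1)},\dots,Q^{(d_I)}$ to produce at least $d_I-\intOrder+1$ independent interpolation polynomials $\widetilde{Q}^{(r)}\in\Qspace$ all of whose zeroth coefficients vanish. Their coefficient vectors lie in $\lker(\intMat^\top)$, and because the zeroth coefficients are zero they also lie in the left kernel of the \emph{shifted} matrix $\widetilde{\R}_I^\top$. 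The decisive observation — absent from your sketch — is that for the extremal degree constraint $\degConstraint=\nTransmit-\deletions$ one has $\widetilde{\R}_I^\top=\op{\a}{\LODecMat}$, and applying the operator $\op{\a}{\cdot}$ is rank-preserving, so $\rk_{q^m}(\LODecMat)<\nReceive-1$, which is exactly the failure condition of the {\LOlike} decoder. Your appeal to the change of basis $\W$ from Lemma~\ref{lem:properties_LODecMat_LILRS} is not what carries the reduction; $\W$ appears only inside the analysis of the {\LOlike} decoder itself, whereas the reduction works directly with $\intMat^\top$ and the shift identity above.

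You correctly flag this as the ``main obstacle,'' so the gap is acknowledged, but what remains open is not mere bookkeeping: without the construction of the $\widetilde{Q}^{(r)}$ and the identification $\widetilde{\R}_I^\top=\op{\a}{\LODecMat}$, the claim that ``the $\Q_0$-column profile is controlled by $\bar{\Z}$'' has no quantitative content. The complexity accounting is fine and matches the paper.
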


\begin{proof}
For the purpose of the proof (but not algorithmically), we consider the root-finding problem set up with an $\Fqm$-basis $Q^{(1)},\dots,Q^{(d_I)}$ of $\Qspace$.  
The unique decoder fails if there are at least two distinct roots $\f$ and $\f'$.
In this case, the $\Fqm$-linear system $\RFmat\cdot\RFvec=-\vec{q}_0$ in~\eqref{eq:rootFindingSystemFqmLILRS} set up with the $\Fqm$-basis $\widetilde{Q}^{(r)}\in\Qspace$ for $r=1,\dots,d_I$ has at least two solutions.
This means that $\RFmat \in \Fq^{\degConstraint d_I\times \intOrder k}$ must have rank $<\intOrder k$.

The matrix $\RFmat$ contains a lower block triangular matrix with matrices $\Q_0,\aut^{-1}(\Q_0),\dots,\aut^{-(k-1)}(\Q_0)$ on the upper diagonal, which have all $\Fqm$-rank $\rk_{q^m}(\Q_0)$ (see~\cite{bartz2022fast}).
Thus, if $\rk_{q^m}(\Q_0)=\intOrder$ the matrix $\RFmat$ has full $\Fqm$-rank $\intOrder k$. Therefore, $\rk_{q^m}(\RFmat)<\intOrder k$ implies that $\Q_0$ has rank $<s$.

Since the root-finding system~\eqref{eq:rootFindingSystemFqmLILRS} has at least one solution $\RFvec$, there is a vector $\f_0 \in \Fqm^\intOrder$ such that
\begin{equation*}
\Q_0 \f_0 = - \q_{0,0}^\top.
\end{equation*}
Thus, the matrix
\begin{equation*}
\Qbar_0 := 
\begin{pmatrix}
\Q_0 &  \q_{0,0}^\top
\end{pmatrix} \in \Fqm^{d_I\times (s+1)}
\end{equation*}
has rank $\rk_{q^m}(\Qbar_0) = \rk_{q^m}(\Q_0) < s$.
Hence, there are at least $d_I-s+1$ $\Fqm$-linearly independent polynomials $\widetilde{Q}^{(1)},\dots,\allowbreak\widetilde{Q}^{(d_I-s+1)}\in\Qspace$ such that their zeroth coefficients $\widetilde{q}_{l,0}^{\,(1)},\dots,\widetilde{q}_{l,0}^{\,(d_I-\intOrder+1)}$ are zero for all $l=0,\dots,\intOrder$ (obtained by suitable) $\Fqm$-linear combinations of the original basis polynomials $Q^{(1)},\dots,Q^{(d_I)}$, such that the corresponding $\Fqm$-linear row operations on $\Qbar_0$ give a $(d_I-\intOrder+1)\times(\intOrder+1)$ zero matrix (recall that $\Qbar_0$ has $d_I$ rows, but rank at most $s-1$).

The $d_I-s+1$ $\Fqm$-linearly independent coefficient vectors of $\widetilde{Q}^{(1)},\dots,\widetilde{Q}^{(d_I-s+1)}$ of the form~\eqref{eq:defIntVec_LILRS} are in the left kernel of the matrix
\begin{equation*}
\intMat^\top=
\begin{bmatrix}
\Lambda_{\degConstraint}(\vecxi)_\a \\
\Lambda_{\degConstraint-k+1}(\u_1)_\a \\
\vdots \\
\Lambda_{\degConstraint-k+1}(\u_\intOrder)_\a
\end{bmatrix}\in\Fqm^{\degConstraint(\intOrder+1)-\intOrder(k-1)\times \nReceive}.
\end{equation*}
Since the zeroth components $\widetilde{q}_{l,0}^{\,(r)}$ of all $\widetilde{Q}^{(r)}$ are zero for all $l=0,\dots,\intOrder$ and $r=1,\dots,d_I-\intOrder+1$, this means that the left kernel of the matrix
\begin{equation*}
\widetilde{\R}_I^\top=
\begin{bmatrix}
\op{\a}{\Lambda_{\degConstraint-1}(\vecxi)_\a} \\
\op{\a}{\Lambda_{\degConstraint-k}(\u_1)_\a} \\
\vdots \\
\op{\a}{\Lambda_{\degConstraint-k}(\u_\intOrder)_\a}
\end{bmatrix}\Fqm^{\degConstraint(\intOrder+1)-\intOrder k -1 \times n}
\end{equation*}
has dimension at least $d_I-s+1$. 
The maximum decoding region corresponds to the degree constraint $\degConstraint=\nReceive-\insertionsmax=\nTransmit-\deletions$ (see~\eqref{eq:decDegreeConstraintLILRS}) and thus 
\begin{align*}
 \dim(\lker(\widetilde{\R}_I^\top))
 &\geq\intOrder(\nTransmit-\deletions+1)-\intOrder k-\insertionsmax-\intOrder+1
 \geq1.
\end{align*}
Therefore, we have that
\begin{align*}
 \rk_{q^m}(\widetilde{\R}_I^\top)
 &\leq\degConstraint(\intOrder+1)-\intOrder k -1 -\dim(\lker(\widetilde{\R}_I^\top))
 \\ 
 &<(\nTransmit-\deletions)(\intOrder+1)-\intOrder k-1
 \\
 &=\nReceive-1. 
\end{align*}
Observe, that for $\degConstraint=\nTransmit-\deletions$ we have that
\begin{equation*}
  \widetilde{\R}_I^\top=\op{\a}{\LODecMat}
\end{equation*}
where $\LODecMat$ is the Loidreau--Overbeck decoding matrix from~\eqref{eq:LO_matrix_LILRS}.
By~\cite[Lemma~3]{bartz2022fast} the $\Fqm$-rank of $\LODecMat$ and $\op{\a}{\LODecMat}$ is the same and thus we have that
\begin{align*}
  \rk_{q^m}(\LODecMat)=\rk_{q^m}(\widetilde{\R}_I)< \nReceive-1
\end{align*}
which shows that in this case the {\LOlike} decoder fails as well. 
Therefore, we conclude that 
\begin{equation}
  \Pr(\rk_{q^m}(\RFmat)<\intOrder k)\leq\Pr(\rk_{q^m}(\Q_0)<\intOrder)\leq\Pr(\rk_{q^m}(\LODecMat)<\nReceive-1)
\end{equation}
and thus the lower bound on the probability of successful decoding follows from Theorem~\ref{thm:LO_decoder_LILRS}.
The complexity statement follows from~\cite[Corollary~1]{bartz2022fast} and~\cite[Corollary~2]{bartz2022fast}.
\end{proof}

The lower bound on the probability of successful decoding in~\eqref{eq:lower_bound_Psucc_LILRS} yields an upper bound on the decoding failure probability $P_f$, i.e. we have that
\begin{equation}\label{eq:upper_bound_Pfail_LILRS}
  P_f\leq\gammaq^{\ell+1} q^{-m(\insertionsmax-\insertions+1)}.
\end{equation}

The simulations results in Section~\ref{subsec:sim_results_LILRS} show that the upper bound on the decoding failure probability in~\eqref{eq:upper_bound_Pfail_LILRS} gives a good estimate of the performance of the probabilistic unique decoder. 

%----------------------------------------------------------------------------
% Insertion/Deletion-Correction with the Complementary Code
%----------------------------------------------------------------------------

\subsection{Insertion/Deletion-Correction with the Complementary Code}\label{subsec:complementaryLILRS}
In~\cite[Section~4.4]{bartz2018efficient} it was shown, that the complementary of an interleaved (single-shot) subspace code is capable of correcting more deletions than insertions.
We will now briefly describe how to extend the concept from~\cite{bartz2018efficient} to the multishot scenario. 
In particular, we show that the complementary code of a \ac{LILRS} code is more resilient against \emph{deletions} than \emph{insertions}.
By using the arguments from~\cite[Lemma~14]{bartz2018efficient} and~\cite[Theorem~4]{bartz2018efficient} on each of the components of $\rxSpaceVec^\perp$ (and $\rxSpaceVec)$ we obtain the following result.

\begin{proposition}\label{prop:complementary_LILRS_decoding}
  Consider a \ac{LILRS} code $\mycode{C}=\liftedIntLinRS{\vecbeta,\a,\shots,\intOrder;\nTransmitVec,k}$ and the corresponding complementary code $\mycode{C}^\perp$.
  Suppose we transmit a tuple $\txSpaceVec^\perp\in\mycode{C}^\perp$ over a multishot operator channel~\eqref{eq:def:multishot_op_channel} with overall $\insertions$ insertions and $\deletions$ deletions and receive
  \begin{equation*}
    \rxSpaceVec^\perp=\delOp{N-\nTransmit-\deletions}(\txSpaceVec^\perp)\oplus\errSpaceVec_\insertions
  \end{equation*}
  where $\sumDim(\errSpaceVec_\insertions)=\insertions$. Then we have that
  \begin{align}
     \rxSpaceVec&=\delOp{\nTransmit-\insertions}(\txSpaceVec)\oplus\errSpaceVec_\deletions \label{eq:compl_rec_space}
  \end{align} 
  where $\sumDim(\errSpaceVec_\deletions)=\deletions$.
\end{proposition}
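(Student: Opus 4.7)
The plan is to reduce the multishot statement to a shot-wise application of the single-shot complementary-code argument used in the proof of \cite[Lemma~14]{bartz2018efficient} and \cite[Theorem~4]{bartz2018efficient}. First I would fix partitions $\insertionsVec=(\insertionsShot{1},\dots,\insertionsShot{\shots})$ and $\deletionsVec=(\deletionsShot{1},\dots,\deletionsShot{\shots})$ such that $\sum_i \insertionsShot{i} = \insertions$ and $\sum_i \deletionsShot{i} = \deletions$, corresponding to the realization of the multishot operator channel on $\txSpaceVec^\perp$. Per shot we then have
\begin{equation*}
    \shot{\rxSpace}{i}^\perp = \delOp{N_i-\nTransmitShot{i}-\deletionsShot{i}}\!\left(\shot{\txSpace}{i}^\perp\right) \oplus \shot{\errSpace_\insertions}{i},
\end{equation*}
with $\dim(\shot{\txSpace}{i}^\perp)=N_i-\nTransmitShot{i}$ and $\dim(\shot{\errSpace_\insertions}{i})=\insertionsShot{i}$, and the overall received sum-dimension satisfies $\sumDim(\rxSpaceVec^\perp) = N-\nTransmit-\deletions+\insertions$, so that $\sumDim(\rxSpaceVec) = \nTransmit+\deletions-\insertions$.

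The main step is then the single-shot duality: taking orthogonal complements swaps the roles of insertions and deletions. Concretely, I would apply \cite[Lemma~14]{bartz2018efficient} (together with \cite[Theorem~4]{bartz2018efficient}) to each component individually to conclude that there exist subspaces $\shot{\errSpace_\deletions}{i}\subseteq\Fq^{N_i}$ with $\dim(\shot{\errSpace_\deletions}{i})=\deletionsShot{i}$ and $\shot{\txSpace}{i}\cap\shot{\errSpace_\deletions}{i}=\{\vec{0}\}$, as well as an erasure operator $\delOp{\nTransmitShot{i}-\insertionsShot{i}}$ acting on $\shot{\txSpace}{i}$, such that
\begin{equation*}
    \shot{\rxSpace}{i} = \delOp{\nTransmitShot{i}-\insertionsShot{i}}\!\left(\shot{\txSpace}{i}\right) \oplus \shot{\errSpace_\deletions}{i}.
\end{equation*}
The dimension bookkeeping checks out: the ``erased'' subspace of $\shot{\txSpace}{i}$ has dimension $\nTransmitShot{i}-\insertionsShot{i}$, and adding the $\deletionsShot{i}$-dimensional complement produces a space of dimension $\nTransmitShot{i}-\insertionsShot{i}+\deletionsShot{i}$, consistent with $\dim(\shot{\rxSpace}{i})$ obtained from complementing $\shot{\rxSpace}{i}^\perp$.

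Finally I would assemble the shots: define $\errSpaceVec_\deletions \defeq (\shot{\errSpace_\deletions}{1},\dots,\shot{\errSpace_\deletions}{\shots})$, whose sum-dimension is $\sum_{i}\deletionsShot{i}=\deletions$, and observe that both the erasure operator $\delOp{\cdot}$ and the direct sum $\oplus$ are defined component-wise on tuples in $\ProjspaceAny{\Nvec}$. This immediately yields~\eqref{eq:compl_rec_space}. The main obstacle I expect is not the algebra itself but making the invocation of \cite[Lemma~14]{bartz2018efficient} rigorous in the per-shot reduction: specifically, verifying that the erasure operator in the complementary description can be chosen so that the transversality condition $\shot{\txSpace}{i}\cap\shot{\errSpace_\deletions}{i}=\{\vec{0}\}$ holds, which is exactly the content of the cited single-shot result and transfers to the multishot setting since the dual, intersection, and sum operations on subspace tuples act componentwise.
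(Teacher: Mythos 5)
Your proposal is correct and takes essentially the same approach as the paper: the paper likewise proceeds shot-by-shot, applies \cite[Lemma~14]{bartz2018efficient} to rewrite $\bigl(\delOp{N_i-\nTransmitShot{i}-\deletionsShot{i}}((\txSpaceShot{i})^\perp)\bigr)^\perp$ as $\txSpaceShot{i}\oplus\errSpaceShot{i}_{\deletionsShot{i}}$, and then carries out the intersection/direct-sum and transversality bookkeeping that your proposal delegates to \cite[Theorem~4]{bartz2018efficient}, before assembling the component spaces into $\errSpaceVec_\deletions$. The only cosmetic difference is that the paper reproves the single-shot step inline rather than invoking the cited theorem directly.
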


The proof of Proposition~\ref{prop:complementary_LILRS_decoding} can be found in Appendix~\ref{app:proof_complementary_LILRS_decoding}.

Proposition~\ref{prop:complementary_LILRS_decoding} shows, that the dual of the received tuple $\rxSpaceVec^\perp$, which is $\rxSpaceVec$, is a codeword $\txSpaceVec\in\mycode{C}$ that is corrupted by $\deletions$ insertions and $\insertions$ deletions.
Therefore we can use the decoder from Section~\ref{subsubsec:LILRS_list} on $\rxSpaceVec$ to perform list decoding of $\insertions$ insertions and $\deletions$ deletions up to 
\begin{equation}
  \deletions+\intOrder\insertions<\intOrder(\nTransmit-k+1)
\end{equation}
or the decoders from Sections~\ref{subsec:LO_dec_LILRS} \& \ref{subsubsec:LILRS_unique} to perform probabilistic unique decoding up to 
\begin{equation}
  \deletions+\intOrder\insertions\leq\intOrder(\nTransmit-k).
\end{equation}

The decoding steps can be summarized as follows:

\begin{enumerate}
  \item Transmit a tuple $\txSpaceVec^\perp\in\mycode{C}^\perp$ over a multishot operator channel with overall $\insertions$ insertions and $\deletions$ deletions.

  \item Compute the dual of the received tuple $(\rxSpaceVec^\perp)^\perp=\rxSpaceVec$.
  
  \item Use a decoder from Sections~\ref{subsubsec:LILRS_list},~\ref{subsec:LO_dec_LILRS} \& \ref{subsubsec:LILRS_unique} to recover the tuple $\txSpaceVec\in\mycode{C}$.
  
  \item Compute the dual tuple of $\txSpaceVec$ to obtain $\txSpaceVec^\perp$.
\end{enumerate}

%----------------------------------------------------------------------------
% Simulation Results
%----------------------------------------------------------------------------
\subsection{Simulation Results}\label{subsec:sim_results_LILRS}

In order to verify the upper bound on the decoding failure probability in~\eqref{eq:lower_bound_Psucc_LILRS} we performed a Monte Carlo simulation ($100$ errors) of a code $\liftedIntLinRS{\vecbeta,\a,\shots=2,\intOrder=3;\nTransmitVec=(3,3),k=3}$ over $\F_{3^3}$ over a multishot operator channel with overall $\deletions=1$ deletion and $\insertions\in\{4,5,6\}$ insertions.

The channel realization is chosen uniformly at random from all possible realizations of the multishot operator channel with exactly this number of deletions and insertions (see Remark~\ref{rem:rand_instance_op_channel}). 
The drawing procedure was implemented using the adapted dynamic-programming routine in Appendix~\ref{app:draw_uniform_errors}.

The results in Figure~\ref{fig:simLILRS} show, that the upper bound in~\eqref{eq:lower_bound_Psucc_LILRS} gives a good estimate of the decoding failure probability.
Although the heuristic upper bound from~\eqref{eq:heur_upper_bound_Pfail_LILRS} looks tighter for the considered parameters, it is \emph{not} a strict upper bound for the considered multishot operator channel.
\begin{figure}[ht!]
  \centering
  % This file was created by matlab2tikz.
% Minimal pgfplots version: 1.3
%
%The latest updates can be retrieved from
%  http://www.mathworks.com/matlabcentral/fileexchange/22022-matlab2tikz
%where you can also make suggestions and rate matlab2tikz.
%
\definecolor{mycolor1}{rgb}{1.00000,0.00000,1.00000}%
\definecolor{mycolor2}{rgb}{0.00000,1.00000,1.00000}%
\begin{tikzpicture}

\begin{axis}[%
xmin=4,
xmax=10,
xtick={4,5,6,7,8,9,10},
xlabel={$\insertions+\intOrder\deletions$},
compat=newest,
xmajorgrids,
ymode=log,
ymin=1e-08,
ymax=1,
yminorticks=true,
label style={anchor=near ticklabel, font=\footnotesize},
label style={inner sep=0}, 
ylabel={Decoding failure probability $P_{f}$},
ymajorgrids,
yminorgrids,
tick label style={font=\scriptsize},
legend style={at={(0.01,0.989)},anchor=north west,legend cell align=left,align=left,draw=white!15!black, font=\footnotesize},
mystyle]

\addplot [color=red!80!black,solid,
mark=x,mark options={solid}
]
  table[row sep=crcr]{%
5 3.96365494946758e-7\\
6 0.0000107018683635625\\
7 0.000288950445816186\\
8 0.00780166203703704\\ 
9 0.210644875000000\\ 
};
\addlegendentry{Upper bound on $P_f$ \eqref{eq:upper_bound_Pfail_LILRS}};

\addplot [color=cyan!80!black,dotted,
mark=o,mark options={solid}
]
  table[row sep=crcr]{%
5 1.24399719086618e-7\\  
6 3.35879241533867e-6\\ 
7 0.0000906873952141442\\ 
8 0.00244855967078189\\ 
9 0.0661111111111111\\
};
\addlegendentry{Heuristic upper bound on $P_f$ \eqref{eq:heur_upper_bound_Pfail_LILRS}};

% Heuristic failure probability: insertions = 2 P_f < 1.24399719086618e-7
% Heuristic failure probability: insertions = 3 P_f < 3.35879241533867e-6
% Heuristic failure probability: insertions = 4 P_f < 0.0000906873952141442
% Heuristic failure probability: insertions = 5 P_f < 0.00244855967078189
% Heuristic failure probability: insertions = 6 P_f < 0.0661111111111111

\addplot [color=blue,
dashed,
mark=asterisk,
mark options={solid}
]
  table[row sep=crcr]{%
6 1.908361e-06\\
7 5.726977e-05\\
8 1.290956e-03\\ 
9 3.883495e-02\\ 
};
\addlegendentry{Simulation $\deletions=1$};

\end{axis}
\end{tikzpicture}%
  \caption{Result of a Monte Carlo simulation of the code $\liftedIntLinRS{\vecbeta,\a,\shots=2,\intOrder=3;\nTransmitVec=(3,3),k=3}$ over $\F_{3^3}$ transmitted over a multishot operator channel with overall $\deletions=1$ deletions and $\insertions=2,3,4,5,6$ insertions.}
  \label{fig:simLILRS}
\end{figure}
For the same parameters a (non-interleaved) lifted linearized Reed--Solomon code~\cite{martinez2019reliable} (i.e. $\intOrder=1$) can only correct $\insertions$ insertions and $\deletions$ deletions up to $\insertions+\deletions<4$.

%----------------------------------------------------------------------------
% Conclusion
%----------------------------------------------------------------------------
\section{Conclusion}\label{sec:conclusion}
\acresetall

\subsection{Summary}\label{subsec:generality_remarks}
 We considered lifted $\intOrder$-interleaved linearized Reed--Solomon (\acs{LILRS})\acused{LILRS} codes for error-control in noncoherent multishot network coding and showed, that the relative overhead due to lifting can be reduced significantly compared to the construction by Mart{\'\i}nez-Pe{\~n}as--Kschischang.
 We proposed two decoding schemes for the multishot operator channel that are capable of correcting insertions and deletions beyond the unique decoding region in the sum-subspace metric.

 We proposed an efficient interpolation-based decoding scheme for \ac{LILRS} codes, which can be used as a list decoder or as a probabilistic unique decoder and can correct a total number of $\insertions$ insertions and $\deletions$ deletions up to $\insertions + \intOrder\deletions < \intOrder(\nTransmit-k+1)$ and $\insertions + \intOrder\deletions \leq \intOrder(\nTransmit-k)$, respectively, where $\intOrder$ is the interleaving order, $\nTransmit$ the sum of the dimensions of the transmitted spaces and $k$ the dimension of the code.
 We derived a {\LOlike} decoder for \ac{LILRS} codes, which provides arguments to upper bound on the decoding failure probability for the interpolation-based probabilistic unique decoder.

 We showed how to construct and decode lifted $\intOrder$-\emph{interleaved} linearized Reed--Solomon codes for error control in random linear multishot network coding.
 Compared to the construction by Mart{\'\i}nez-Pe{\~n}as--Kschischang, interleaving allows to increase the decoding region significantly (especially w.r.t.\ the number of insertions) and decreases the overhead due to the lifting (i.e., increases the code rate), at the cost of an increased packet size.

 Up to our knowledge, the proposed decoding schemes are the first being able to correct errors beyond the unique decoding region in the sum-subspace metric efficiently.
 The tightness of the upper bounds on the decoding failure probability of the proposed decoding schemes for \ac{LILRS} codes were validated via Monte Carlo simulations.

\subsection{Remarks on Generality}

In this paper, we considered codes constructed by skew polynomials with zero derivations, i.e. polynomials from~$\SkewPolyringZeroDer$, only.
The main reason for this is that for operations in $\SkewPolyringZeroDer$ we can give complexity bounds, which are of interest in the implementation point of view.
However, the complexity analysis has to be performed w.r.t. this setup (computational complexity may be larger).

We considered decoding of \emph{homogeneous} \ac{LILRS} codes, respectively, i.e. interleaved codes where the component codes have the same code dimension.
We consider these simpler code classes in order to not further complicate the quite involved notation.
The decoding schemes proposed in this paper can be generalized to \emph{heterogeneous} interleaved codes, where each component code may have a different dimension, in a straight-forward manner like e.g. in~\cite{wachter2013decoding,bartz2017algebraic}.
Denote by $k_1,\dots,k_\intOrder$ the dimensions of the component codes and define $\bar{k}\defeq\frac{1}{\intOrder}\sum_{l=1}^{\intOrder}k_l$. 
The resulting decoding regions are then $\insertions+\intOrder\deletions<\intOrder(\nTransmit-\bar{k}+1)$ for list decoding and $\insertions+\intOrder\deletions\leq\intOrder(\nTransmit-\bar{k})$ for probabilistic unique decoding.

\subsection{Outlook \& Future Work}

For future work it would be interesting to see of the considered concepts applied to interleaved \ac{LILRS} codes that are based on the construction of \ac{LRS} codes over smaller fields.

So far, no results on the list-decodability of random sum-subspace-metric codes, like e.g. for single-shot subspace codes~\cite{ding2014list}, are available.
Once such results are available it would be interesting to compare the list-decodability of random sum-subspace-metric codes with constructive results proposed in this paper. 

%%===========================================================================================%%
%% If you are submitting to one of the Nature Portfolio journals, using the eJP submission   %%
%% system, please include the references within the manuscript file itself. You may do this  %%
%% by copying the reference list from your .bbl file, paste it into the main manuscript .tex %%
%% file, and delete the associated \verb+\bibliography+ commands.                            %%
%%===========================================================================================%%

%----------------------------------------------------------------------------
% Acknowledgments
%----------------------------------------------------------------------------
\section{Acknowledgments}
H. Bartz acknowledges the financial support by the Federal Ministry of Education and Research of Germany in the programme of “Souverän. Digital. Vernetzt.” Joint project 6G-RIC, project identification number: 16KISK022.

% \bibliography{references}% common bib file

\begin{appendices}

\section{Proofs from Section~\ref{subsec:LO_dec_LILRS}}

\subsection{Proof of Lemma~\ref{lem:properties_LODecMat_LILRS}}\label{app:proof_properties_LODecMat_LILRS}

% ----- BEGIN: Proof of Lemma lem:properties_LODecMat_LILRS
\begin{proof}
  First observe, that every received space $\rxSpaceShot{i}$ can be represented by a matrix $\widetilde{\U}^{(i)}$ of the form
     \begin{equation}
       \renewcommand{\arraystretch}{1.5}
       \widetilde{\U}^{(i)}=
       \left(
       \begin{array}{c|ccc}
        \vecxi_1^{(i)\top} & \vec{x}_1^{(i)\top} & \dots & \vec{x}_\intOrder^{(i)\top} 
        \\ \hline
        \vecxi_2^{(i)\top} & \widetilde{\vec{u}}_1^{(i)\top} & \dots & \widetilde{\vec{u}}_\intOrder^{(i)\top} 
        \\ \hline
        \vec{0} & \hat{\vec{e}}_1^{(i)\top} & \dots & \hat{\vec{e}}_\intOrder^{(i)\top}
       \end{array}
       \right)\in\Fqm^{\nReceiveShot{i}\times (\intOrder+1)}
     \end{equation}
     such that first $\nTransmitShot{i}-\deletionsShot{i}$ rows of $\widetilde{\U}^{(i)}$ form a basis for the non-corrupted received space $\rxSpaceShot{i}\cap\txSpaceShot{i}$ whereas the last $\rankErrShot{i}+\deviationsShot{i}=\insertionsShot{i}$ rows correspond to the insertions for all $i=1,\dots,\shots$, i.e. we have that
     \begin{align}
       \Rowspace{\left(
       \begin{array}{c|ccc}
        \vecxi_1^{(i)\top} & \vec{x}_1^{(i)\top} & \dots & \vec{x}_\intOrder^{(i)\top}
       \end{array}
       \right)}
       &=
      \rxSpaceShot{i}\cap\txSpaceShot{i},\label{eq:non_corr_comp_space}
      \\[5pt]
      \renewcommand{\arraystretch}{1.5}
      \Rowspace{\left(
       \begin{array}{c|ccc}
        \vecxi_2^{(i)\top} & \widetilde{\vec{u}}_1^{(i)\top} & \dots & \widetilde{\vec{u}}_\intOrder^{(i)\top} 
        \\ \hline
        \vec{0} & \hat{\vec{e}}_1^{(i)\top} & \dots & \hat{\vec{e}}_\intOrder^{(i)\top}
       \end{array}
       \right)}
       &=
      \errSpaceShot{i}\label{eq:err_comp_space}
     \end{align}
     for all $i=1,\dots,\shots$ where $\vecxi_1^{(i)},\vec{x}_j^{(i)}\in\Fqm^{\nTransmitShot{i}-\deletionsShot{i}}$, $\vecxi_2^{(i)},\widetilde{\vec{u}}_j^{(i)}\in\Fqm^{\rankErrShot{i}}$ and $\hat{\E}^{(i)}=
      \begin{pmatrix}
        \hat{\vec{e}}_1^{(i)\top} & \dots & \hat{\vec{e}}_\intOrder^{(i)\top}
       \end{pmatrix}
       \in\Fqm^{\deviationsShot{i}\times \intOrder}$.

     Since the rows of the matrices $\U^{(i)}$ also form a basis of $\rxSpaceShot{i}$, there exist invertible matrices $\W^{(i)}\in\Fq^{\nReceiveShot{i}}$ such that
     \begin{equation}
       \left(\W^{(i)}\right)^\top\U^{(i)}=\widetilde{\U}^{(i)},\quad\forall i=1,\dots,\shots.
     \end{equation}
     Since $\rxSpaceShot{i}\cap\txSpaceShot{i}$ is an $\nTransmitShot{i}-\deletionsShot{i}$-dimensional subspace $\txSpaceShot{i}$, we have by the definition of \ac{LILRS} codes that $\rk_{q}(\vecxi_1^{(i)})=\nTransmitShot{i}-\deletionsShot{i}$ for all $i=1,\dots,\shots$.
     We also have that $\rk_{q}(\hat{\E}^{(i)})=\deviationsShot{i}$ for all $i=1,\dots,\shots$ since otherwise the matrix $\widetilde{\U}^{(i)}$ would not have full $\Fq$-rank $\nReceiveShot{i}$, which contradicts that it is a basis for $\rxSpaceShot{i}$ that is obtained from $\U^{(i)}$ via $\Fq$-elementary row operations.
     Since $\txSpaceShot{i}\cap\errSpaceShot{i}=\{\0\}$, we have that $\widetilde{\vec{u}}_l^{(i)}\neq\opev{f_1}{\vecxi_2^{(i)}}{a_i}$ for all $i=1,\dots,\shots$ and $l=1,\dots,\intOrder$, since otherwise there would exist a vector such that, for some $j$, we have
     \begin{equation*}
       \Rowspace{(\xi_{2,j}^{(i)}\mid\opev{f_1}{\xi_{2,j}^{(i)}}{a_i}\dots\opev{f_1}{\xi_{2,j}^{(i)}}{a_i})}\in\txSpaceShot{i}
     \end{equation*}
     which contradicts that $\txSpaceShot{i}\cap\errSpaceShot{i}=\{\0\}$.
     We now show that
     \begin{align*}
       \rk_{q}(\vecxi_2^{(i)})=\rankErrShot{i},
       \quad\text{and}\quad
       \rk_{q}((\vecxi_1^{(i)},\vecxi_2^{(i)}))=\nTransmitShot{i}-\deletionsShot{i}+\rankErrShot{i},
       \quad\forall i=1,\dots,\shots.
     \end{align*} 
     Suppose that $\rk_{q}(\vecxi_2^{(i)})<\rankErrShot{i}$.
     Then there exists an $\Fq$-linear combination of the rows of the matrix in~\eqref{eq:err_comp_space} such that the first entry of the vector becomes zero.
     In this case it would belong to the lower part of the matrix having a leading zero.
     If $\rk_{q}((\vecxi_1^{(i)},\vecxi_2^{(i)}))<\nTransmitShot{i}-\deletionsShot{i}+\rankErrShot{i}$, then there exist an element $\xi_{2,j}$ such that $\Rowspace{\xi_{2,j}}\subseteq\Rowspace{\vecxi_{1}^\top}$.
     In this case we can subtract the corresponding $\Fq$-linear combination from the row having $\xi_{2,j}$ as first entry, which again yields a row with leading zero element.

     Using the matrices $\widetilde{\U}^{(i)}$ for all $i=1,\dots,\shots$ and performing $\Fqm$-linear row operations, we can set up a matrix of the form
     \begin{equation}
       \bar{\LODecMat}=\left(\LODecMat^{(1)},\dots,\LODecMat^{(\shots)}\right)\defeq
       \left(
       \begin{array}{c}
        \Lambda_{\nTransmit-\deletions-1}(\bar{\vecxi})_{\vec{a}}
        \\ 
        \Lambda_{\nTransmit-\deletions-k}(\bar{\u}_{1})_{\vec{a}} 
        \\ 
        \vdots 
        \\ 
        \Lambda_{\nTransmit-\deletions-k}(\bar{\u}_{\intOrder})_{\vec{a}}
       \end{array}
      \right)
      \in \Fqm^{((\intOrder+1)(\nTransmit-\deletions)-\intOrder k -1) \times \nReceive}
     \end{equation}
     where
     \begin{align*}
      \bar{\vecxi}&=\left((\vecxi_1^{(1)}\mid\vecxi_2^{(1)}\mid \0)\mid\dots\mid(\vecxi_1^{(\shots)}\mid\vecxi_2^{(\shots)}\mid \0)\right),
      \\ 
      \bar{\u}_{l}&=\left((\0\mid\widetilde{\e}_l^{(1)}\mid\hat{\e}_l^{(1)})\mid\dots\mid(\0\mid\widetilde{\e}_l^{(\shots)}\mid\hat{\e}_l^{(\shots)})\right), \quad\forall l=1,\dots,\intOrder.
     \end{align*}
     The component matrices are then of the form
     \begin{equation*}
       \bar{\LODecMat}^{(i)}=
       \begin{pmatrix}
        \opVandermonde{\nTransmit-\deletions-1}{\bar{\vecxi}_1^{(i)}}{a_i} & \opVandermonde{\nTransmit-\deletions-1}{\bar{\vecxi}_2^{(i)}}{a_i} & \0
        \\ 
        \0 & \opVandermonde{\nTransmit-\deletions-k}{\widetilde{\e}_1^{(i)}}{a_i} & \opVandermonde{\nTransmit-\deletions-k}{\hat{\e}_1^{(i)}}{a_i}
        \\
        \vdots & \vdots & \vdots
        \\ 
        \0 & \opVandermonde{\nTransmit-\deletions-k}{\widetilde{\e}_\intOrder^{(i)}}{a_i} & \opVandermonde{\nTransmit-\deletions-k}{\hat{\e}_\intOrder^{(i)}}{a_i}
       \end{pmatrix}
       \in \Fqm^{((\intOrder+1)(\nTransmit-\deletions)-\intOrder k -1) \times \nReceiveShot{i}}
     \end{equation*}
     where
     $\x^{(i)}_l-\opev{f_l}{\vecxi_1^{(i)}}{a_i}=\0$ (yielding the zero matrices in the leftmost block) and 
     $\widetilde{\e}_l^{(i)}=\widetilde{\u}_l^{(i)}-\opev{f_l}{\vecxi_2^{(i)}}{a_i}$ for all $l=1,\dots,\intOrder$ and $i=1,\dots,\shots$. 
     Define the matrices
     \begin{equation}\label{eq:def_E_tilde_i}
      \widetilde{\E}^{(i)}=
      \begin{pmatrix}
       \widetilde{\e}_1^{(i)}
       \\ 
       \vdots
       \\ 
       \widetilde{\e}_\intOrder^{(i)}
      \end{pmatrix}\in\Fqm^{\intOrder\times \rankErr^{(i)}},
      \qquad
      \hat{\E}^{(i)}=
      \begin{pmatrix}
       \hat{\e}_1^{(i)}
       \\ 
       \vdots
       \\ 
       \hat{\e}_\intOrder^{(i)}
      \end{pmatrix}\in\Fqm^{\intOrder\times \deviationsShot{i}}
     \end{equation}
     for all $i=1,\dots,\shots$.
     Suppose that the $\Fq$-rank of $\widetilde{\E}^{(i)}$ is less than $\rankErr^{(i)}$.
     Then there exist a nontrivial $\Fq$-linear combination of the columns of the matrix 
     \begin{equation*}
       \begin{pmatrix}
        \opVandermonde{\nTransmit-\deletions-1}{\bar{\vecxi}_2^{(i)}}{a_i}
        \\ 
        \opVandermonde{\nTransmit-\deletions-k}{\widetilde{\e}_1^{(i)}}{a_i}
        \\
        \vdots 
        \\ 
        \opVandermonde{\nTransmit-\deletions-k}{\widetilde{\e}_\intOrder^{(i)}}{a_i}
       \end{pmatrix}
       \in \Fqm^{((\intOrder+1)(\nTransmit-\deletions)-\intOrder k -1) \times \rankErr^{(i)}}
     \end{equation*}
     such that the first (upper) $(\nTransmit-\deletions-1)$ rows are nonzero and the last (lower) $\intOrder(\nTransmit-\deletions-k)$ rows are all zero.
     This contradicts that $\txSpaceShot{i}\cap\errSpaceShot{i}=\{\0\}$ since each $\bar{\LODecMat}^{(i)}$ is obtained by subtracting the evaluations of each $f_l$ at the corresponding values in $\vecxi_1^{(i)}$ and $\vecxi_2^{(i)}$ (and the corresponding row-operator powers thereof).
     Now assume that $\rk_q(\widetilde{\E}^{(i)}\mid\hat{\E}^{(i)})<\insertionsShot{i}$ for some $i\in\{1,\dots,\shots\}$.
     Then there exist nontrivial $\Fq$-linear combinations of the $\rankErr^{(i)}+\deviationsShot{i}=\insertionsShot{i}$ rightmost columns of $\bar{\LODecMat}^{(i)}$ such that the $\intOrder(\nTransmit-\deletions-k)$ rows are all zero and the upper $\nTransmit-\deletions-1$ rows are nonzero. 
     This contradicts that $\txSpaceShot{i}\cap\errSpaceShot{i}=\{\0\}$ by the same argument as above.
     
     Since $\bar{\LODecMat}$ is obtained from $\LODecMat$ via $\Fqm$-elementary row operations and $\Fq$-elementary column operations, we have that
     \begin{equation}
       \rk_{q^m}(\LODecMat)=\rk_{q^m}(\bar{\LODecMat}).
     \end{equation}
\end{proof}
% ----- END: Proof of Lemma lem:properties_LODecMat_LILRS

\subsection{Proof of Lemma~\ref{lem:right_kernel_prop_LILRS}}\label{app:proof_right_kernel_prop_LILRS}
% --- BEGIN: Proof of Lemma lem:right_kernel_prop_LILRS
\begin{proof}
\begin{itemize}
  \item[--] Ad~\ref{itm:LO_LILRS_lemma_rank_L}):
  By column permutations, i.e. by rearranging the columns of $\bar{\LODecMat}$ according to the vectors
     \begin{align*}
       \vecxi_1&\defeq\left(\vecxi_1^{(1)} \mid \vecxi_1^{(2)} \mid \dots \mid \vecxi_1^{(\shots)}\right)\in\Fqm^{\nTransmit-\deletions},
       \quad&
       \vecxi_2&\defeq\left(\vecxi_2^{(1)} \mid \vecxi_2^{(2)} \mid \dots \mid \vecxi_2^{(\shots)}\right)\in\Fqm^{\rankErr},
       \\ 
       \widetilde{\e}_l&\defeq\left(\widetilde{\e}_l^{(1)} \mid \widetilde{\e}_l^{(2)} \mid \dots \mid \widetilde{\e}_l^{(\shots)}\right)\in\Fqm^{\rankErr},
       &
       \hat{\vec{e}}_l&\defeq\left(\hat{\vec{e}}_l^{(1)} \mid \hat{\vec{e}}_l^{(2)} \mid \dots \mid \hat{\vec{e}}_l^{(\shots)}\right)\in\Fqm^{\deviations},
     \end{align*}
     we obtain another (equivalent) matrix
     \begin{equation}
     \widetilde{\LODecMat}
     \defeq
      \left(
       \begin{array}{c|c|c}
        \Lambda_{\nTransmit-\deletions-1}(\vecxi_1)_{\vec{a}} & \Lambda_{\nTransmit-\deletions-1}(\vecxi_2)_{\vec{a}} & \vec{0}
        \\ 
        \vec{0} & \Lambda_{\nTransmit-\deletions-k}(\widetilde{\E})_{\vec{a}} & \Lambda_{\nTransmit-\deletions-k}(\hat{\E})_{\vec{a}}
       \end{array}
      \right)
     \end{equation}
     where 
     \begin{equation*}
       \widetilde{\E}
       =
       \left(\widetilde{\E}^{(1)}\mid\dots\mid\widetilde{\E}^{(\shots)}\right)\in\Fqm^{\intOrder\times\rankErr}
       \quad\text{and}\quad 
       \hat{\E}=
       \left(\hat{\E}^{(1)}\mid\dots\mid\hat{\E}^{(\shots)}\right)\in\Fqm^{\intOrder\times\deviations}
     \end{equation*}
     such that $\rk_{q^m}(\LODecMat)=\rk_{q^m}(\bar{\LODecMat})=\rk_{q^m}(\widetilde{\LODecMat})$.
     By Lemma~\ref{lem:properties_LODecMat_LILRS} we have that
     \begin{equation}
       \SumRankWeight(\widetilde{\E})=\rankErr,\quad\SumRankWeight(\hat{\E})=\deviations
     \end{equation}
     and
     \begin{equation}
       \SumRankWeight\left((\shot{\widetilde{\E}}{1}\mid\shot{\hat{\E}}{1})\mid\dots\mid (\shot{\widetilde{\E}}{\shots}\mid\shot{\hat{\E}}{\shots})\right)=t+\deviations=\insertions.
     \end{equation}
    Defining the matrix
    \begin{equation*}
      \Z\defeq
      \begin{pmatrix}
       \Lambda_{\nTransmit-\deletions-k}(\widetilde{\E})_{\vec{a}} & \Lambda_{\nTransmit-\deletions-k}(\hat{\E})_{\vec{a}}
      \end{pmatrix}
      \in\Fqm^{\intOrder(\nTransmit-\deletions-k)\times\insertions}
    \end{equation*}
    we can write $\widetilde{\LODecMat}$ as a block matrix of the form
    \begin{equation*}
      \widetilde{\LODecMat}=
      \begin{pmatrix}
        \Lambda_{\nTransmit-\deletions-1}(\vecxi_1)_{\vec{a}} & \star \\
        \0 & \Z
      \end{pmatrix},
    \end{equation*}
    where $\star$ denotes an arbitrary matrix, and thus we have that $\rk_{q^m}(\widetilde{\LODecMat})=\rk_{q^m}(\Lambda_{\nTransmit-\deletions-1}(\vecxi_1)_{\vec{a}})+\rk_{q^m}(\Z)$.
    By Lemma~\ref{lem:properties_LODecMat_LILRS} the vector $\vecxi_1\in\Fqm^{\nTransmit-\deletions}$ has sum-rank $\nTransmit-\deletions$.
    Recall that we have 
    \begin{equation*}
      \rk_{q^m}\left(\Lambda_{\nTransmit-\deletions-1}(\vecxi_1)_{\vec{a}}\right)=\min\{\nTransmit-\deletions-1,\nTransmit-\deletions\}=\nTransmit-\deletions-1.
    \end{equation*}
    The matrix $\Z$ can be obtained from $\bar{\Z}$ via column permutations, which implies that $\rk_{q^m}(\Z)=\rk_{q^m}(\bar{\Z})$.
    Since the $\Fqm$-rank of $\bar{\Z}$ (and thus also of $\Z$) equals $\insertions$, we conclude by the upper block triangular structure of $\widetilde{\LODecMat}$ that 
    \begin{equation*}
      \rk_{q^m}(\widetilde{\LODecMat})=\rk_{q^m}(\LODecMat)=\nTransmit-\deletions-1+\insertions=\nReceive-1.
    \end{equation*}

  \item[--] Ad~\ref{itm:LO_LILRS_lemma_rank_of_kernel_element}):
    Let $\bar{\h}=\left(\h^{(1)}\mid\h^{(2)}\mid\dots\mid\h^{(\shots)}\right)\in\rker(\bar{\LODecMat})\setminus\{\0\}$ be a nonzero element in the right kernel of $\bar{\LODecMat}$.
    If the $\Fqm$-rank of $\bar{\Z}$ equals $\insertions$, we have that
    \begin{equation*}
     \rk_{q^m}(\bar{\Z}^{(i)})=\insertionsShot{i}
    \end{equation*}
    for all $i=1,\dots,\shots$.
    Thus, for any $\bar{\h}\in\rker(\bar{\LODecMat})\setminus\{\0\}$ we must have that the last (rightmost) $\insertionsShot{i}$ entries of each $\bar{\h}^{(i)}$ must be zero, which implies that 
    \begin{equation}\label{eq:min_rank_h_i_bar}
      \rk_{q}(\bar{\h}^{(i)})\leq\nTransmitShot{i}-\deletionsShot{i},\quad \forall i=1,\dots,\shots.
    \end{equation}

    Let $\mycode{C}$ be the $[\nTransmit-\deletions+t,\nTransmit-\deletions-1]$ code generated by $(\Lambda_{\nTransmit-\deletions-1}(\vecxi_1)_{\vec{a}} \mid \Lambda_{\nTransmit-\deletions-1}(\vecxi_2)_{\vec{a}})$ and let $\mycode{C}'$ be the $[\nReceive,\nTransmit-\deletions-1]$ code generated by $(\Lambda_{\nTransmit-\deletions-1}(\vecxi_1)_{\vec{a}} \mid \Lambda_{\nTransmit-\deletions-1}(\vecxi_2)_{\vec{a}} \mid \vec{0})$.
    Since $(\Lambda_{\nTransmit-\deletions-1}(\vecxi_1)_{\vec{a}} \mid \Lambda_{\nTransmit-\deletions-1}(\vecxi_2)_{\vec{a}})$ is a generator matrix of an $[\nTransmit-\deletions+t,\nTransmit-\deletions-1]$ \ac{LRS} code (up to column permutations), we have that the dual code $\mycode{C}^\perp$ is equivalent to an $\ac{LRS}$ code with minimum sum-rank distance $\nTransmit-\deletions$. 
    The dual code of $\mycode{C}'$ is an $[\nReceive, \insertions+1]$ code $\mycode{C}'^\perp$ that can be decomposed into
     \begin{align*}
       \mycode{C}_1'^\perp&=\{(\vec{c}\mid \vec{0}):\vec{c}\in\mycode{C}^\perp\}
       \\ 
       \mycode{C}_2'^\perp&=\{(\vec{0}\mid \vec{\widetilde{\vec{c}}}):\vec{\widetilde{\vec{c}}}\in\Fqm^{\deviations}\}
     \end{align*}
     such that $\mycode{C}'^\perp=\mycode{C}_1'^\perp+\mycode{C}_2'^\perp$ and $\mycode{C}_1'^\perp\cap\mycode{C}_2'^\perp=\{\0\}$.
     Therefore, we have that the minimum distance of the dual code $\mycode{C}'^\perp$ is $\nTransmit-\deletions$.

     The first $\nTransmit-\deletions-1$ rows of $\bar{\LODecMat}$ span code $\bar{\mycode{C}}$ which is equivalent to $\mycode{C}'$ and thus we have that the minimum sum-rank distance of the dual code $\bar{\mycode{C}}^\perp$ equals $\nTransmit-\deletions$.
     Since any nonzero element $\bar{\h}$ in the right kernel of $\bar{\LODecMat}$ is contained in $\bar{\mycode{C}}^\perp$, we have that
     \begin{equation}\label{eq:min_weight_h_bar}
       \SumRankWeight(\bar{\h})\geq\nTransmit-\deletions.
     \end{equation}
     Therefore,~\eqref{eq:min_rank_h_i_bar} and~\eqref{eq:min_weight_h_bar} are satisfied only if 
     \begin{equation}
      \rk_{q}(\bar{\h}^{(i)})=\nTransmitShot{i}-\deletionsShot{i},\quad \forall i=1,\dots,\shots.
     \end{equation}
     This proves the statement, since by Lemma~\ref{lem:properties_LODecMat_LILRS} we have that $\rker(\LODecMat)=\rker(\bar{\LODecMat}\W^{-1})$ and thus
     \begin{equation*}
       \bar{\h}\in\rker(\bar{\LODecMat})
       \quad\Longleftrightarrow\quad
       \bar{\h}\left(\W^{-1}\right)^\top\in\rker(\LODecMat).
     \end{equation*}

  \item[--] Ad~\ref{itm:LO_LILRS_lemma_Ti_matrices}):
    Expand $\h^{(i)}$ into an $m \times \nReceiveShot{i}$ matrix over $\Fq$, which has $\Fq$-rank $\nTransmitShot{i}-\deletionsShot{i}$ by \ref{itm:LO_LILRS_lemma_rank_of_kernel_element}). 
    Then, we can perform elementary column operations on this matrix to bring it into reduced column echelon form, where the $\nTransmitShot{i}-\deletionsShot{i}$ nonzero columns are the leftmost ones. 
    The matrix $\T^{(i)}\in\Fq^{\nReceiveShot{i}\times\nReceiveShot{i}}$ is then chosen to be the matrix that, by multiplication from the right, performs the used sequence of elementary column operations. 
    Note that the $\nTransmitShot{i}-\deletionsShot{i}$ nonzero entries of $\h^{(i)}\T^{(i)}$ are linearly independent over $\Fq$.

  \item[--] Ad~\ref{itm:LO_LILRS_lemma_EDi_zero}): 
    The matrices $\T^{(i)} \in \Fq^{\nReceiveShot{i} \times \nReceiveShot{i}}$ are chosen such that the rightmost $\insertionsShot{i}$ positions of $\h^{(i)} \T^{(i)}$ are zero.
    Recall from~\ref{itm:LO_LILRS_lemma_rank_of_kernel_element}), that if $\rk_{q^m}(\bar{\Z})=\insertions$, we have that for any nonzero element $\bar{\h}=(\bar{\h}^{(1)}\mid\dots\mid\bar{\h}^{(\shots)})$ in the right kernel of $\bar{\LODecMat}$ the $\insertionsShot{i}$ rightmost positions of each $\bar{\h}^{(i)}$ are zero.
    Define the matrices $\D^{(i)}=\left(\T^{(i)-1}\right)^\top$ for all $i=1,\dots,\shots$.
    Then we have that 
    \begin{equation}
      \h\T\in\rker(\LODecMat\cdot\diag(\D^{(1)},\dots,\D^{(\shots)})).
    \end{equation}
    We will now show that the column span of the matrices
    \begin{equation}
      \left[\bar{\LODecMat}^{(i)}\right]_{\{\nTransmitShot{i}-\deletionsShot{i}+1,\dots,\nReceiveShot{i}\}}
      \quad\text{and}\quad
      \left[\LODecMat^{(i)}\D^{(i)}\right]_{\{\nTransmitShot{i}-\deletionsShot{i}+1,\dots,\nReceiveShot{i}\}}
    \end{equation}
    must be the same.
    Since $\rk_{q^m}(\bar{\Z})=\insertions$ implies $\rk_{q^m}(\bar{\Z}^{(i)})=\insertionsShot{i}$ for all $i=1,\dots,\shots$, the $\intOrder(\nTransmit-\deletions-k)$ last (lower) rows of $\LODecMat^{(i)}\D^{(i)}$ have $\Fqm$-rank $\insertionsShot{i}$ for all $i=1,\dots,\shots$.
    Now assume that the column span of the matrices above are not the same for some $i\in\{1,\dots,\shots\}$.
    Then there exist $\Fqm$-linearly independent columns for some index $j\in\{1,\dots,\nTransmitShot{i}-\deletionsShot{i}\}$, which contradicts that $\h\T$ (which has the $\insertionsShot{i}$ rightmost entries per block equal to zero) is in the right kernel of $\LODecMat\cdot\diag(\D^{(1)},\dots,\D^{(\shots)})$.
    Therefore, we must have that the column space is the same.

    Since the $\insertionsShot{i}$ rightmost columns of each $\bar{\LODecMat}^{(i)}$ correspond to the insertions, we have that the rightmost $\insertionsShot{i}$ columns of $\LODecMat^{(i)}\D^{(i)}$ also correspond to the insertions.
    Since $\LODecMat$ is set up using the matrices $\U^{(i)}$ for all $i=1,\dots,\shots$, we have that the last (lower) $\insertionsShot{i}$ rows of $\left(\T^{(i)}\right)^{-1}\U^{(i)}$ span the error space $\errSpaceShot{i}$. 
    This implies that the first $\nTransmitShot{i}-\deletionsShot{i}$ rows of $\left(\T^{(i)}\right)^{-1}\U^{(i)}$ form a basis of the noncorrupted space $\rxSpaceShot{i}\cap\txSpaceShot{i}$, since by the definition of the operator channel we have that $\txSpaceShot{i}\cap\errSpaceShot{i}=\{\0\}$ for all $i=1,\dots,\shots$.

  \item[--] Ad~\ref{itm:LO_LILRS_reconstruct_polynomials}):
    By~\ref{itm:LO_LILRS_lemma_EDi_zero}) the first $\nTransmitShot{i}-\deletionsShot{i}$ rows of  
     \begin{equation}
      \hat{\U}^{(i)}=\left(\T^{(i)}\right)^{-1}\U^{(i)}
      =
       \begin{pmatrix}
        \hat{\vecxi}^{(i)\top} & \hat{\u}_1^{(i)\top} & \dots & \hat{\u}_\intOrder^{(i)\top} 
       \end{pmatrix}
       \in\Fqm^{\nReceiveShot{i}\times (\intOrder+1)}.
     \end{equation}
     form a basis for the noncorrupted subspace $\rxSpaceShot{i}\cap\txSpaceShot{i}$, i.e. we have that
     \begin{equation*}
       \hat{u}^{(i)}_{l,\mu}=\opev{f_l}{\hat{\xi}^{(i)}}{a_i},
       \quad\forall\mu=1,\dots,\nTransmitShot{i}-\deletionsShot{i},\,\forall l=1,\dots,\intOrder,\,\forall i=1,\dots,\shots.
     \end{equation*}
     Therefore, the message polynomials $f_1,\dots,f_\intOrder \in \SkewPolyringZeroDer_{<k}$ can be reconstructed via Lagrange interpolation (cf. Lemma~\ref{lem:int_poly_op_ev}).
\end{itemize}
\end{proof}
% --- END: Proof of Lemma lem:right_kernel_prop_LILRS

\section{Proofs from Section~\ref{subsec:complementaryLILRS}}
\subsection{Proof of Proposition~\ref{prop:complementary_LILRS_decoding}}\label{app:proof_complementary_LILRS_decoding}
\begin{proof}
The proof proceed similar as the proof of~\cite[Theorem~4]{bartz2018efficient}.
At the output of the operator channel we have
 \begin{equation*}
    \rxSpaceVec^\perp=\delOp{N-\nTransmit-\deletions}(\txSpaceVec^\perp)\oplus\errSpaceVec_\insertions
  \end{equation*}
  where $\sumDim(\errSpaceVec_\insertions)=\insertions$. 
  Let $\insertionsVec$ and $\deletionsVec$ denote the partition of the insertions and deletions, respectively.
  
  For all $i=1,\dots,\shots$ we have that 
  \begin{align*}
    \rxSpaceShot{i}=\left((\rxSpaceShot{i})^\perp\right)^\perp
    &=\left(\delOp{N_i-\nTransmitShot{i}-\deletionsShot{i}}\left({(\txSpaceShot{i})}^\perp\right)\oplus\errSpaceShot{i}_{\insertionsShot{i}}\right)^\perp
    \\ 
    &=\left(\delOp{N_i-\nTransmitShot{i}-\deletionsShot{i}}\left({(\txSpaceShot{i})}^\perp\right)\right)^\perp\cap(\errSpaceShot{i}_{\insertionsShot{i}})^\perp.
  \end{align*}
  By~\cite[Lemma~14]{bartz2018efficient} we have that
  \begin{equation}\label{eq:compl_erasure_op}
    \left(\delOp{N_i-\nTransmitShot{i}-\deletionsShot{i}}\left({(\txSpaceShot{i})}^\perp\right)\right)^\perp=\txSpaceShot{i}\oplus\errSpaceShot{i}_{\deletionsShot{i}}
  \end{equation}
  where $\dim(\errSpaceShot{i}_{\deletionsShot{i}})=\deletionsShot{i}$ and therefore 
  \begin{align}
    \rxSpaceShot{i}&=\left(\txSpaceShot{i}\oplus\errSpaceShot{i}_{\deletionsShot{i}}\right)\cap(\errSpaceShot{i}_{\insertionsShot{i}})^\perp \notag
    \\ 
    &=\left(\txSpaceShot{i}\cap(\errSpaceShot{i}_{\insertionsShot{i}})^\perp\right)\oplus\errSpaceShot{i}_{\deletionsShot{i}} \label{eq:compl_erasure_op_2}
  \end{align}
  for all $i=1,\dots,\shots$ since $\errSpaceShot{i}_{\deletionsShot{i}}\cap(\errSpaceShot{i}_{\insertionsShot{i}})^\perp=\errSpaceShot{i}_{\deletionsShot{i}}$.
  From the operator channel we have that
  \begin{equation*}
     (\txSpaceShot{i})^\perp\cap\errSpaceShot{i}_{\insertionsShot{i}}=0
     \quad\Longrightarrow\quad
     (\txSpaceShot{i})^\perp\subseteq(\errSpaceShot{i}_{\insertionsShot{i}})^\perp
  \end{equation*} 
  since $(\txSpaceShot{i})^\perp\oplus\txSpaceShot{i}=\Fq^{N_i}$ and $\dim((\txSpaceShot{i})^\perp)\leq \dim((\errSpaceShot{i}_{\insertionsShot{i}})^\perp)$.
  From~\eqref{eq:compl_erasure_op} we get $\txSpaceShot{i}\cap\errSpaceShot{i}_{\deletionsShot{i}}=0$ and therefore
  \begin{equation*}
    \errSpaceShot{i}_{\deletionsShot{i}}\subseteq(\txSpaceShot{i})^\perp\subseteq(\errSpaceShot{i}_{\insertionsShot{i}})^\perp
  \end{equation*}
  for all $i=1,\dots,\shots$.
  We have that $\dim((\errSpaceShot{i}_{\insertionsShot{i}})^\perp)=N_i-\insertionsShot{i}$ which therefore deletes exactly $\insertionsShot{i}$ dimensions from $\txSpaceShot{i}$ since $\dim(\rxSpaceShot{i})=N_i-\dim((\rxSpaceShot{i})^\perp)=\dim(\txSpaceShot{i})-\insertionsShot{i}+\deletionsShot{i}$ for all $i=1,\dots,\shots$.
  Hence, we can write~\eqref{eq:compl_erasure_op_2} as
  \begin{equation*}
    \rxSpaceShot{i}=\delOp{\nTransmitShot{i}-\insertionsShot{i}}(\txSpaceShot{i})\oplus\errSpaceShot{i}_{\deletionsShot{i}},\quad\forall i=1,\dots,\shots
  \end{equation*}
  which proves the statement.
\end{proof}

\section{Efficient Implementation of the Multishot Operator Channel}\label{app:draw_uniform_errors}

For implementing the multishot operator channel we need to draw tuples that contain subspaces and have a particular constant sum-dimension uniformly at random from the set of all such tuples efficiently.
In order to accomplish this subspace-tuple drawing algorithm efficiently we adapted the enumerative coding approach from~\cite{puchinger2020generic} for the case where the dimension of the ambient space as well as the dimensions of the transmitted spaces are equal in each shot, i.e. for $\NbarVec=(\Nbar, \Nbar,\dots,\Nbar)$ with $\Nbar \defeq \frac{N}{\shots}$ and $\nTransmitBarVec = (\nTransmitBar, \nTransmitBar, \dots, \nTransmitBar)$. 

Suppose that we want to draw tuples with fixed sum-dimension $\insertions$ uniformly at random from the set
\begin{equation}
    \{\errSpaceVec \in \ProjspaceAny{\NbarVec} : \sumDim(\errSpaceVec) = \insertions\}.
\end{equation}

Define the set $\compSet{\Nbar, \insertions, \shots} \defeq \left\{\insertionsVec \in \{0,\dots, \min\{\Nbar, \insertions\}\}^\shots : \sum_{i=1}^{\shots} \insertionsShot{i} = \insertions \right\}$.

Then then the number $\numSubTuples{\Nbar,t,\shots}$ of tuples in $\ProjspaceAny{\NbarVec}$ with sum-dimension $\insertions$ is given by
\begin{equation}
    \numSubTuples{\Nbar,\insertions,\shots} = \sum_{\insertionsVec \in \compSet{\Nbar, \insertions, \shots}} \prod_{i=1}^{\shots} \quadbinom{\Nbar}{\insertionsShot{i}}.
\end{equation}
Since the cardinality of the set $\compSet{\Nbar, \insertions, \shots}$ and thus the number of terms in the sum above is large for most parameters (see~\cite{puchinger2020generic}) we may compute $\numSubTuples{\Nbar,\insertions,\shots}$ recursively as
\begin{equation}
    \numSubTuples{\Nbar,\insertions,\shots} =
    \left\{
    \begin{array}{ll}
        \displaystyle
        \quadbinom{\Nbar}{\insertions}, & \text{if } \shots=1,
        \\ 
        \displaystyle\sum_{\insertions'=\max\{0, \insertions-(\shots-1)\Nbar\}}^{\min\{\Nbar,\insertions\}} \quadbinom{\Nbar}{\insertions'} \numSubTuples{\Nbar,\insertions - \insertions',\shots-1} & \text{if } \shots > 1,
    \end{array}
    \right.
\end{equation}
with $\numSubTuples{\Nbar,\insertions,\shots} = 0$ for $\insertions > \Nbar\shots$.

For $\shots=1$ the number coincides with the number of $\insertions$-dimensional vectors spaces of $\Fq^{\Nbar}$ which is $\quadbinom{\Nbar}{\insertions}$.
The recursion for $\shots > 1$ follows by summing over the product of all possible $\insertions'$-dimensional subspaces in the first block and the number of remaining $(\shots-1)$-tuples of sum-dimension $\insertions-\insertions'$.

As in~\cite{puchinger2020generic} this recursion provides an efficient method to draw dimension partitions of $\shots$-tuples with fixed sum-dimension that are chosen uniformly at random from $\ProjspaceAny{\NbarVec}$ which is given in Algorithm~\ref{alg:draw_dim_part}.

\begin{algorithm}[ht]
    \caption{
        \algoname{DrawDimensionPartition}
    }
  \label{alg:draw_dim_part}

  \begin{algorithmic}[1]
  \Statex \textbf{Input:} 
   Parameters $q$, $\Nbar$, $\insertions$, $\shots$ 
  
  \Statex

  \Statex \textbf{Output:} 
    Dimension partition $\insertionsVec \in \compSet{\Nbar, \insertions, \shots}$ of an $\shots$-tuple $\errSpaceVec$ with sum-dimension $\insertions$ that is chosen uniformly at random from $\Grassm{\NbarVec, \nTransmitBarVec}$  
  
  \Statex

  \State $D^{(1)} \assignRand \left\{1, \dots, \numSubTuples{\Nbar, \insertions,\shots}\right\}$ \Comment{Draw index uniformly at random}
  \State $\insertions_{1} \assignDet \insertions$

  \For{$i=1,\dots,\shots$}
    \State $\insertionsShot{i} \assignDet \max\!\left\{\insertions'' \!\!\in\! \{0, \dots, \! \insertions_{i}\!\} : \sum_{\insertions'= \insertions_{i}-(\shots-i)\Nbar}^{\insertions''-1} \quadbinom{\Nbar}{\insertions'} \! \numSubTuples{\Nbar,\insertions_{i}\!\!-\!\insertions',\shots\!-\!i} \!<\! D^{(i)}  \right\}$ 

    \State $D^{(i+1)} \assignDet D^{(i)} - \sum_{\insertions'=\insertions_{i}-(\shots-i)\Nbar}^{\insertionsShot{i}-1} \quadbinom{\Nbar}{\insertions'} \! \numSubTuples{\Nbar,\insertions_{i}\!\!-\!\insertions',\shots-i}$

    \State $\insertions_{i+1} \assignDet \insertions_{i} - \insertionsShot{i}$
  \EndFor

  \State \Return $\insertionsVec=(\insertionsShot{1},\dots,\insertionsShot{\shots})$
\end{algorithmic}
\end{algorithm}

Equipped with the efficient method to draw dimension partitions for sampling $\shots$-tuples with constant sum-dimension from $\ProjspaceAny{\NbarVec}$ uniformly at random (see Algorithm~\ref{alg:draw_dim_part}) we can implement the multishot operator channel (see~\eqref{eq:def:multishot_op_channel})
\begin{equation}
    \rxSpaceVec=\delOp{\nTransmit-\deletions}(\txSpaceVec) \oplus \errSpaceVec
\end{equation}
with input alphabet $\Grassm{\NbarVec, \nTransmitBarVec}$ and an overall number of $\insertions$ insertions and $\deletions$ deletions as follows.

First, we require the partition $\deletionsVec$ of the $\deletions$ deletions.
For each shot $i$ there are $\quadbinom{\nTransmitBar}{\deletionsShot{i}}$ ways to chose an $\deletionsShot{i}$-dimensional subspace of the transmitted space $\txSpaceShot{i} \in \Grassm{\Nbar, \nTransmitBar}$.
Hence, we can use the routine $\algoname{DrawDimensionPartition}(q, \nTransmitBar, \deletions, \shots)$ in Algorithm~\ref{alg:draw_dim_part} to draw the dimension partition $\deletionsVec$ for the deletions operator $\delOp{\nTransmit-\deletions}(\txSpaceVec)$.

Recall, that for the error space $\errSpaceVec$ we have the restriction that $\txSpaceVec \cap \errSpaceVec = \0$.
Hence, we have to sample $\errSpaceVec$ uniformly at random from $\Grassm{\NbarVec-\nTransmitBarVec, \insertionsVec}$( rather than from $\Grassm{\NbarVec, \insertionsVec}$) and thus have to call $\algoname{DrawDimensionPartition}(q, \NbarVec-\nTransmitBar, \insertions, \shots)$ in order to get the corresponding insertion partition $\insertionsVec$.

Once the partitions of insertions $\insertionsVec$ and deletions $\deletionsVec$ are obtained we can apply the single-shot operator channel from~\cite{koetter2008coding} to each shot with the corresponding number of $\insertionsShot{i}$ insertions and $\deletionsShot{i}$ deletions.
The whole procedure is provided in Algorithm~\ref{alg:multi_shot_op_channel}.

\begin{algorithm}[ht]
    \caption{
        \algoname{Random Instance of Multishot Operator Channel}
    }
  \label{alg:multi_shot_op_channel}

  \begin{algorithmic}[1]
  \Statex \textbf{Input:} 
   $\txSpaceVec \in \Grassm{\NbarVec, \nTransmitBarVec}$, number of insertions $\insertions$, number of deletions $\deletions$, \shots
  
  \Statex

  \Statex \textbf{Output:} 
    Received space $\rxSpaceVec \in \ProjspaceAny{\NbarVec}$   
  
  \Statex

  \State $\insertionsVec = (\insertionsShot{1}, \dots, \insertionsShot{\shots}) \assignDet \algoname{DrawDimensionPartition}(q, \Nbar-\nTransmitBar, \insertions, \shots)$

  \State $\deletionsVec = (\deletionsShot{1}, \dots, \deletionsShot{\shots}) \assignDet \algoname{DrawDimensionPartition}(q, \nTransmitBar, \deletions, \shots)$

  \For{i=1,\dots,\shots}
    \State $\errSpaceShot{i} \assignRand \Grassm{\Nbar, \insertionsShot{i}} \setminus \txSpaceShot{i}$ \Comment{Choose random $\errSpaceShot{i}$ s.t. $\txSpaceShot{i} \cap \errSpaceShot{i} = \{\0\}$}

    \State $\rxSpaceShot{i} \assignDet \delOp{\nTransmitBar - \deletionsShot{i}}(\txSpaceShot{i}) \oplus \errSpaceShot{i}$
  \EndFor

  \State \Return $\rxSpaceVec = (\rxSpaceShot{1}, \dots, \rxSpaceShot{\shots})$
\end{algorithmic}
\end{algorithm}

\end{appendices}

\end{document}